\documentclass[final]{siamonline190516}
\usepackage[utf8x]{inputenc}

\PrerenderUnicode{ç}
\PrerenderUnicode{’}

\usepackage{amsfonts}
\usepackage{graphicx}
\usepackage{tikz}
\usepackage{subcaption}

\usepackage{algorithm}
\usepackage{algorithmic}

\usetikzlibrary{calc}

\newcommand\DoubleLine[7][4pt]{%
	\path(#2)--(#3)coordinate[at start](h1)coordinate[at end](h2);
	\draw[#4]($(h1)!#1!90:(h2)$)-- node [auto=left] {#5} ($(h2)!#1!-90:(h1)$); 
	\draw[#6]($(h1)!#1!-90:(h2)$)-- node [auto=right] {#7} ($(h2)!#1!90:(h1)$);
}


\newcommand{\bp}{{\bowtie}}

\newcommand{\xset}{\mathbb{X}}

\newcommand{\yset}{\mathbb{Y}}
\newcommand{\wset}{\mathbb{W}}
\newcommand{\mset}{\mathbb{M}}

\newcommand{\type}[1]{\mathcal{T}_\mathcal{#1}}
\newcommand{\tset}{T}

\newsiamremark{exmp}{Example}
\newsiamremark{remark}{Remark}

\title{Commutative monoid formalism for weighted coupled cell networks and invariant synchrony patterns\thanks{Submitted to the editors at 21 of December of 2020.
		\funding{This work was supported in part by the FCT Project IMPROVE (POCI-01-0145-FEDER-031823), funded by the FEDER Funds through the COMPETE2020 - POCI, in part by the National Funds (PIDDAC) and in part by the National Science Foundation under Grant No.~ECCS-2029985. The work of P. Sequeira was supported by a Ph.D. Scholarship, grant SFRH/BD/119835/2016 from Fundação para a Ciência e a Tecnologia (FCT), Portugal (POCH program).}}}
\author{Pedro Sequeira\thanks{Faculdade de Engenharia, Universidade do Porto, Portugal (\email{pedro.sequeira@fe.up.pt}, \email{pedro.aguiar@fe.up.pt}).}
	\and A. Pedro Aguiar\footnotemark[2]
	\and João P. Hespanha\thanks{Department of Electrical and Computer Engineering, University of California, Santa Barbara, CA (\email{hespanha@ece.ucsb.edu}).}
}

\headers{Weighted coupled cell networks and invariant synchrony patterns}{Pedro Sequeira, A. Pedro Aguiar, and João P. Hespanha}

\begin{document}
\maketitle

\begin{abstract}
This paper presents a framework based on matrices of monoids for the study of coupled cell networks. We formally prove within the proposed framework, that the set of results about invariant synchrony patterns for unweighted networks also holds for the weighted case.
Moreover, the approach described allows us to reason about any multi edge and multi edge-type network as if it was single edge and single edge-type.
Several examples illustrate the concepts described. Additionally, an improvement of the coarsest invariant refinement (CIR) algorithm to find balanced partitions is presented that exhibits a worst-case complexity of \mbox{$ \mathbf{O}(\vert\mathcal{C}\vert^3) $}, where $\mathcal{C}$ denotes the set of cells.
\end{abstract}
\begin{keywords}
	Coupled cell networks, Synchrony, Balanced partitions
\end{keywords}
\begin{AMS}
	34A34, 34C45
\end{AMS}

\section{Introduction}
\textbf{Networks} are used to describe systems with multiple components called \textbf{nodes} or \textbf{cells}. These cells can be pairwise connected by \textbf{edges}, describing the effect that one cell has on the other. These edges can be either directed or undirected and can also have weights in order to parameterize their interaction.\\
Such networks are ubiquitous, be it in the natural world or in engineering. From an electronic circuit or the electric grid, to the neural networks in our brain, food webs or the spread of a virus in a pandemic, this is a fundamental structure to study.\\
A big step in understanding these structures was the realization that real-world networks show properties that are pervasive across very different domains of application, such as being `small-world'~\cite{watts1998collective} and having a `scale-free' degree distribution~\cite{barabasi1999emergence}, and the existence of `motifs'~\cite{milo2002network}. All these properties are related to the structure of the network and would not appear in a random one. Reviews on these types of statistical properties and their use in real-world applications are presented in~\cite{newman2003structure}, \cite{boccaletti2006complex}.\\
There are networks in which synchrony between the different cells is of the utmost importance \cite{strogatz1993coupled}. Some examples are the cardiac pacemaker cells responsible for our heartbeat, the flashing of a swarm of fireflies, the consensus problem in control theory and the different gaits in animal locomotion generated by `central pattern generators' (\textbf{CPG}). There are, however, situations in which too much synchronism is actually undesirable, such as in epileptic seizures in the brain.\\
The model most commonly used to describe synchronism is the Kuramoto model, which consists on a large set $ (N\to\infty) $ of simple oscillators that are weakly coupled in an all-to-all fashion. Some reviews on the Kuramoto and its variants can be found in~\cite{arenas2008synchronization}, \cite{dorfler2014synchronization}, \cite{rodrigues2016kuramoto}. This simplistic network structure is, however, in direct opposition to our interest in understanding how the structure of network constrains the function executed by a networked system. The approach in this paper allows us to analyze patterns of synchrony that result exclusively from the topology of the network, regardless of the details of the specific dynamics. We consider both continuous and discrete-time dynamical systems.\\
The theory of coupled cell networks (\textbf{CCN}) was first mathematically formalized in~\cite{stewart2003symmetry}, \cite{golubitsky2005patterns} and \cite{golubitsky2006nonlinear}. In that work, the concept of \textbf{admissibility} was defined by the minimal properties that a function must obey to model a network. This formalism is based on groupoids of bijections between in-neighborhoods of cells. This line of work also introduced the notion of \textbf{quotient network}, which is a smaller network that describes the behavior of the original network when the state of a system is in an \textbf{invariant synchrony pattern}. This means that some cells are sharing the same state and will continue doing so. Some issues arose from the fact that this formalism assumed only single edges between each ordered pair of cells. However, a quotient network might not satisfy this assumption even if the original network of interest does. This issue was solved by the `multiarrow formalism' developed in~\cite{golubitsky2005patterns}, which allows the existence of multiple edges between the same pair of cells and self-loops.
This formalism has been used with simple integer weights, in which the weight is used to represent a number of identical `unitary' edges in parallel. This particular case is the simplest weighted case scenario and the previous formalism happens to be able to cover it. In this paper we properly extend the theory to be able to deal with the general weighted case. This theory is also more general in the sense that it does not require the so called `consistency condition', which is enforced by changing the original network into another one that contains the exact same synchrony patterns. However, this change is not invertible and one loses dynamical information when doing so. We find that such an artificial condition is not necessary.\\
In this paper, a formalism based on matrices of commutative monoids is introduced in \cref{sec:new_formalism}. This formalism allows us to extend the previous known results about CCN's to networks with weighted connections, with arbitrary amount of edges and edge types. We develop the concept of \textbf{oracle functions}, which allows us to evaluate the dynamics of different networks that are composed of the same cell types in a very systematic and self-consistent manner.\\
In \cref{sec:invariant_synchrony} we use the new and more general definitions of admissibility to extend the previous known results about balanced partitions and invariant synchrony spaces.\\
In \cref{sec:vector_output_states} we focus on the particular case where the output set of the admissible functions is a vector space. Furthermore, we provide results in terms of local robustness that apply for this type of spaces.\\
\Cref{sec:quotients} verifies that the connection between quotient networks and synchrony spaces given by balanced partitions work as expected in the general framework.\\
In \cref{sec:lattice} we prove that the lattice structure of balanced partitions is the same as usual. Here, the join operation ($ \vee $) is proved in a novel, algebraic way, instead of the usual duality argument between balanced partitions and invariant subspaces.\\
In \cref{sec:CIR_algorithm_improvement} we propose a novel CIR algorithm for arbitrary weights which has a worst-case time complexity of \mbox{$ \mathbf{O}(\vert\mathcal{C}\vert^3) $} instead of $ \mathbf{O}(\left(\vert\mathcal{E}\vert + \vert\mathcal{C}\vert\right)^4) $ as in~\cite{aldis2008polynomial}, where $\mathcal{C}$ and $\mathcal{E}$ denote the sets of cells and edges, respectively.

\section{Weighted multi-edge CCN's} \label{sec:new_formalism}
In this section we describe a formalism based on a matrix of monoids to represent networks and show that the previous known results about CCN's can be extended to networks with multi edge, multi edge-type, weighted connections.

\subsection{CCN formalism} \label{subsec:CCN_formalism}
We start by introducing the definition of a cell coupled network according to the general weight formalism of this paper. 
\begin{definition}
	A network $ \mathcal{G} $ consists on a set of cells $ \mathcal{C}_{\mathcal{G}} $, where each cell has a type, given by an index set $ \tset=\{1, \ldots, \vert\tset\vert\} $ according to $ \type{G}\colon\ \mathcal{C}_{\mathcal{G}}\to\tset $ and has an \mbox{$ \vert\mathcal{C}_{\mathcal{G}}\vert\times\vert\mathcal{C}_{\mathcal{G}}\vert $} \mbox{in-adjacency} matrix $ M_{\mathcal{G}} $. The entries of $ M_{\mathcal{G}} $ are elements of a family of commutative monoids $ \{\mathcal{M}_{ij}\}_{i,j\in\tset} $ such that $ \left[M_{\mathcal{G}}\right]_{cd} = m_{cd}\in\mathcal{M}_{ij} $, for any cells $ c,d \in \mathcal{C}_{\mathcal{G}} $ with types $ i =\type{G}(c) $, $ j =\type{G}(d) $. 
	\hfill$ \square $
\end{definition}
We will write the monoid operation and the identity element of $ \mathcal{M}_{ij} $ as $ \|_{ij} $ and $ 0_{ij} $ respectively.\\
The entries of $ m_{cd} $ are able to encode the complete connectivity (multi edge, multi edge-type) of the directed edges from $ d $ to $ c $. This is thanks to the algebraic structure of the commutative monoid which we illustrate in the following section.
\begin{remark}
	The subscripts $ _\mathcal{G} $ are omitted when the network of interest is clear from context. 
	\hfill$ \square $
\end{remark}
\subsection{Commutative monoids}
The commutative monoid is the simplest algebraic structure that can be used to describe arbitrary finite parallels of edges. In this paper, we denote the monoid `sum' operation by $ \| $, due to the context in which it is used, with the meaning of `adding in parallel'. Nevertheless, it is convenient to think of this as a sum. Likewise, the notation $ \sum $ is used to describe parallels of multiple edges. In this context, the element $ 0_{\mathcal{M}}\in \mathcal{M} $ should be interpreted as `no edge'.\\
The commutative monoid is associative and commutative. This reflects the fact that, for any given set of edges in parallel, it is irrelevant the order in which we enumerate the individual edges. Those are exactly the properties that provide invariance to this symmetry.\\
Note that we do not impose the existence of inverse elements since it is not guaranteed that we can cancel the effect of a set of edges by adding more edges to it in parallel.\\
We now show how a commutative monoid can be explicitly constructed using what is called a \textbf{presentation}.\\
The first step is to create a \textbf{free commutative monoid}. Given a set $ \wset $, that describes elemental edges, the free commutative monoid on $ \wset $ is $ \mathcal{W} = (\wset^{*},\|_f) $, where $ \wset^{*} $ is the set of all finite multisets of the elements of $ \wset $, which represents all possible finite parallels of edges. Here, $ \|_f $ encodes the multiset sum (free sum) and the element $ 0_{\mathcal{W}} $ is the empty multiset. Note that the set $ \wset $ itself does not need to be finite, or even countable.\\
At this point, the structure is certainly a commutative monoid. However, it is not yet capable of describing an arbitrary one. In particular, it is blind to the possibility of different sets of edges in parallel being equivalent (with regard to the application at hand). For instance, if we are working with the parallel of resistors, we would like to be able to encode into the structure the fact that $ 30\|15 = 20 \| 20 $, from basic circuit theory.\\
In order to generalize this, the second step of the procedure is to quotient the free commutative monoid $ \mathcal{W} $ over a \textbf{congruence relation} $ \mathcal{R} $. A congruence relation on an algebraic structure is an equivalence relation that is compatible with that structure. In our case, this means that we require $ \mathcal{R} $ to be such, that the quotient $ \mathcal{M} = \mathcal{W} / \mathcal{R}$ is a commutative monoid. Here, we think of the equivalence relation $ \mathcal{R} $ as a function in $ \wset^{*} \to \mset $ such that its level sets are the corresponding equivalence classes.\\
In order to satisfy the compatibility condition, we require that if $ \mathcal{R}(a_1) =  \mathcal{R}(a_2) = A $ and $ \mathcal{R}(b_1) = \mathcal{R}(b_2) = B $ then $ \mathcal{R}(a_1 \|_f b_1) = \mathcal{R}(a_2 \|_f b_2) = A\| B $, for any such $ a_1,a_2,b_1,b_2 \in \wset^{*} $. That is, for any equivalence classes, we can choose any of its elements as a representative, and when operating them $ (\|_f) $ the result should be exactly the same, which defines a consistent operation $ \| $ on the equivalence classes.\\
Note that any commutative monoid has a presentation. Given a commutative monoid $ \mathcal{M} =  (\mset,\|) $, we can create the free monoid $ \mathcal{W} = (\mset^{*},\|_f) $. To this end, define the congruence relation $ \mathcal{R} \colon \mset^{*} \to \mset $ such that for any element $ w = w_1 \|_f \ldots \|_f w_k $, with $ w \in \mset^{*} $ and $ w_i\in \mset $, $ i\in \{1,\ldots,k\}$, we have $ \mathcal{R}(w) = w_1\| \ldots \| w_k $. Then, we have that $ \mathcal{M} = \mathcal{W} / \mathcal{R}$.\\
We can also construct our commutative monoid of interest $ \mathcal{M} $ using the set that describes the elemental edges $ \wset $ and defining the congruence relation $ \mathcal{R} $ implicitly using a set of equations $ E $. This can be written as $ \mathcal{M} = \left\langle\wset\vert E\right\rangle $. In the particular case of a free monoid, we write $ \mathcal{M} = \left\langle\wset\vert\right\rangle $. We illustrate these concepts with the following examples.
\begin{exmp}\label{exmp:resistor}
		Consider the commutative monoid generated by finite parallels of resistors. In this case, one has $ \mathcal{M} = \left\langle\wset\vert E\right\rangle $, with 
		\begin{align*}
		\wset = \mathbb{R}^{+}_{0}\cup\{\infty\}
		\end{align*}
		and
		\begin{align*}
		E = 
		\begin{cases}
		w_1 \| w_2 = w_1  w_2/ (w_1 + w_2) & \forall w_1,w_2 \in \wset\colon w_1+w_2 \neq 0\\
		w_1 \| \infty = w_1 & \forall w_1 \in \wset\\
		0 \| 0 = 0
		\end{cases}
		\end{align*}		
	This allows us to verify that indeed $ 30\|15 = 20 \| 20 $. In particular, those parallels are equivalent to an elemental edge of value $ 10 $. For the case of resistors, any set of parallel edges can be simplified into a single edge in $ \wset $. This is not true in general for an arbitrary commutative monoid.\\
	The identity of this monoid is $ 0_{\mathcal{M}} = \infty $. Note that there is no element in $ \mathcal{M} $, except for the identity $ 0_{\mathcal{M}} $ that has an inverse. That is, if there is a finite resistor $ w $ between two nodes, there is no resistor $ w^{-1} $ that we can add in parallel that will cancel it, that is $ w\|w^{-1} = 0_{\mathcal{M}} = \infty $.
	\hfill$ \square $
\end{exmp}
\begin{remark}
	Note that this formalism is extremely general. It allows us to parameterize individual edges with anything we might want, such as complex numbers, vectors, matrices, functions or any data structure as abstract as necessary.
	\hfill$ \square $
\end{remark}
In \cref{exmp:resistor} it can be seen that the zero-valued resistor, which is \textbf{not} the `zero' of the monoid $ (0_\mathcal{M}) $, is an \textbf{annihilator}. That is, an element $ a\in\mathcal{M}  $ such that $ w\|a = a $ for all $ w\in\mathcal{M} $. Not every monoid has an annihilator, but if it exists, it is unique.
\begin{exmp}\label{exmp:primes}
	Consider the commutative monoid $\mathcal{M} = (\mathbb{N},\cdot) $, that is, the integers under the usual product, which has $ 0_{\mathcal{M}} = 1 $. Define now the \textbf{free} monoid $ \mathcal{N} =( \left(\{1\}\cup \mathbb{P}\right)^{*}, \|_f)$, where $ \mathbb{P} $ is the set of prime numbers and $ 0_\mathcal{N} = 1 $. Then, the fundamental theorem of arithmetic says that these monoids are two different ways of describing the exact same object. They are called \textbf{isomorphic}. This means that there is a bijective mapping $ f\colon \left(\{1\}\cup \mathbb{P}\right)^{*}\to \mathbb{N} $ that preserves the monoid structure (isomorphism). In particular, $ f(p_1 \|_f p_2) = f(p_1) \cdot f(p_2) $ for all $ p_1, p_2 \in \left(\{1\}\cup \mathbb{P}\right) $ and $ f(0_{\mathcal{N}}) = 0_{\mathcal{M}} $. We can find such an $ f $ by defining $ f\left(\sum_{i=1}^{k} p_i\right) = \prod_{i=1}^{k} p_i $, in which $ \sum $ is with regard to the multiset sum $ \|_f $. This satisfies $ f(1) = 1 $ and the bijectivity  comes from the uniqueness of prime factorization.
	\hfill$ \square $
\end{exmp}
\begin{remark}
	Note that for the monoid $ \mathcal{N} $ in \cref{exmp:primes}, in opposition to the resistor case (\cref{exmp:resistor}), two elemental edges in parallel are almost never equivalent to another elemental edge. In fact, the only exception is the parallel with identity elements, for which this is inevitable.
	\hfill$ \square $
\end{remark}
\begin{exmp}
	The structure $ \mathcal{M} = (\mathbb{R}\to\mathbb{R},\ast) $, that is, the set of (generalized) functions together with the convolution operation forms a commutative monoid. Its identity is $ 0_{\mathcal{M}} = \delta(\cdot) $, the dirac delta distribution.
	\hfill$ \square $
\end{exmp}
\begin{exmp}\label{exmp:edge_sum_multitype}
	Consider a network with two types of elemental edges, each with its own commutative monoid structure. For instance, $ \mathcal{M}_1 = (\mathbb{R},+) $ and $ \mathcal{M}_2 = ( \mathbb{R}\rightarrow\left[-1,1\right], \cdot)$.\\
	We can merge them into a single commutative monoid by doing a direct product $ \mathcal{M} = \mathcal{M}_1 \times \mathcal{M}_2 $.\\
	An element $ m\in\mathcal{M} $ is an ordered pair $ [m_1, m_2] $ such that $ m_1 \in \mathbb{R} $ and $ m_2 \in \mathbb{R}\rightarrow\left[-1,1\right] $.\\
	The operation $ \| $ of the new monoid is then given by
	\begin{align*}
	w\|v = \left[w_1, w_2\right] \| \left[v_1, v_2\right] = \left[w_1 + v_1, w_2\cdot v_2\right]
	\end{align*}
	That is, the concatenation of applying the respective monoid operations to each component. The identity element of the new monoid is $ 0_{\mathcal{M}} = \left[ 0_{\mathcal{M}_1} ,0_{\mathcal{M}_2}\right] = \left[0,1\right]$.
	\hfill$ \square $
\end{exmp}
This approach of constructing a commutative monoid $ \mathcal{M} $ by merging smaller monoids that represent different edge-types, allows us to use a single monoid structure to fully describe the possible multi edge, multi edge-type connectivity between two cells.\\
Note that for each particular pair of cell types $ i,j \in T $, we could have different monoid structures, which we denote as $ \mathcal{M}_{ij} $, with respect to directed edges from cells of type $ j $ into cells of type $ i $.\\
The network connectivity of the network can then be described by a single matrix whose entries are elements of the appropriate monoid.
\subsection{Partition representations}
In this paper we often refer to each class of a given partition on the set of cells $ \mathcal{C} $ by the term \textbf{color}.\\
Consider a given partition $ \mathcal{A} $ that divides a set of cells $ \mathcal{C} $ into $ r $ colors. We can associate with each color an index from $ \{1,\ldots,r\} $. Then, if a cell $ c\in\mathcal{C} $ has the color associated with index $ k $ we say that $ \mathcal{A}(c) = k $. This association allows us to represent the partition by saying that two cells $ c,d\in\mathcal{C} $ have the same color if and only if $ \mathcal{A}(c) = \mathcal{A}(d) $. We can think of this representation as a column vector or a function.\\
Another very useful representation is to define a partition matrix $ P $ of size $ \vert\mathcal{C}\vert\times r $ such that $ \left[P\right]_{ck} = 1 $ if $ \mathcal{A}(c) = k $ and $ \left[P\right]_{ck} = 0 $ otherwise.\\
Note that the same partition can be indexed by $ \{1,\ldots,r\} $ in different ways which will correspond to multiple partition matrices that are related to each other by a reordering of their columns.\\
The number of colors in a partition is called its \textbf{rank}, which in fact, corresponds to the rank of any of its matrix representations. That is, $ r = rank(\mathcal{A}) = rank(P) $.\\
Given two partitions $ \mathcal{A} $, $ \mathcal{B} $ on a set of cells $ \mathcal{C} $, we say that $ \mathcal{A} $ is \textbf{finer} than $ \mathcal{B} $ if for all $ c,d\in\mathcal{C} $
\begin{equation}
\mathcal{A}(c) = \mathcal{A}(d)
\implies
\mathcal{B}(c) = \mathcal{B}(d)
\label{eq:refinement_def}
\end{equation}
which is denoted as $ \mathcal{A} \leq \mathcal{B}$. Conversely, $ \mathcal{B} $ is said to be \textbf{coarser} than $ \mathcal{A} $. Roughly speaking, \cref{eq:refinement_def} means that if any pair of cells in partition $ \mathcal{A} $ have the same color, then these two cells also have the same color in $ \mathcal{B} $. In other words, if we merge some of the colors of $ \mathcal{A} $ together, we can obtain $ \mathcal{B} $. Conversely, we can obtain $ \mathcal{A} $ by starting with $ \mathcal{B} $ and splitting some of its colors into smaller ones.\\
The \textbf{trivial} partition, in which each individual cell has its own color is the \textbf{finest}, its rank is $ \vert\mathcal{C}\vert  $ and can be represented by any $ \vert\mathcal{C}\vert\times\vert\mathcal{C}\vert $ permutation matrix, one of which is the identity.\\
We will often use the partition and its matrix interchangeably, that is, $ P_{\mathcal{A}} \leq \mathcal{B} $ or $ P_{\mathcal{A}} \leq P_{\mathcal{B}} $ to mean $ \mathcal{A} \leq \mathcal{B}$.\\
If $ \mathcal{A} \leq \mathcal{B}$, then there is a partition $ \mathcal{B}/\mathcal{A} $ on the set of colors of $ \mathcal{A} $ that describes how to merge them in order to achieve partition $ \mathcal{B} $. That is, $ (\mathcal{B}/\mathcal{A}\circ\mathcal{A})(c) = \mathcal{B}(c)$. Equivalently, for partition matrices such that $ P_{\mathcal{A}} \leq P_{\mathcal{B}} $, there exists a partition matrix $ P_{\mathcal{A}\mathcal{B}} $, representing $ \mathcal{B}/\mathcal{A} $ such that $ P_{\mathcal{B}} 
=
P_{\mathcal{A}}  P_{\mathcal{A}\mathcal{B}} $. The next example illustrates these concepts.

\begin{exmp}\label{exmp:pariti_vec_func}
	Consider a set of cells $ \mathcal{C} = \{a,b,c,d\} $ and partitions $ \mathcal{A} = \{\{a,b\},\{c\},\{d\}\} $, $ \mathcal{B} = \{\{a,b,c\},\{d\}\} $. We have that \mbox{$ \mathcal{A}\leq\mathcal{B} $}. Moreover, if one defines the characteristic matrices $ P_{\mathcal{A}},P_{\mathcal{B}} $ as
	\begin{equation*}
	P_{\mathcal{A}} =
	\begin{bmatrix}
	1 & 0 & 0\\
	1 & 0 & 0\\
	0 & 1 & 0\\
	0 & 0 & 1 
	\end{bmatrix}
	\quad
	P_{\mathcal{B}} =
	\begin{bmatrix}
	0 & 1\\
	0 & 1\\
	0 & 1\\
	1 & 0
	\end{bmatrix}
	\end{equation*}
	where the column number of each matrix corresponds to the index of the color that are arbitrarily assigned, then, there is a matrix $ P_{\mathcal{A}\mathcal{B}} $ 
	\begin{equation*}
	P_{\mathcal{A}\mathcal{B}}=
	\begin{bmatrix}
	0 & 1\\
	0 & 1\\
	1 & 0
	\end{bmatrix}
	\end{equation*}	
	such that $ 
	P_{\mathcal{B}} = P_{\mathcal{A}} P_{\mathcal{A}\mathcal{B}}$. Note that in $ P_{\mathcal{A}\mathcal{B}} $, the rows corresponds to the colors in $ \mathcal{A} $, and the columns to the colors in $ \mathcal{B} $, with the $ 1 $'s describing the inclusion relationship between the different colorings.\\ These matrices correspond to a particular indexing such that the partitions can also be represented as the column vectors/functions
	\begin{equation*}
	\mathcal{A}=
	\begin{bmatrix}
	1 \\ 1 \\ 2 \\ 3
	\end{bmatrix}
	\quad
	\mathcal{B}/\mathcal{A} =
	\begin{bmatrix}
	2 \\ 2 \\ 1
	\end{bmatrix}
	\quad
	\mathcal{B}=
	\begin{bmatrix}
	2 \\ 2 \\ 2 \\ 1
	\end{bmatrix}	
	\end{equation*}
	Note for example that \mbox{$ \mathcal{A}(4) = 3 $} and \mbox{$ \mathcal{B}/\mathcal{A}(3) = 1 $}, that is, \mbox{$ (\mathcal{B}/\mathcal{A}\circ\mathcal{A})(4) = 1$} which is equal to $ \mathcal{B}(4) $.
	\hfill$ \square $
\end{exmp}
Note that due to the particular structure of the characteristic matrices it is possible to multiply them together with matrices whose elements are not necessarily in the usual real/complex fields. For a characteristic matrix $ P $ and a given matrix $ M $ of appropriate dimensions, the product $ PM $ is always well-defined as an `expansion' of $ M $ where its rows get replicated. If the product is $ MP $, then it consists of sums of columns of $ M $ which requires a `sum' operation to be defined on its elements (e.g., operations $ \|_{ij} $ if $ M $ represents a network).

\subsection{Admissibility}\label{sec:Gadmissibility}
In this section, we describe the properties that a function $ f\colon\xset\to\yset $ has to respect in order to describe some \textbf{first-order property} of a network. Such a property, when evaluated at a particular cell $ c\in\mathcal{C} $, is given by some $ f_c\colon\xset\to\yset_{i}$, with $ i = \type{}(c) $ that is dependent only on the states of cells in the set $\{ c\cup\mathcal{N}^{-}(c)\}$, where $ \mathcal{N}^{-}(c) $ is the \textbf{in-neighborhood} of $ c $, that is, \mbox{$ \mathcal{N}^{-}(c) = \{d\in\mathcal{C}\colon m_{cd} \neq 0_{ij} , i =\type{}(c) , j =\type{}(d)\} $}.\\
For this purpose, we impose on such functions two minimal assumptions (see \cref{defi:oracle}) that makes them behave as would be expected.\\
Note that these functions can be used to define measurements on a network, e.g., $ \mathbf{y} = f(\mathbf{x}) $, as well as to describe the evolution of dynamical systems, e.g., $ \dot{\mathbf{x}} = f(\mathbf{x}) $ or $ \mathbf{x}^{+} = f(\mathbf{x}) $.\\
This does not mean that everything on a network has to (or can) be defined by such a function. For instance, the second derivative or the two-step evolution of the mentioned dynamical systems will not be of this form. Those functions will be `second-order' in the sense that they are dependent on they first and second in-neighborhoods (neighbor of neighbor). They are, however, fully defined from the original first-order functions.
\\
Consider the simple network of \cref{fig:edge_merginga}, (which could be part of a larger network) consisting on cell $ 3 $ and its \mbox{in-neighborhood}. We have cell types $ \tset =\{1,2\} $ which represent `circle' and `square' cells, respectively. We use $ \| $ to mean $ \|_{12} $ since it is the only monoid operation in this example. Since cells $ 1 $ and $ 2 $ are of the same cell type (square) ($ \type{}(1) = \type{}(2) = 2 $) and also, are currently in the same state ($ x_1=x_2 $), the total input received by cell $ 3 $ at that instant, is the same as if both edges originated from a single `square' cell with that state.
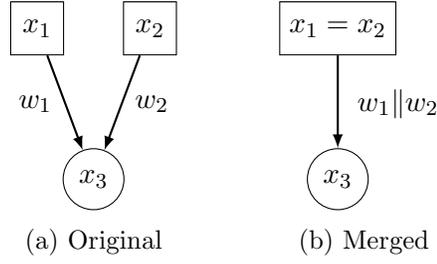
\begin{figure}[h]
	\centering
	\begin{subfigure}[t]{0.23\textwidth}
		\centering
		\begin{tikzpicture}[
node1/.style = {circle,minimum size=23,draw},
node2/.style = {circle,minimum size=23,draw,fill=white!75!black},
node3/.style = {circle,minimum size=23,draw,fill=white!50!black},
noderect/.style = {rectangle,minimum size=20,draw},
edge1/.style = {>=latex,thick},
edgedash/.style = {>=latex,thick,dashed},
edge2/.style = {>=latex,thick,blue},
edge3/.style = {>=latex,thick,red}
]
\node[noderect] at (-0.75,2)(n1){$ x_1 $};
\node[noderect] at (+0.75,2)(n2){$ x_2 $};
\node[node1] at (0,0)(n3){$ x_3 $};

\draw [->,edge1](n1) -- (n3);
\draw [->,edge1](n2) -- (n3);

\node (w13) at ($(n1)!0.5!(n3) + (-0.4,-0.0)$) {$ w_1 $};
\node (w23) at ($(n2)!0.5!(n3) + (+0.4,-0.0)$) {$ w_2 $};


\end{tikzpicture} 
		\caption{Original}
		\label{fig:edge_merginga}
	\end{subfigure}
	\begin{subfigure}[t]{0.23\textwidth}
		\centering
		\begin{tikzpicture}[
node1/.style = {circle,minimum size=23,draw},
node2/.style = {circle,minimum size=23,draw,fill=white!75!black},
node3/.style = {circle,minimum size=23,draw,fill=white!50!black},
noderect/.style = {rectangle,minimum size=20,draw},
edge1/.style = {>=latex,thick},
edgedash/.style = {>=latex,thick,dashed},
edge2/.style = {>=latex,thick,blue},
edge3/.style = {>=latex,thick,red}
]

\node[noderect] at (0,2)(n3){$ x_1=x_2 $};
\node[node1] at (0,0)(n12){$ x_3 $};

\draw [->,edge1](n3) -- (n12);
\node (w13) at ($(n3)!0.5!(n12) + (+0.8,-0.0)$) {$ w_1 \| w_2 $};


\end{tikzpicture} 
		\caption{Merged}
		\label{fig:edge_mergingb}
	\end{subfigure}
	\centering
	\caption{Edge merging}
	\label{fig:edge_merging}
\end{figure}
Furthermore, if the weights cancel \mbox{$ (w_1 \| w_2 = 0_{12}) $}, then cell $ 3 $ should act as if cells $ 1 $, $ 2 $ are not there whenever $ x_1 = x_2 $.
Moreover, the input received by a cell is independent of how we draw the network, that is, we do not expect $ f_3 $ to be different if cell $ 2 $ was at the left of cell $ 1 $. That is, $ f_3 $ should obey
\begin{align*}
f_3\left(x_3; \begin{bmatrix}w_1 & w_2\end{bmatrix},
\begin{bmatrix}x_1 \\ x_2\end{bmatrix}\right) 
=
f_3\left(x_3; \begin{bmatrix}w_2 & w_1\end{bmatrix},
\begin{bmatrix}x_2 \\ x_1\end{bmatrix}\right) 
\end{align*}
Extending these principles, if two cells $ c,d $ are in the same internal state $ x_c = x_d = x $ and if the edge-compression argument can transform their \mbox{in-neighborhoods} into the same network, then $ f_c = f_d$. This is illustrated in \cref{fig:input_equiv}, with the monoid operation $ \| $ being the usual addition.
\begin{figure}[h]
	\centering
	\begin{subfigure}[t]{0.30\textwidth}
		\centering
		\begin{tikzpicture}[
node1/.style = {circle,minimum size=23,draw},
node2/.style = {circle,minimum size=23,draw,fill=white!75!black},
node3/.style = {circle,minimum size=23,draw,fill=white!50!black},
noderect/.style = {rectangle,minimum size=20,draw},
nodetri/.style = {regular polygon, regular polygon sides=3 ,draw},
edge1/.style = {>=latex,thick},
edgedash/.style = {>=latex,thick,dashed},
edge2/.style = {>=latex,thick,blue},
edge3/.style = {>=latex,thick,red}
]
\node[noderect] at (-1.0,2)(n1){$ x_1 $};
\node[noderect] at (+0.0,2)(n2){$ x_1 $};
\node[node1] at (+1.0,2)(n3){$ x_2 $};
\node[node1] at (0,0)(nc){$ x $};

\draw [->,edge1](n1) -- (nc);
\draw [->,edge1](n2) -- (nc);
\draw [->,edge1](n3) -- (nc);

\node (w1c) at ($(n1)!0.5!(nc) + (-0.4,-0.0)$) {$ 1 $};
\node (w2c) at ($(n2)!0.5!(nc) + (+0.2,-0.0)$) {$ 1 $};
\node (w3c) at ($(n3)!0.5!(nc) + (+0.4,-0.0)$) {$ 3 $};


\end{tikzpicture} 
		\caption{First input set}
		\label{fig:input_equiv1}
	\end{subfigure}
	\begin{subfigure}[t]{0.30\textwidth}
		\centering
		\begin{tikzpicture}[
node1/.style = {circle,minimum size=23,draw},
node2/.style = {circle,minimum size=23,draw,fill=white!75!black},
node3/.style = {circle,minimum size=23,draw,fill=white!50!black},
noderect/.style = {rectangle,minimum size=20,draw},
nodetri/.style = {regular polygon, regular polygon sides=3 ,draw},
edge1/.style = {>=latex,thick},
edgedash/.style = {>=latex,thick,dashed},
edge2/.style = {>=latex,thick,blue},
edge3/.style = {>=latex,thick,red}
]
\node[node1] at (-1.0,2)(n1){$ x_2 $};
\node[noderect] at (+0.0,2)(n2){$ x_1 $};
\node[node1] at (+1.0,2)(n3){$ x_2 $};
\node[node1] at (0,0)(nc){$ x $};

\draw [->,edge1](n1) -- (nc);
\draw [->,edge1](n2) -- (nc);
\draw [->,edge1](n3) -- (nc);

\node (w1c) at ($(n1)!0.5!(nc) + (-0.4,-0.0)$) {$ 1 $};
\node (w2c) at ($(n2)!0.5!(nc) + (+0.2,-0.0)$) {$ 2 $};
\node (w3c) at ($(n3)!0.5!(nc) + (+0.4,-0.0)$) {$ 2 $};


\end{tikzpicture} 
		\caption{Second input set}
		\label{fig:input_equiv2}
	\end{subfigure}
	\begin{subfigure}[t]{0.30\textwidth}
		\centering
		\begin{tikzpicture}[
node1/.style = {circle,minimum size=23,draw},
node2/.style = {circle,minimum size=23,draw,fill=white!75!black},
node3/.style = {circle,minimum size=23,draw,fill=white!50!black},
noderect/.style = {rectangle,minimum size=20,draw},
nodetri/.style = {regular polygon, regular polygon sides=3 ,draw},
edge1/.style = {>=latex,thick},
edgedash/.style = {>=latex,thick,dashed},
edge2/.style = {>=latex,thick,blue},
edge3/.style = {>=latex,thick,red}
]
\node[node1] at (-0.75,2)(n1){$ x_2 $};
\node[noderect] at (+0.75,2)(n3){$ x_1 $};
\node[node1] at (0,0)(nc){$ x $};

\draw [->,edge1](n1) -- (nc);
\draw [->,edge1](n3) -- (nc);

\node (w1c) at ($(n1)!0.5!(nc) + (-0.4,-0.0)$) {$ 3 $};
\node (w3c) at ($(n3)!0.5!(nc) + (+0.4,-0.0)$) {$ 2 $};


\end{tikzpicture} 
		\caption{Common edge compression to both input sets}
		\label{fig:input_equiv_compressed}
	\end{subfigure}
	\centering
	\caption{Input equivalent networks}
	\label{fig:input_equiv}
\end{figure}
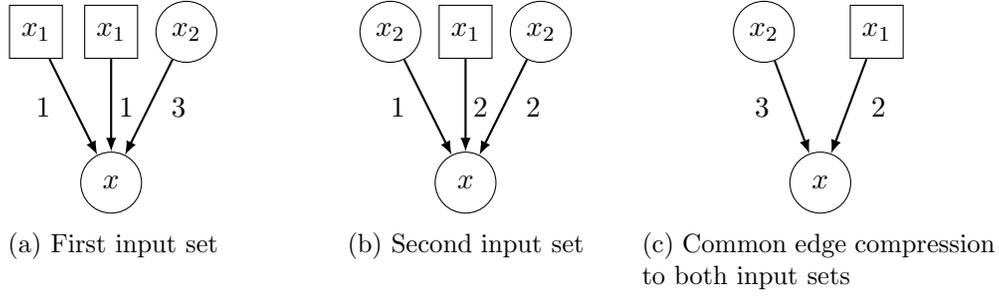
If we write the state and weight vectors from left to right, we get
\begin{align*}
\begin{split}
\mathbf{x}_\mathcal{A} &= \begin{bmatrix} x_1 & x_1 & x_2\end{bmatrix}^{\top}\\
\mathbf{x}_\mathcal{B} &= \begin{bmatrix} x_2 & x_1 & x_2\end{bmatrix}^{\top}\\
\overline{\mathbf{x}} &= \begin{bmatrix} x_2 & x_1\end{bmatrix}^{\top}
\end{split}
\begin{split}
\mathbf{w}_\mathcal{A} &= \begin{bmatrix} 1 & 1 & 3\end{bmatrix}\\
\mathbf{w}_\mathcal{B} &= \begin{bmatrix} 1 & 2 & 2\end{bmatrix}\\		
\overline{\mathbf{w}} &= \begin{bmatrix} 3 & 2\end{bmatrix}
\end{split}
\end{align*} 
for each of the \cref{fig:input_equiv1,fig:input_equiv2,fig:input_equiv_compressed}, respectively. The process of \mbox{edge-merging} the in-neighborhoods from \cref{fig:input_equiv1,fig:input_equiv2} into \cref{fig:input_equiv_compressed} can be described through the partition matrices $ P_{\mathcal{A}},P_{\mathcal{B}} $
\begin{align*}
P_{\mathcal{A}} =
\begin{bmatrix}
0 & 1 \\ 0 & 1 \\ 1 & 0 
\end{bmatrix}
\quad
P_{\mathcal{B}} =
\begin{bmatrix}
1 & 0 \\ 0 & 1 \\ 1 & 0 
\end{bmatrix}
\end{align*}
such that
\begin{align*}
\mathbf{x}_\mathcal{A} = P_{\mathcal{A}}\overline{\mathbf{x}},
\quad
\mathbf{x}_\mathcal{B} = P_{\mathcal{B}}\overline{\mathbf{x}}\\
\mathbf{w}_\mathcal{A}P_{\mathcal{A}} = \overline{\mathbf{w}} = \mathbf{w}_\mathcal{B}P_{\mathcal{B}}
\end{align*}
where we considered that the edge merging can only be done with edges that come from cells of the same type in the same state. That is, we have also assumed implicitly that
\begin{align*}
\type{A} = P_{\mathcal{A}}\overline{\type{}},
\quad
\type{B} = P_{\mathcal{B}}\overline{\type{}}
\end{align*}
where $ \type{A} $, $ \type{B} $ and $ \overline{\type{}} $ describe the typings of the corresponding cells, that is,
\begin{align*}
\type{A} = \begin{bmatrix} 2 & 2 & 1\end{bmatrix}^{\top} \qquad
\type{B} = \begin{bmatrix} 1 & 2 & 1\end{bmatrix}^{\top} \qquad
\overline{\type{}} = \begin{bmatrix} 1 & 2\end{bmatrix}^{\top}
\end{align*}
This type checking can be omitted if declare that, by definition, the states of two cells can only be compared in the first place if they are of the same type. That is, if $ P $ satisfies this assumption in $ \mathbf{x} = P\overline{\mathbf{x}} $, there will be no danger of trying to `sum' elements of different monoids in $ \mathbf{w} P = \overline{\mathbf{w}} $.\\
This edge-merging example motivates the following definitions.
\begin{definition} \label{defi:oracle}
	Consider the set of cell types $ \tset $, and the related sets $ \{\xset_j, \yset_j\}_{j\in\tset} $ together with a family of commutative monoids $ \{\mathcal{M}_{ij}\}_{j\in\tset} $, for a given fixed $ i\in\tset $. Take a function $ \hat{f}_i $ defined on
	\begin{align}
	\hat{f}_i\colon \xset_i\times\bigcup^\circ_{\mathbf{k}\in\mathbb{N}_0^{\vert T\vert} } \left( \mathcal{M}_i^{\mathbf{k}} \times \xset^{\mathbf{k}} \right)  \to \yset_i
	\end{align}
	where $ \bigcup\limits^\circ $ denotes the disjoint union and $ \mathbf{k} = \left[k_1, \ldots, k_{\vert\tset\vert}\right]$ is a multi-index with $ k_i \in \mathbb{N}_0 $ for all $ i\in\tset$ such that $ \xset^{\mathbf{k}} := \xset_1^{k_1} \times \ldots \times \xset_{\vert\tset\vert}^{k_{\vert\tset\vert}} $ and $ \mathcal{M}_i^{\mathbf{k}} := \mathcal{M}_{i1}^{k_1} \times \ldots \times \mathcal{M}_{i\vert\tset\vert}^{k_{\vert\tset\vert}} $.\\
	The function $ \hat{f}_i $ is called an \textbf{oracle component of type i}, if it has the property that for every $ x\in\xset_i $, $ \mathbf{x} \in \xset^{\mathbf{k}} $, $ \mathbf{w} \in \mathcal{M}_i^{\mathbf{k}} $, $ \overline{\mathbf{x}}\in\xset^{\overline{\mathbf{k}}} $, $ \overline{\mathbf{w}}\in\mathcal{M}_i^{\overline{\mathbf{k}}} $ and partition matrix $ P $, that satisfy
	\begin{equation}
	\begin{cases}
	\mathbf{x} = P\overline{\mathbf{x}}\\
	\mathbf{w}P = \overline{\mathbf{w}}
	\end{cases}
	\label{eq:condi_for_equal_admi}
	\end{equation}
	then
	\begin{equation}
	\hat{f}_i(x;\mathbf{w},\mathbf{x}) = 
	\hat{f}_i(x;\overline{\mathbf{w}},\overline{\mathbf{x}})
	\label{eq:admi_equality}
	\end{equation}
	Furthermore, if $ \mathbf{w} $ has its $ k^{th} $ element (corresponding to cell $ c_k $) equal to $ 0_{ij} $, with $ j = \type{}(c_k) $, then 
	\begin{equation}
	\hat{f}_i(x;\mathbf{w},\mathbf{x}) = 
	\hat{f}_i(x;\mathbf{w}_{-k},\mathbf{x}_{-k})
	\label{eq:admi_equality_0_equal}
	\end{equation}
	where $ \mathbf{w}_{-k} $, $ \mathbf{x}_{-k} $ denotes the result of removing the $ k^{th} $ element of the original vectors $ \mathbf{w} $, $ \mathbf{x} $.\\
	In \cref{eq:condi_for_equal_admi}, equality between states assumes compatible cell types.
	\hfill$ \square $
\end{definition}
The disjoint union allows us to distinguish neighborhoods of different types, that is, the set $ \xset_1\times\xset_1 $ is always taken as a different set from $ \xset_1\times\xset_2 $ even in the particular case of $ \xset_1 = \xset_2 $.
\begin{remark}
	Note that the way the domain of $ \hat{f}_i $ was defined allows us to deal with variable input set configurations. This is an equivalent, but cleaner way of defining a family of functions, each on a different domain based on its specific input set, such that they are all connected by the self-consistency rules that we expect from them. This way, we can use a single function to describe what really matters to us, that is, describing how a cell is affected by its in-neighbors.
	\hfill$ \square $
\end{remark}
\begin{remark}
	It is easy to verify that for a \textbf{permutation} matrix $ P $ that preserves cell typing $ ( \xset_{\mathbf{k}}  = P\xset_{\mathbf{k}} ) $ we have
	\begin{equation*}
	\hat{f}_i(x;\mathbf{w},\mathbf{x}) = \hat{f}_i(x;\mathbf{w}P^{\top},P\mathbf{x})
	\end{equation*}
	which is in accordance to the idea that a cell does not care about the order in which its input set is drawn.
	\hfill$ \square $
\end{remark}
The \textbf{oracle set} is the set of all $ \vert T\vert $-tuples of oracle components, such that each element of the tuple represents one of the types in $ T $. It is denoted as
\begin{align*}
\hat{\mathcal{F}}_{T} = \prod_{i=1}^{\vert T\vert}\hat{\mathcal{F}}_{i} 
\end{align*}
where $ \hat{\mathcal{F}}_{i} $ is the set of all oracle components of type $ i $ . We are always implicitly assuming sets $ \{\xset_i, \yset_i\}_{i\in\tset} $ and commutative monoids $ \{\mathcal{M}_{ij}\}_{i,j\in\tset} $.
Note that modeling some aspect of a network that follows our assumptions is effectively choosing one of the elements of $ \hat{\mathcal{F}}_{T} $, which we call \textbf{oracle functions}.
\begin{definition}\label{defi:F_G_admissibility}
	Consider a network $ \mathcal{G} $ defined on a cell set $ \mathcal{C} $ with cell types in $ \tset $ according to the cell type partition $ \type{} $, and an in-adjacency matrix $ M $. Assume without loss of generality that the cells are ordered according to the cell types such that we can associate with the network a state $ \xset := \xset^{\mathbf{k}} $ and output $ \yset := \yset^{\mathbf{k}} $ sets, with $ \vert \mathcal{C} \vert = \vert \mathbf{k} \vert $
	\begin{align*}
	\vert \mathbf{k} \vert = \sum_{i = 1}^{\vert T \vert} k_i
	\end{align*}
	and $ k_i $ is the number of cells in $ \mathcal{C} $ of type $ i\in\tset $.\\
	A function \mbox{$ f\colon\xset\to\yset $}, given as
	\begin{align*}
	f = (f_c)_{c\in\mathcal{C}},\quad \text{with } f_c\colon\xset\to\yset_{i}, \quad i = \type{}(c)
	\end{align*}
	is said to be \mbox{$ \mathcal{G} $\textbf{-admissible}} if there is some oracle function \mbox{$ \hat{f}\in\hat{\mathcal{F}}_{T}$, $ \hat{f} =  (\hat{f}_i)_{i\in\tset}$} such that
	\begin{equation}
	f_c(\mathbf{x}) = \hat{f}_{i}\left(x_c; \mathbf{m}_c, \mathbf{x} \right)\label{eq:dynamics_dependence}
	\end{equation}
	for $ \mathbf{x}\in\xset $, where $ x_c $ is the $ c^{th} $ coordinate of $ \mathbf{x} $ and $ \mathbf{m}_c $ is the $ c^{th} $ row of matrix $ M $. In this case we write $ f = \hat{f}\rvert_{\mathcal{G}} $.
	\hfill$ \square $
\end{definition}
The set of all \mbox{$ \mathcal{G} $-admissible} functions is denoted as $ \mathcal{F}_{\mathcal{G}} $. It can be thought of as the result of evaluating $ \hat{\mathcal{F}}_{T} $ at $ \mathcal{G} $, which can be written as $ \hat{\mathcal{F}}_{T}\rvert_{\mathcal{G}} $.
Note that process of evaluating oracle functions at a network is not necessarily injective. There might be oracle functions \mbox{$ \hat{f},\hat{g}\in\hat{\mathcal{F}}_{T} $} with $ \hat{f} \neq \hat{g} $ such that $ \hat{f}\rvert_{\mathcal{G}} = \hat{g}\rvert_{\mathcal{G}} $.\\
The next example makes explicit the relation between the connectivity graph of a network and how that constraints any possible admissible function that acts on it.
\begin{exmp}\label{exmp:admiss_funcs}	
	\Cref{fig:Gadmissibility} shows an example of a \textbf{CCN} of three cells of the same type.
	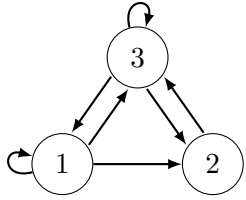
\begin{figure}[h]
		\centering
		\begin{tikzpicture}[
node1/.style = {circle,minimum size=23,draw},
node2/.style = {circle,minimum size=23,draw,fill=white!75!black},
node3/.style = {circle,minimum size=23,draw,fill=white!50!black},
noderect/.style = {rectangle,minimum size=20,draw},
edge1/.style = {>=latex,thick},
edge2/.style = {>=latex,thick,blue},
edge3/.style = {>=latex,thick,red}
]
\node[node1] at (0,0)(n1){1};
\node[node1] at (2,0)(n2){2};
\node[node1] at (1,{sqrt(2)})(n3){3};

\DoubleLine{n1}{n3}{<-,edge1}{}{->,edge1}{}
\draw [->,edge1](n1) -- (n2);
\DoubleLine{n2}{n3}{<-,edge1}{}{->,edge1}{}

\draw [->,edge1] (n1) edge[loop left,looseness=5] (n1);
\draw [->,edge1] (n3) edge[loop above,looseness=5] (n3);



\end{tikzpicture} 
		\caption{Simple network with admissible functions that have the structure given by \cref{eq:func_match1,eq:func_match2,eq:func_match3}}
		\label{fig:Gadmissibility}
	\end{figure}	
	This \textbf{CCN} can be described by the in-adjacency matrix $ M $ 
	\begin{equation} M = 
	\begin{bmatrix}
	1 & 0 & 1\\
	1 & 0 & 1\\
	1 & 1 & 1
	\end{bmatrix}
	\end{equation}
	together with the cell type partition \mbox{$ \type{} = \{\{1,2,3\}\} $.} This means that a suitable $ f\in\mathcal{F}_{\mathcal{G}} $ should have the following structure
	\begin{align}
	f_1(\mathbf{x}) &= \hat{f}(x_1;\begin{bmatrix} 1 & 0 & 1\end{bmatrix},\mathbf{x})\label{eq:func_match1}\\
	f_2(\mathbf{x}) &= \hat{f}(x_2;\begin{bmatrix} 1 & 0 & 1\end{bmatrix},\mathbf{x})\label{eq:func_match2}\\
	f_3(\mathbf{x}) &= \hat{f}(x_3;\begin{bmatrix} 1 & 1 & 1 \end{bmatrix},\mathbf{x})\label{eq:func_match3}
	\end{align}
	for some $ \hat{f}\in\hat{\mathcal{F}}_{T}  $. Note that here $ T $ only has one type, that is $ \hat{\mathcal{F}}_{\circ} = \hat{\mathcal{F}}_{T} $. Therefore, we use $ \hat{f} $ interchangeably as both oracle function and oracle component. It is exactly the same treatment as not differentiating between an one-dimensional vector and the element it contains.
	\hfill$ \square $
\end{exmp}
To make more explicit the importance of a rigorous definition for admissibility, the following example presents a case that might look reasonable at a first glance but ends up not being admissible.
\begin{exmp}
	Consider the simple network in \cref{fig:edge_merging} that was used to illustrate the edge merging concept. We will propose a function on the original network \cref{fig:edge_merginga} and verify if it satisfies our assumptions.\\
	We consider that the cells have associated state and output sets given by $ \xset_1 = \xset_2 = \yset_1 = \mathbb{R} $, such that $ \tset = \{1,2\} $ identify the cell types `circle' and `square' respectively.\\
	The directed edges from `square' into `circle' are in $ \mathcal{M}_{12} $ and we denote $ \|_{12} $ by $ \| $ for simplicity.\\
	Given functions $ g\colon \mathbb{R}\to\mathbb{R}$ and \mbox{$ p\colon \mathcal{M}_{12} \to \mathbb{R} $}, with \mbox{$ p(0_{12}) = 0 $}, it is tempting to think that the function $ f_3 $, could be modeled by
	\begin{align}
	f_3(\mathbf{x}) = g(x_3) + p(w_1)x_1 + p(w_2) x_2 + p(w_1) p(w_2) x_1 x_2 
	\label{eq:non_admissible_examp}
	\end{align}
	After all, if we simultaneously switch $ w_1 \leftrightarrow w_2 $ and $ x_1 \leftrightarrow x_2 $, $ f_3 $ would still look the same. Consider, $ w_1 = w $, $ x_1 = x $ and $ w_2 = 0_{12} $. Then, if cell $ 3 $ only had one neighbor (of type square), $ f_3 $ would be given by
	\begin{align*}
	f_3(\mathbf{x}) = g(x_3) + p(w) x
	\end{align*}
	If we have $ x_1 = x_2 = x_{12}$, from the edge-merging principle, we should be in the situation of \cref{fig:edge_mergingb}. We would have
	\begin{align*}
	f_3(\mathbf{x}) = g(x_3) + p(w_1 \| w_2) x_{12}
	\end{align*}
	However, from direct substitution on \cref{eq:non_admissible_examp} we obtain 
	\begin{align*}
	f_3(\mathbf{x}) = g(x_3) + ( p(w_1) + p(w_2) + p(w_1) p(w_2) x_{12} )x_{12} 
	\end{align*}
	which means that this is not admissible since 
	\begin{align*}
	p(w_1 \| w_2) =  p(w_1) + p(w_2) + p(w_1) p(w_2) x_{12} 
	\end{align*}
	goes against the assumption that $ p $ depends only on the edge weights.\\
	Consider that instead $ f_3 $ was modeled as
	\begin{align*}
	f_3(\mathbf{x}) = g(x_3) + p(w_1) x_1 + p(w_2) x_2 + p(w_1) p(w_2) \frac{x_1 + x_2 }{2}
	\label{eq:corrected_admissible_examp1}
	\end{align*}
	Following the exact same approach this requires
	\begin{align*}
	p(w_1 \| w_2) =  p(w_1) + p(w_2) + p(w_1) p(w_2) 
	\end{align*}
	which is a valid constraint. It only depends on its inputs and is compatible with a commutative monoid structure, that is,
	\begin{align*}
	p(w \| 0_{12}) &= p(w)\\
	p(w_1 \| w_2) &= p(w_2 \| w_1)\\
	p((w_1 \| w_2) \| w_3) &= p(w_1 \| (w_2 \| w_3))
	\end{align*}
	Note that for each of the three equalities,  the inputs for both members are the same element of $ \mathcal{M}_{12} $. The same input of a function has to output the same value. 
	\hfill$ \square $
\end{exmp}
\begin{remark}
	Assume that in the previous example there was an annihilator element in $ \mathcal{M}_{12} $, that is, an element $ a_{12}\in\mathcal{M}_{12} $ such that $ w\| a_{12} = a_{12} $ for all $ w\in\mathcal{M}_{12}$. Then, either $ p(a_{12}) = -1 $ or we are in the non interesting case where $ p(w) = 0 $ for all $ w\in\mathcal{M}_{12} $.\\
	An example of an annihilator is the short-circuit $ (R=0) $ with regard to the parallel of resistors, as in \cref{exmp:resistor}.
	\hfill$ \square $
\end{remark}
\begin{remark}
	Note that we solved the problem of the $ 2 $-coupling component being quadratic on $ x_{12} $ when $ x_1 = x_2 = x_{12} $ by modeling that component additively with $ (x_1+x_2)/2 $. This is not the only approach. Consider now that we have some function $ q\colon \mathcal{M}_{12} \times\mathbb{R} \to \mathbb{R} $ with $ q(0_{12},x) = 0 $ such that we model $ f_3 $ by
	\begin{align*}
	f_3(\mathbf{x}) = g(x_3) + q(w_1,x_1) + q(w_2, x_2) + q(w_1,x_1) q(w_2,x_2)
	\end{align*}
	This is also valid when
	\begin{align*}
	q(w_1 \| w_2,x) =  q(w_1,x) + q(w_2,x) + q(w_1,x) q(w_2,x) 
	\end{align*}
	Similarly, this also compatible with a commutative monoid structure.
	\hfill$ \square $	
\end{remark}

\section{Invariant synchrony}\label{sec:invariant_synchrony}
In \cref{sec:Gadmissibility} we introduced the concept of oracle components $ \hat{f}_i\in\hat{\mathcal{F}}_{i} $ which describe the way a cell of type $ i\in\tset $ reacts to its input set in a way that is self-consistent. That is, having the same state and equivalent inputs generates the same output and the output is only functionally dependent on the in-neighbors.\\
The functional modeling of a network $ \mathcal{G} $ should then be though of as choosing an oracle function $ \hat{f}\in\hat{\mathcal{F}}_{T} $ and constraining it to $ \mathcal{G} $ as in \cref{defi:F_G_admissibility}. This gives us an $ \mathcal{G} $-admissible function \mbox{$ f\in\mathcal{F}_{\mathcal{G}} $} such that $ f = \hat{f}\rvert_{\mathcal{G}} $.\\
The oracle components, which are the core concept of the modeling of a network were defined in terms of equality relationships. This motivates the study of whether a set of equality relations in the state set $ \xset $ are preserved by $ f $ regardless of the specific admissible $ f $ that acts on the network.
\begin{definition}
	Given a partition $ \mathcal{A}\leq\type{} $ we call the subset of $ \xset $
	\begin{equation}
	\Delta_{\mathcal{A}}^{\xset} 
	=
	\{ \mathbf{x}\in\xset\colon
	\mathcal{A}(c)
	=
	\mathcal{A}(d)
	\implies x_c = x_d\}
	\end{equation}
	the \textbf{polydiagonal} of $ \mathcal{A} $ in $ \xset $.
	\hfill$ \square $
\end{definition}
This means that any $ \mathbf{x}\in \Delta_{\mathcal{A}}^{\xset} $ can be given by $ \mathbf{x} = P\overline{\mathbf{x}} $ for some $ \overline{\mathbf{x}} $, where $ P $ is a characteristic matrix of $ \mathcal{A} $. Note that
\begin{equation}
\mathcal{A} \leq \mathcal{B} \Longleftrightarrow \Delta_{\mathcal{A}}^{\xset} \supseteq \Delta_{\mathcal{B}}^{\xset}
\end{equation}
\begin{definition}
	If for a function $ f\colon\xset\to\yset $ and a partition $ \mathcal{A} \leq \type{} $ we have
	\begin{equation}
	f\left(\Delta_{\mathcal{A}}^{\xset}\right) \subseteq \Delta_{\mathcal{A}}^{\yset}
	\label{eq:polysynchronous}
	\end{equation}
	then $ f $ is $ \mathcal{A} $-\textbf{invariant}.
	\hfill$ \square $
\end{definition}
Note that if $ f $ is $ \mathcal{A} $-invariant, then for every $ \mathbf{x} \in\xset  $ such that $ \mathbf{x} = P\overline{\mathbf{x}} $, with $ P $ representing $ \mathcal{A} $, there is $ \overline{\mathbf{y}} $ such that \mbox{$ f(P\overline{\mathbf{x}}) = P\overline{\mathbf{y}} $}. This means that there is a function $ \overline{f}\colon\overline{\xset}\to\overline{\yset} $ with $ \xset = P\overline{\xset} $, $ \yset = P\overline{\yset} $ such that
\begin{equation}
f(P\overline{\mathbf{x}})
=
P\overline{f}(\overline{\mathbf{x}})
\label{eq:f_in_invariant_space}
\end{equation}
Consider a discrete-time system $ \mathbf{x}^{+} = f(\mathbf{x}) $ that evolves according to an $ \mathcal{G} $-admissible map $ f\colon\xset\to\xset $. If $ f $ is \mbox{$ \mathcal{A} $-invariant}, then
\begin{equation}
\mathbf{x}_{n_0} \in \Delta_{\mathcal{A}}^{\xset} \implies \mathbf{x}_n \in \Delta_{\mathcal{A}}^{\xset} \quad \forall n\in\mathbb{N} \colon n\ge n_0 
\end{equation}
Similarly, for a continuous-time system \mbox{$ \dot{\mathbf{x}} = f(\mathbf{x}) $} that evolves according to an \mbox{$ \mathcal{G} $-admissible} vector field \mbox{$ f(\mathbf{x})\colon\xset\to T_{\mathbf{x}}\xset $} where $ f $ is Lipschitz, $ \xset $ is a smooth manifold and $ T_{\mathbf{x}}\xset $ its tangent space at $ \mathbf{x} $. If $ f $ is $ \mathcal{A} $-invariant, then
\begin{equation}
\mathbf{x}(t_0) \in \Delta_{\mathcal{A}}^{\xset} \implies \mathbf{x}(t) \in \Delta_{\mathcal{A}}^{\xset} \quad \forall t\in\mathbb{R}
\end{equation}
Note that in both cases the polydiagonals $ \Delta_{\mathcal{A}}^{\xset} $ are invariant with respect to the dynamics. Moreover, the evolution of $ \mathbf{x} $ is fully determined by $ \overline{\mathbf{x}} $, which in turn evolves according to
\begin{equation}
\overline{\mathbf{x}}^{+}\backslash\dot{\overline{\mathbf{x}}}
=
\overline{f}(\overline{\mathbf{x}})
\end{equation}
The following concept relates the structure of a network with its capability to preserve a given synchrony pattern. This works by verifying whether, for a given coloring of the network, if cells that have the same color also have colored input sets that are equivalent.
\begin{definition}
	Consider a network $ \mathcal{G} $ defined on a cell set $ \mathcal{C} $ with a cell type partition $ \type{} $ and an in-adjacency matrix $ M $. A partition $ \mathcal{A}\leq \type{}$ with characteristic matrix $ P $ is said to be \textbf{balanced} on $ \mathcal{G} $ if for all $ c,d\in\mathcal{C} $
	\begin{align}\label{eq:balanced_cond}
	\mathcal{A}(c)
	=
	\mathcal{A}(d)
	\implies
	\mathbf{m}_c P = \mathbf{m}_d P
	\end{align}
	where $ \mathbf{m}_c, \mathbf{m}_d $ are the $ c^{th} $ and $ d^{th} $ rows of matrix $ M $.
	\hfill$ \square $
\end{definition}
A balanced partition is usually indicated with the symbol $ \bp $. 
Note that a partition is balanced if and only if there is a matrix $ Q $ of elements in the appropriate monoids $ \{\mathcal{M}_{ij}\}_{i,j\in\tset} $ such that
\begin{equation}\label{eq:balanced_cond_matrix_form}
M P = P Q
\end{equation}
Clearly, the trivial partition is always balanced. That is, for any $ M $, the condition \cref{eq:balanced_cond_matrix_form} is satisfied with $ P = I $ and \mbox{$ Q = M $}. We now show the following result.
\begin{theorem}\label{thm:balanced_implies_all}
	Consider $ \mathcal{F}_{\mathcal{G}} $, defined on a network $ \mathcal{G} $ with sets $ \{\xset_i, \yset_i\}_{i\in\tset} $ and commutative monoids $ \{\mathcal{M}_{ij}\}_{i,j\in\tset} $.\\
	If a partition $ \bp $ on $ \mathcal{G} $ is balanced, then every $ f\in\mathcal{F}_{\mathcal{G}} $ is $ \bp $-invariant.
	\hfill$ \square $
\end{theorem}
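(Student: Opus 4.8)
The plan is to unwind the definition of $\mathcal{G}$-admissibility and then apply the defining property of oracle components (\cref{defi:oracle}) essentially verbatim. Fix an arbitrary $f\in\mathcal{F}_{\mathcal{G}}$ and let $\hat f=(\hat f_i)_{i\in\tset}$ be an oracle function witnessing $f=\hat f\rvert_{\mathcal{G}}$, so that $f_c(\mathbf{x})=\hat f_i(x_c;\mathbf{m}_c,\mathbf{x})$ for every cell $c$ of type $i=\type{}(c)$, as in \cref{eq:dynamics_dependence}. Let $P$ be a characteristic matrix of $\bp$; since $\bp\le\type{}$, $P$ identifies only cells of equal type, so it is an admissible choice of partition matrix for \cref{defi:oracle} (the states being compared are then of compatible type, and every column-sum appearing in a product $(\,\cdot\,)P$ stays inside a single monoid $\mathcal{M}_{ij}$). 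It then suffices to show $f(P\overline{\mathbf{x}})\in\Delta_{\bp}^{\yset}$ for every $\overline{\mathbf{x}}$, i.e. that $f_c(P\overline{\mathbf{x}})=f_d(P\overline{\mathbf{x}})$ whenever $\bp(c)=\bp(d)$.

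For the core step I would fix $\overline{\mathbf{x}}$, set $\mathbf{x}:=P\overline{\mathbf{x}}\in\Delta_{\bp}^{\xset}$, and take $c,d$ with $\bp(c)=\bp(d)$. Then $\type{}(c)=\type{}(d)=:i$ and, since $\mathbf{x}\in\Delta_{\bp}^{\xset}$, also $x_c=x_d$. The pair $(\mathbf{m}_c,\mathbf{x})$ and $(\overline{\mathbf{w}}_c,\overline{\mathbf{x}})$ with $\overline{\mathbf{w}}_c:=\mathbf{m}_cP$ satisfies the hypotheses \cref{eq:condi_for_equal_admi} by construction, so \cref{eq:admi_equality} gives
\begin{equation*}
f_c(\mathbf{x})=\hat f_i(x_c;\mathbf{m}_c,\mathbf{x})=\hat f_i(x_c;\mathbf{m}_cP,\overline{\mathbf{x}}),
\end{equation*}
and likewise $f_d(\mathbf{x})=\hat f_i(x_d;\mathbf{m}_dP,\overline{\mathbf{x}})$. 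The balanced condition \cref{eq:balanced_cond} for $\bp$ says precisely that $\mathbf{m}_cP=\mathbf{m}_dP$; combined with $x_c=x_d$ this forces $f_c(\mathbf{x})=f_d(\mathbf{x})$. Hence $f(\mathbf{x})\in\Delta_{\bp}^{\yset}$, so $f$ is $\bp$-invariant, and since $f$ was an arbitrary element of $\mathcal{F}_{\mathcal{G}}$ the theorem follows.

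I expect the only delicate point to be the bookkeeping required to invoke \cref{defi:oracle} cleanly: one must check that the characteristic matrix of a partition refining $\type{}$ really does meet the type-compatibility provisos implicit in \cref{eq:condi_for_equal_admi} — so that $\mathbf{m}_cP$ is a genuine element of the relevant product of monoids and the comparison $x_c=x_d$ is meaningful — and that the same reduced vector $\overline{\mathbf{x}}$ may be used simultaneously for $c$ and $d$, which is immediate because $\overline{\mathbf{x}}$ depends only on $\mathbf{x}$ and $P$, not on the cell. Everything past that is a direct substitution; in particular, property \cref{eq:admi_equality_0_equal} on deleting $0_{ij}$-entries is not needed for this argument.
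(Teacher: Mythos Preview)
Your proof is correct and follows essentially the same approach as the paper: take $\mathbf{x}=P\overline{\mathbf{x}}$, invoke the oracle property \cref{eq:admi_equality} via \cref{eq:condi_for_equal_admi} to reduce $\hat f_i(x_c;\mathbf{m}_c,\mathbf{x})$ to $\hat f_i(x_c;\mathbf{m}_cP,\overline{\mathbf{x}})$, and then use the balanced condition $\mathbf{m}_cP=\mathbf{m}_dP$ together with $x_c=x_d$. Your version is somewhat more explicit about the type-compatibility bookkeeping and the fact that \cref{eq:admi_equality_0_equal} is not needed, but the argument is the same.
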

\begin{proof}
	Consider $ \mathbf{x}\in\Delta_{\bp}^{\xset} $, that is, $ \mathbf{x} = P\overline{\mathbf{x}} $ for some $ \overline{\mathbf{x}} $. From the balanced definition \cref{eq:balanced_cond} we have \mbox{$ \bp(c) = \bp(d) \implies \mathbf{m}_c P = \mathbf{m}_d P $}, which satisfies conditions \cref{eq:condi_for_equal_admi} and therefore from the definition of admissibility $ f_c(P\overline{\mathbf{x}}) = f_d(P\overline{\mathbf{x}}) $. This means that there is a $ \overline{f} $ such that $ f(P\overline{\mathbf{x}}) = P\overline{f}(\overline{\mathbf{x}}) $ and every $ \mathcal{G} $-admissible function $ f\in \mathcal{F}_{\mathcal{G}} $ is \mbox{$ \bp $-invariant}.
\end{proof}
The unweighted version of this theorem is presented in~\cite{stewart2003symmetry}, \cite{golubitsky2005patterns} and \cite{golubitsky2006nonlinear} as an equivalence. Note however, that although the forward direction is a strong result ($ \bp $ balanced $ \implies f $ is $ \bp $-invariant $\forall f\in\mathcal{F}_{\mathcal{G}} $, for any $ \mathcal{F}_{\mathcal{G}} $), the backwards one is extremely weak by comparison, requiring every single $ \mathcal{G} $-admissible function to be $ \bp $-invariant. For this reason, we present the backwards direction separately.
\begin{theorem}\label{thm:all_implies_balanced}
	Consider $ \mathcal{F}_{\mathcal{G}} $, defined on a network $ \mathcal{G} $ with sets $ \{\xset_i, \yset_i\}_{i\in\tset} $ and commutative monoids $ \{\mathcal{M}_{ij}\}_{i,j\in\tset} $.\\
	For a partition $ \bp $ on $ \mathcal{G} $, if every $ f\in\mathcal{F}_{\mathcal{G}} $ is $ \bp $-invariant, then $ \bp $ is balanced.
	\hfill$ \square $
\end{theorem}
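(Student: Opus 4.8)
The plan is to argue the contrapositive: assuming $\bp$ is not balanced on $\mathcal{G}$, I will construct a single $f\in\mathcal{F}_{\mathcal{G}}$ that fails to be $\bp$-invariant, which contradicts the hypothesis. By \cref{eq:balanced_cond}, non-balancedness supplies cells $c,d\in\mathcal{C}$ with $\bp(c)=\bp(d)$ but $\mathbf{m}_c P\neq\mathbf{m}_d P$, where $P$ is a characteristic matrix of $\bp$; since $\bp\le\type{}$, the cells $c,d$ share a common type $i$ and $P$ is type-compatible. The goal is to build an oracle component $\hat{f}_i$ that detects this discrepancy between the reduced in-neighborhoods of $c$ and $d$ while still satisfying \cref{eq:admi_equality,eq:admi_equality_0_equal}.

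To that end I introduce a normal form for neighborhood data. Given $(\mathbf{w},\mathbf{x})$ in the domain of an oracle component of type $i$, let $\mathrm{can}(\mathbf{w},\mathbf{x})$ denote the finite set of triples $(j,s,v)$ — one for each cell type $j$ and each state value $s$ appearing among the type-$j$ neighbors, with $v$ the $\|_{ij}$-sum of the weights of the type-$j$ neighbors whose state is $s$ — from which the triples with $v=0_{ij}$ are deleted. Two observations make $\mathrm{can}$ the right object. First, it is invariant under the two moves of \cref{defi:oracle}: if $\mathbf{x}=P'\overline{\mathbf{x}}$ and $\mathbf{w}P'=\overline{\mathbf{w}}$ for a type-compatible characteristic matrix $P'$ then $\mathrm{can}(\mathbf{w},\mathbf{x})=\mathrm{can}(\overline{\mathbf{w}},\overline{\mathbf{x}})$ (each column of $P'$ lumps together neighbors of a single type sharing a single state, and lumping them or not does not change the per-state weight sums), and deleting a $0_{ij}$-weight entry leaves $\mathrm{can}$ unchanged as well. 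Second, it follows that the value of any oracle component on $(x;\mathbf{w},\mathbf{x})$ depends only on $x$ and $\mathrm{can}(\mathbf{w},\mathbf{x})$, and conversely every function of $(x,\mathrm{can}(\mathbf{w},\mathbf{x}))$ is a legitimate oracle component.

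Now fix $\overline{\mathbf{x}}\in\overline{\xset}$ with coordinates pairwise distinct within each type block — this requires the state sets to be rich enough, e.g. each $\xset_i$ infinite, which I assume from here on (when some $\xset_i$ or $\yset_i$ is a singleton the statement plainly fails, so a hypothesis of this kind is unavoidable) — and put $\mathbf{x}:=P\overline{\mathbf{x}}\in\Delta_{\bp}^{\xset}$. The key separation is $\mathrm{can}(\mathbf{m}_c P,\overline{\mathbf{x}})\neq\mathrm{can}(\mathbf{m}_d P,\overline{\mathbf{x}})$: choose a $\bp$-color $k_0$, of type $j_0$ say, at which the rows $\mathbf{m}_c P$ and $\mathbf{m}_d P$ disagree; at least one entry there, say $[\mathbf{m}_c P]_{k_0}$, is then not $0_{ij_0}$; because $\overline{x}_{k_0}$ is not repeated among the type-$j_0$ coordinates of $\overline{\mathbf{x}}$, the triple $(j_0,\overline{x}_{k_0},[\mathbf{m}_c P]_{k_0})$ is the unique triple of $\mathrm{can}(\mathbf{m}_c P,\overline{\mathbf{x}})$ with first two components $(j_0,\overline{x}_{k_0})$, whereas $\mathrm{can}(\mathbf{m}_d P,\overline{\mathbf{x}})$ contains either no such triple or the distinct triple $(j_0,\overline{x}_{k_0},[\mathbf{m}_d P]_{k_0})$.

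Finally, pick distinct $y_0,y_1\in\yset_i$ and define $\hat{f}_i(x;\mathbf{w},\mathbf{x}):=y_1$ if $\mathrm{can}(\mathbf{w},\mathbf{x})=\mathrm{can}(\mathbf{m}_c P,\overline{\mathbf{x}})$ and $\hat{f}_i(x;\mathbf{w},\mathbf{x}):=y_0$ otherwise, with the remaining components $\hat{f}_{i'}$ taken constant; by the normal-form discussion $\hat{f}=(\hat{f}_{i'})_{i'\in\tset}\in\hat{\mathcal{F}}_{\tset}$. Let $f:=\hat{f}\rvert_{\mathcal{G}}$. By \cref{eq:dynamics_dependence} and the compression-invariance of $\mathrm{can}$ one has $\mathrm{can}(\mathbf{m}_c,\mathbf{x})=\mathrm{can}(\mathbf{m}_c P,\overline{\mathbf{x}})$, hence $f_c(\mathbf{x})=y_1$; and since $\mathbf{x}\in\Delta_{\bp}^{\xset}$ with $\bp(c)=\bp(d)$ forces $x_c=x_d$, while $\mathrm{can}(\mathbf{m}_d,\mathbf{x})=\mathrm{can}(\mathbf{m}_d P,\overline{\mathbf{x}})\neq\mathrm{can}(\mathbf{m}_c P,\overline{\mathbf{x}})$, one has $f_d(\mathbf{x})=y_0\neq y_1=f_c(\mathbf{x})$. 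Thus $f(\mathbf{x})\notin\Delta_{\bp}^{\yset}$ even though $\mathbf{x}\in\Delta_{\bp}^{\xset}$, so $f$ is not $\bp$-invariant, which is the desired contradiction. I expect the delicate point to be the special role of the identities $0_{ij}$: one must ensure the constructed $\hat{f}_i$ obeys \cref{eq:admi_equality_0_equal} and not merely \cref{eq:admi_equality}, and the normal-form lemma (with its careful handling of the deleted triples) is exactly what packages this away, so proving that lemma cleanly is where the real effort lies.
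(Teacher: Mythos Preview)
Your proof is correct and follows the same contrapositive strategy as the paper: pick cells $c,d$ with $\bp(c)=\bp(d)$ but $\mathbf{m}_cP\neq\mathbf{m}_dP$, choose a suitable $\overline{\mathbf{x}}$, and build an oracle component that separates them. Your explicit normal form $\mathrm{can}(\mathbf{w},\mathbf{x})$ is a clean device for certifying that the constructed $\hat{f}_i$ really satisfies \cref{eq:admi_equality,eq:admi_equality_0_equal}; the paper leaves this verification implicit and simply defines $\hat{f}_i$ to output $y_1$ or $y_2$ according to whether the $\|_{ij}$-sum of the weights over neighbors in state $\overline{x}_k$ equals $[\mathbf{m}_cP]_k$.

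The one place you pay more than necessary is the choice of $\overline{\mathbf{x}}$. You ask for all coordinates to be pairwise distinct within each type block, which forces $\lvert\xset_j\rvert$ to be at least the number of $\bp$-colors of type $j$ (hence your appeal to infinite state sets). The paper instead fixes a single color $k$ where $\mathbf{m}_cP$ and $\mathbf{m}_dP$ differ and only requires $\overline{x}_k$ to be distinct from the other entries of its type; this needs merely $\lvert\xset_j\rvert\geq 2$, matching the non-singleton assumption already stated in the remarks. Your canonical-form machinery would accommodate this lighter choice just as well: you only need the triple at $(j_0,\overline{x}_{k_0})$ to be isolated, not every triple.
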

\begin{proof}
	Assume $ \bp $ is not balanced. Then, there are \mbox{$ \bp(c) = \bp(d) $} such that $ \mathbf{m}_c P \neq \mathbf{m}_d P $. Consider $ k $ one of the colors in which they differ. That is, $ \left[\mathbf{m}_c P\right]_k \neq \left[\mathbf{m}_d P\right]_k $.\\ Moreover, choose some state $ \mathbf{x}\in\Delta_{\bp}^{\xset} $, that is, $ \mathbf{x} = P\overline{\mathbf{x}} $, such that $ \overline{x}_k $ is different from all other entries of $ \overline{\mathbf{x}} $.\\
	Then, there is an $ f\in\mathcal{F}_{\mathcal{G}} $, that is, $ f = \hat{f}\rvert_{\mathcal{G}} $ such that $ \hat{f}_{i}\left(x; \mathbf{w}, \mathbf{x} \right) = y_1 $ if $\mathbf{w}$, summed over the entries such that $ \mathbf{x} $ is $ \overline{x}_k $, results into $ \left[\mathbf{m}_c P\right]_k $, and $ \hat{f}_{i}\left(x; \mathbf{w}, \mathbf{x} \right) = y_2 $ otherwise, with $ y_1\neq y_2 $, $ y_1, y_2 \in \yset_{i} $ and $ i = \type{}(c)$.\\
	Then, we have an $ f\in\mathcal{F}_{\mathcal{G}} $ and $ \overline{\mathbf{x}} $ such that $ f\left(P\overline{\mathbf{x}}\right) \not\in \Delta_{\bp}^{\yset} $. Therefore, the only way to not be able to find such an $ f $, is for $ \bp $ to be balanced.
\end{proof}
For the unweighted case, the authors of~\cite{stewart2003symmetry}, \cite{golubitsky2005patterns} and \cite{golubitsky2006nonlinear} proved the backwards direction by actually proving a stronger result that they hid away in the proof.\\
Their result is stronger in the sense that they show that only a particular subset of $ \mathcal{F}_\mathcal{G} $ is necessary to be $ \bp $-invariant to enforce $ \bp $ to be balanced. This works in the weaker framework of the unweighted case, which corresponds to the simple monoid structure $ \mathcal{M} = (\mathbb{N}_0,+) $.\\
Nevertheless, a stronger result deserves to be shown in its own right. Therefore, in the next section, we present similar results for special cases of monoids that extends their previous result.
\section{Output vector spaces}\label{sec:vector_output_states}
The following results apply when the output sets are vector spaces.
\begin{lemma}\label{lemma:admissible_vector_space}
	Consider $ \hat{\mathcal{F}}_{i} $, defined on an output set $ \yset_i $ that is a vector space. Then, $ \hat{\mathcal{F}}_{i} $ is itself a vector space.
	\hfill$ \square $
\end{lemma}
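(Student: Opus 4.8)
The plan is to verify directly that the set $\hat{\mathcal{F}}_i$ of oracle components of type $i$ is closed under pointwise addition and scalar multiplication, inheriting the vector space axioms from $\yset_i$. The underlying idea is that the defining conditions of an oracle component—equations \cref{eq:admi_equality} and \cref{eq:admi_equality_0_equal}—are each of the form ``two evaluations of the function agree,'' and equalities are preserved under taking linear combinations in the codomain.

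First I would fix the ambient data: the sets $\{\xset_j,\yset_j\}_{j\in\tset}$ and the monoids $\{\mathcal{M}_{ij}\}_{j\in\tset}$, with $\yset_i$ a vector space over a field (say $\mathbb{R}$ or $\mathbb{C}$). The candidate vector space operations on $\hat{\mathcal{F}}_i$ are defined pointwise on the (disjoint-union) domain $D_i := \xset_i\times\bigcup^\circ_{\mathbf{k}} (\mathcal{M}_i^{\mathbf{k}}\times\xset^{\mathbf{k}})$, namely $(\hat f_i + \hat g_i)(x;\mathbf{w},\mathbf{x}) := \hat f_i(x;\mathbf{w},\mathbf{x}) + \hat g_i(x;\mathbf{w},\mathbf{x})$ and $(\lambda\hat f_i)(x;\mathbf{w},\mathbf{x}) := \lambda\,\hat f_i(x;\mathbf{w},\mathbf{x})$. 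One also needs a zero element: the constant function sending every input to $0\in\yset_i$; this trivially satisfies \cref{eq:admi_equality} and \cref{eq:admi_equality_0_equal} since both sides are $0$. Then I would check closure: given oracle components $\hat f_i,\hat g_i$ and scalars, suppose $x,\mathbf{x},\mathbf{w},\overline{\mathbf{x}},\overline{\mathbf{w}},P$ satisfy \cref{eq:condi_for_equal_admi}; applying \cref{eq:admi_equality} to each of $\hat f_i$ and $\hat g_i$ gives $\hat f_i(x;\mathbf{w},\mathbf{x}) = \hat f_i(x;\overline{\mathbf{w}},\overline{\mathbf{x}})$ and likewise for $\hat g_i$, and adding (and scaling) these two equalities in $\yset_i$ yields the same identity for $\hat f_i + \hat g_i$ and for $\lambda \hat f_i$. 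The argument for the zero-removal condition \cref{eq:admi_equality_0_equal} is identical. Hence $\hat{\mathcal{F}}_i$ is a linear subspace of the vector space $\yset_i^{D_i}$ of all functions $D_i\to\yset_i$, and a linear subspace of a vector space is a vector space.

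The remaining step is to note that $\yset_i^{D_i}$, the set of all functions from any set into a vector space, is itself a vector space under pointwise operations—a standard fact—so being a subspace of it suffices; I would state this rather than reprove the eight axioms. I expect no serious obstacle here: the only subtlety worth a sentence is that the domain $D_i$ is a disjoint union over multi-indices, so ``pointwise'' must be understood as pointwise on this union, and the conditions \cref{eq:admi_equality}, \cref{eq:admi_equality_0_equal} only ever relate points that the oracle-component definition already links, so closure is unaffected by the disjoint-union structure. The mild conceptual point—and the thing I would make sure to emphasize—is precisely that oracle-component-hood is cut out by \emph{equalities} of evaluations, and the solution set of a family of linear equality constraints (here, $\hat f_i(\text{input}_1) - \hat f_i(\text{input}_2) = 0$) on a function space is a subspace.
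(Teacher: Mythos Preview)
Your proposal is correct and follows essentially the same route as the paper: verify closure under pointwise addition and scalar multiplication by observing that the defining conditions \cref{eq:admi_equality} and \cref{eq:admi_equality_0_equal} are equalities of evaluations, hence preserved under linear combinations in $\yset_i$. Your framing of $\hat{\mathcal{F}}_i$ as a linear subspace of $\yset_i^{D_i}$ cut out by equality constraints is slightly more explicit than the paper's terse version, but the underlying argument is identical.
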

\begin{proof}
	Consider components $ \hat{f}_i,\hat{g}_i\in \hat{\mathcal{F}}_{i}$. Defining, \mbox{$ \hat{h}_i = \hat{f}_i + \hat{g}_i $}, means that if conditions \cref{eq:condi_for_equal_admi} are satisfied for $ \hat{f}_i $ and $ \hat{g}_i $, then
	\begin{align*}
	\hat{h}_i(x;\mathbf{w},\mathbf{x}) &= \hat{f}_i(x;\mathbf{w},\mathbf{x}) + \hat{g}_i(x;\mathbf{w},\mathbf{x})\\
	&= \hat{f}_i(x;\overline{\mathbf{w}},\overline{\mathbf{x}}) + \hat{g}_i(x;\overline{\mathbf{w}},\overline{\mathbf{x}})\\
	&= \hat{h}_i(x;\overline{\mathbf{w}},\overline{\mathbf{x}})
	\end{align*} 
	which means that $ \hat{h}_i $ satisfies condition~\cref{eq:admi_equality}. Condition~\cref{eq:admi_equality_0_equal} is verified in exactly the same way.	
	Therefore, $ \hat{h}_i\in\hat{\mathcal{F}}_{i}$.\\
	If $ \hat{f}_i $ satisfies \cref{eq:admi_equality,eq:admi_equality_0_equal} then $ \alpha\hat{f}_i $ also satisfies it for any scalar $ \alpha $.
	Therefore $ \alpha\hat{f}_i\in \hat{\mathcal{F}}_{i} $ and $ \hat{\mathcal{F}}_{i} $ is a vector space.
\end{proof}
\begin{corollary}
	Consider $ \hat{\mathcal{F}}_{T} $, defined on the output sets $ \{\yset_i\}_{i\in\tset} $ that are vector spaces. Then, $ \hat{\mathcal{F}}_{T} $ is itself a vector space.
	\hfill$ \square $
\end{corollary}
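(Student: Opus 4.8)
The plan is to reduce the corollary to the already-established \cref{lemma:admissible_vector_space} by exploiting the product structure of $\hat{\mathcal{F}}_{T}$. Recall that by definition $\hat{\mathcal{F}}_{T} = \prod_{i=1}^{\vert T\vert}\hat{\mathcal{F}}_{i}$, so an element is a tuple $\hat{f} = (\hat{f}_i)_{i\in\tset}$ with each $\hat{f}_i \in \hat{\mathcal{F}}_{i}$. Since every output set $\yset_i$ is a vector space, \cref{lemma:admissible_vector_space} applies to each factor and tells us that each $\hat{\mathcal{F}}_{i}$ is a vector space (over whatever common scalar field the $\yset_i$ share).

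First I would define the candidate operations on $\hat{\mathcal{F}}_{T}$ componentwise: for $\hat{f} = (\hat{f}_i)_i$, $\hat{g} = (\hat{g}_i)_i$ and a scalar $\alpha$, set $\hat{f} + \hat{g} := (\hat{f}_i + \hat{g}_i)_i$ and $\alpha\hat{f} := (\alpha\hat{f}_i)_i$. Then I would observe that these are well-defined, i.e. that the resulting tuples again lie in $\hat{\mathcal{F}}_{T}$: this is immediate because each component $\hat{f}_i + \hat{g}_i$ and $\alpha\hat{f}_i$ lies in $\hat{\mathcal{F}}_{i}$ by \cref{lemma:admissible_vector_space}. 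Finally, the vector space axioms (associativity and commutativity of addition, existence of the zero tuple $(0)_i$ and of additive inverses, the distributive and unit laws for scalar multiplication) all hold in the product because they hold in each factor and are checked coordinatewise — this is precisely the standard fact that a finite (indeed arbitrary) product of vector spaces over a fixed field is a vector space. I would state this explicitly rather than belabor it.

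There is essentially no obstacle here: the only thing worth flagging is the implicit assumption that all the $\yset_i$ are vector spaces over the \emph{same} field, so that the scalar multiplication on the product is meaningful; this is consistent with the hypothesis "output sets $\{\yset_i\}_{i\in\tset}$ that are vector spaces" and with how scalar multiplication was used in \cref{lemma:admissible_vector_space}. The proof is therefore a two-line argument: invoke \cref{lemma:admissible_vector_space} on each factor, then note that a finite product of vector spaces is a vector space under componentwise operations.

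\begin{proof}
	By \cref{lemma:admissible_vector_space}, each $ \hat{\mathcal{F}}_{i} $ is a vector space (over the common scalar field of the $ \yset_i $). Since $ \hat{\mathcal{F}}_{T} = \prod_{i=1}^{\vert T\vert}\hat{\mathcal{F}}_{i} $, define addition and scalar multiplication componentwise, that is, $ (\hat{f}_i)_{i\in\tset} + (\hat{g}_i)_{i\in\tset} = (\hat{f}_i + \hat{g}_i)_{i\in\tset} $ and $ \alpha(\hat{f}_i)_{i\in\tset} = (\alpha\hat{f}_i)_{i\in\tset} $. These are well-defined operations on $ \hat{\mathcal{F}}_{T} $ because each component lies in the corresponding $ \hat{\mathcal{F}}_{i} $. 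All vector space axioms then follow by checking them coordinatewise, using that each $ \hat{\mathcal{F}}_{i} $ satisfies them; the zero element is the tuple of zero components and additive inverses are taken componentwise. Hence $ \hat{\mathcal{F}}_{T} $ is a vector space.
\end{proof}
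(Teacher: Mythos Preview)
Your proposal is correct and matches the paper's intended reasoning: the paper states this corollary without proof immediately after \cref{lemma:admissible_vector_space}, relying implicitly on the product decomposition $\hat{\mathcal{F}}_{T} = \prod_{i}\hat{\mathcal{F}}_{i}$ and the fact that a product of vector spaces is a vector space under componentwise operations. There is nothing to add.
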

\begin{lemma}
	Assume $ \hat{\mathcal{F}}_{T} $ is a vector space. Evaluation on a network $ (\rvert_{\mathcal{G}}) $ is a linear operator.
	\hfill$ \square $
\end{lemma}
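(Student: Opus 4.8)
The plan is to unwind \cref{defi:F_G_admissibility} and verify the two defining identities of a linear map directly, exploiting that the vector space structure on the domain $ \hat{\mathcal{F}}_{T} $ (built in \cref{lemma:admissible_vector_space} and the corollary following it) and the structure on the codomain are both pointwise.

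First I would pin down the codomain. Since each $ \yset_i $ is a vector space, the product $ \yset = \yset^{\mathbf{k}} $ is a vector space, and hence the set of all maps $ \xset\to\yset $ is a vector space $ V $ under pointwise addition and scalar multiplication. Evaluation on $ \mathcal{G} $ is the map $ \rvert_{\mathcal{G}}\colon\hat{\mathcal{F}}_{T}\to V $ sending $ \hat{f} = (\hat{f}_i)_{i\in\tset} $ to $ f = (f_c)_{c\in\mathcal{C}} $ with $ f_c(\mathbf{x}) = \hat{f}_{\type{}(c)}(x_c;\mathbf{m}_c,\mathbf{x}) $, where $ \mathbf{m}_c $ is the $ c^{th} $ row of $ M $.

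Next I would check additivity. Given $ \hat{f},\hat{g}\in\hat{\mathcal{F}}_{T} $, the corollary gives $ \hat{h} := \hat{f}+\hat{g}\in\hat{\mathcal{F}}_{T} $, and by construction its $ i^{th} $ component is $ \hat{h}_i = \hat{f}_i+\hat{g}_i $, the (pointwise) sum of oracle components in $ \hat{\mathcal{F}}_{i} $. Therefore, for every cell $ c $ with $ i = \type{}(c) $ and every $ \mathbf{x}\in\xset $,
\begin{align*}
(\hat{h}\rvert_{\mathcal{G}})_c(\mathbf{x})
&= \hat{h}_i(x_c;\mathbf{m}_c,\mathbf{x})
= \hat{f}_i(x_c;\mathbf{m}_c,\mathbf{x}) + \hat{g}_i(x_c;\mathbf{m}_c,\mathbf{x})\\
&= (\hat{f}\rvert_{\mathcal{G}})_c(\mathbf{x}) + (\hat{g}\rvert_{\mathcal{G}})_c(\mathbf{x}).
\end{align*}
Since $ c $ and $ \mathbf{x} $ are arbitrary, $ \hat{h}\rvert_{\mathcal{G}} = \hat{f}\rvert_{\mathcal{G}} + \hat{g}\rvert_{\mathcal{G}} $. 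Homogeneity is verified the same way, with $ \hat{f}+\hat{g} $ replaced by $ \alpha\hat{f} $ for a scalar $ \alpha $. This shows $ \rvert_{\mathcal{G}} $ is linear; as a byproduct, its image $ \mathcal{F}_{\mathcal{G}} = \hat{\mathcal{F}}_{T}\rvert_{\mathcal{G}} $ is a vector subspace of $ V $.

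There is no genuinely hard step here. The only thing that needs care is keeping the three layers of ``pointwise'' bookkeeping aligned: tuples of oracle components indexed by the type set $ \tset $, the pointwise operations on each $ \hat{\mathcal{F}}_{i} $ coming from \cref{lemma:admissible_vector_space}, and the pointwise operations on $ V $. Once $ \rvert_{\mathcal{G}} $ is recognized as ``read off component $ \type{}(c) $ and evaluate it at the fixed argument $ (x_c;\mathbf{m}_c,\mathbf{x}) $'', linearity is immediate, because evaluation at a fixed point is a linear functional on any space of functions carrying the pointwise structure.
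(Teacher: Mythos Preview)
Your proof is correct and follows essentially the same approach as the paper: unwind the definition of $\rvert_{\mathcal{G}}$ componentwise and use that the vector space operations on $\hat{\mathcal{F}}_{T}$ are pointwise to verify additivity and homogeneity directly. The only (harmless) difference is that you make the codomain $V$ explicit, while the paper leaves it implicit.
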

\begin{proof}
	Consider oracle functions \mbox{$ \hat{f},\hat{g}\in\hat{\mathcal{F}}_{T} $}. Since $ \hat{\mathcal{F}}_{T} $ is a vector space, there is a $ \hat{h}\in\hat{\mathcal{F}}_{T} $ such that $ \hat{h} = \hat{f} + \hat{g} $. Define $ f = \hat{f}\rvert_{\mathcal{G}} $ and $ g = \hat{g}\rvert_{\mathcal{G}} $. Then $ h = \hat{h}\rvert_{\mathcal{G}} $ is such that
	\begin{align*}
	h_c(\mathbf{x}) &= \hat{h}_{i}\left(x_c; \mathbf{m}_c, \mathbf{x} \right)\\
	&= \hat{f}_{i}\left(x_c; \mathbf{m}_c, \mathbf{x} \right) + \hat{g}_{i}\left(x_c; \mathbf{m}_c, \mathbf{x} \right)\\
	&= f_c(\mathbf{x}) + g_c(\mathbf{x})
	\end{align*}
	for all $ c\in\mathcal{C} $ with $ i = \type{}(c) $. That is, $ (\hat{f} + \hat{g})\rvert_{\mathcal{G}} = \hat{f}\rvert_{\mathcal{G}} + \hat{g}\rvert_{\mathcal{G}} $.\\
	Similarly, for any $ \hat{f}\in\hat{\mathcal{F}}_{T} $ and any scalar $ \alpha $, there is a $ \hat{h}\in\hat{\mathcal{F}}_{T} $ such that $ \hat{h} = \alpha \hat{f} $. Define $ f = \hat{f}\rvert_{\mathcal{G}} $. Then $ h = \hat{h}\rvert_{\mathcal{G}} $ is such that
	\begin{align*}
	h_c(\mathbf{x}) &= \hat{h}_{i}\left(x_c; \mathbf{m}_c, \mathbf{x} \right)\\
	&= \alpha\hat{f}_{i}\left(x_c; \mathbf{m}_c, \mathbf{x} \right)\\
	&= \alpha f_c(\mathbf{x})
	\end{align*}
	for all $ c\in\mathcal{C} $ with $ i = \type{}(c) $. That is, $ (\alpha \hat{f})\rvert_{\mathcal{G}} = \alpha\cdot \hat{f}\rvert_{\mathcal{G}} $.\\
	Therefore, evaluation on a network $ (\rvert_{\mathcal{G}}) $ is a linear operator on $ \hat{\mathcal{F}}_{T} $.
\end{proof}
\begin{corollary}
	Assume $ \hat{\mathcal{F}}_{T} $ is a vector space. Then, \mbox{$ \mathcal{F}_{\mathcal{G}} = \hat{\mathcal{F}}_{T}\rvert_{\mathcal{G}} $} is also a vector space.
	\hfill$ \square $
\end{corollary}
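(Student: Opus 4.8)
The plan is to deduce the statement immediately from the preceding lemma, which asserts that evaluation on a network $(\rvert_{\mathcal{G}})$ is a linear operator on $\hat{\mathcal{F}}_{T}$. By \cref{defi:F_G_admissibility} and the notation introduced afterwards, $\mathcal{F}_{\mathcal{G}} = \hat{\mathcal{F}}_{T}\rvert_{\mathcal{G}}$ is precisely the image of the vector space $\hat{\mathcal{F}}_{T}$ under this linear operator, and the image of a vector space under a linear map is a linear subspace of its codomain; hence $\mathcal{F}_{\mathcal{G}}$ is a vector space.

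To make this rigorous I would first pin down the ambient codomain. Since each $\yset_i$ is a vector space and $\yset = \yset^{\mathbf{k}} = \yset_1^{k_1}\times\cdots\times\yset_{\vert\tset\vert}^{k_{\vert\tset\vert}}$ is a finite product of vector spaces, it is a vector space under coordinatewise operations; consequently the set of all functions $f\colon\xset\to\yset$, equipped with pointwise addition and scalar multiplication, is a vector space. The operator $\rvert_{\mathcal{G}}$ maps $\hat{\mathcal{F}}_{T}$ into this function space, so once closure is checked, $\mathcal{F}_{\mathcal{G}}$ inherits the vector space structure as a subspace.

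The closure verification simply reuses the two identities established in the proof of the preceding lemma, namely $(\hat{f}+\hat{g})\rvert_{\mathcal{G}} = \hat{f}\rvert_{\mathcal{G}} + \hat{g}\rvert_{\mathcal{G}}$ and $(\alpha\hat{f})\rvert_{\mathcal{G}} = \alpha\,\hat{f}\rvert_{\mathcal{G}}$. Given $f,g\in\mathcal{F}_{\mathcal{G}}$, choose oracle functions $\hat{f},\hat{g}\in\hat{\mathcal{F}}_{T}$ with $f = \hat{f}\rvert_{\mathcal{G}}$ and $g = \hat{g}\rvert_{\mathcal{G}}$; since $\hat{\mathcal{F}}_{T}$ is a vector space, $\hat{f}+\hat{g}\in\hat{\mathcal{F}}_{T}$ and $\alpha\hat{f}\in\hat{\mathcal{F}}_{T}$, so $f+g = (\hat{f}+\hat{g})\rvert_{\mathcal{G}}\in\mathcal{F}_{\mathcal{G}}$ and $\alpha f = (\alpha\hat{f})\rvert_{\mathcal{G}}\in\mathcal{F}_{\mathcal{G}}$, while the zero function is $\hat{0}\rvert_{\mathcal{G}}$ for $\hat{0}$ the zero of $\hat{\mathcal{F}}_{T}$. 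This gives the subspace axioms.

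I do not anticipate a genuine obstacle; the only point requiring care is bookkeeping — confirming that the target function space valued in $\yset$ is itself a vector space (this is exactly where the hypothesis that each $\yset_i$ is a vector space is used), and recalling that the image of a linear operator is a subspace of its codomain. Everything else is a direct transcription of the preceding lemma.
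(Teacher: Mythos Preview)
Your proposal is correct and matches the paper's approach: the paper states this as an immediate corollary of the preceding lemma with no proof given, so the intended argument is precisely that $\mathcal{F}_{\mathcal{G}}$ is the image of the vector space $\hat{\mathcal{F}}_{T}$ under the linear operator $\rvert_{\mathcal{G}}$. Your additional bookkeeping about the ambient codomain is more explicit than anything the paper writes, but it is exactly the right justification.
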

\begin{corollary}
	Assume $ \hat{\mathcal{F}}_{T} $ is a vector space. Then, evaluating at a network $ \mathcal{G} $ (operator $ \rvert_{\mathcal{G}} $) partitions the space of functions $ \hat{\mathcal{F}}_{T} $ into affine planes parallel to the kernel (or nullspace) $ \ker(\rvert_{\mathcal{G}}) $ such that each plane represents the set of oracle functions that behave the same in that network, that is,
	\begin{align*}
	\hat{f}\rvert_{\mathcal{G}} = \hat{g}\rvert_{\mathcal{G}}
	\Longleftrightarrow
	\hat{f}-\hat{g}\in\ker(\rvert_{\mathcal{G}})
	\end{align*}
	for every $ \hat{f},\hat{g}\in\hat{\mathcal{F}}_{T} $.
	\hfill$ \square $
\end{corollary}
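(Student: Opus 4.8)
The plan is to read the statement off from the preceding lemma, which asserts that $\rvert_{\mathcal{G}}$ is a linear operator on the vector space $\hat{\mathcal{F}}_{T}$, combined with the standard fact that the fibers of a linear map are precisely the cosets of its kernel. No new input about monoids or admissibility is needed here; everything is pure linear algebra applied to the structure already in place.

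First I would fix two oracle functions $\hat{f},\hat{g}\in\hat{\mathcal{F}}_{T}$ and use linearity of $\rvert_{\mathcal{G}}$ to write $\hat{f}\rvert_{\mathcal{G}}-\hat{g}\rvert_{\mathcal{G}}=(\hat{f}-\hat{g})\rvert_{\mathcal{G}}$ as elements of $\mathcal{F}_{\mathcal{G}}$. Since two vectors coincide exactly when their difference is the zero vector, $\hat{f}\rvert_{\mathcal{G}}=\hat{g}\rvert_{\mathcal{G}}$ holds if and only if $(\hat{f}-\hat{g})\rvert_{\mathcal{G}}=0$, i.e. if and only if $\hat{f}-\hat{g}\in\ker(\rvert_{\mathcal{G}})$. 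That is exactly the claimed equivalence, and it shows in passing that the relation ``$\hat f$ and $\hat g$ behave the same on $\mathcal{G}$'' is genuinely an equivalence relation (reflexive, symmetric, transitive), as one would otherwise check directly.

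Next I would supply the geometric reading. Linearity forces $\ker(\rvert_{\mathcal{G}})$ to be a linear subspace of $\hat{\mathcal{F}}_{T}$, and the equivalence just established says that the class of $\hat{f}$ is the coset $\hat{f}+\ker(\rvert_{\mathcal{G}})$, an affine subspace parallel to $\ker(\rvert_{\mathcal{G}})$; hence $\rvert_{\mathcal{G}}$ foliates $\hat{\mathcal{F}}_{T}$ into these parallel affine planes. By the first isomorphism theorem for vector spaces, $\hat{\mathcal{F}}_{T}/\ker(\rvert_{\mathcal{G}})$ is isomorphic to $\mathcal{F}_{\mathcal{G}}=\hat{\mathcal{F}}_{T}\rvert_{\mathcal{G}}$, so each plane corresponds to a single $\mathcal{G}$-admissible function, namely the common behaviour on $\mathcal{G}$ of all oracle functions lying on that plane. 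There is no real obstacle: the only points to be careful about are invoking the linearity lemma correctly and noting that ``behaving the same in $\mathcal{G}$'' literally means agreeing after the restriction $\rvert_{\mathcal{G}}$, which is precisely how $\mathcal{F}_{\mathcal{G}}$ was defined.
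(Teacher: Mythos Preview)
Your argument is correct and is precisely the standard linear-algebra reasoning that the paper implicitly relies on: the paper states this corollary without proof, treating it as immediate from the preceding lemma that $\rvert_{\mathcal{G}}$ is linear. Your explicit computation $(\hat{f}-\hat{g})\rvert_{\mathcal{G}}=\hat{f}\rvert_{\mathcal{G}}-\hat{g}\rvert_{\mathcal{G}}$ together with the coset interpretation is exactly what is needed to unpack that.
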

We now present synchrony properties for output vector spaces.
\begin{lemma}\label{lemma:invariant_vector_space}
	Assume $ \mathcal{F}_{\mathcal{G}} $ is a vector space. Then, for any partition \mbox{$ \mathcal{A}\leq\type{G} $}, the subset of $ \mathcal{F}_{\mathcal{G}} $ that is $ \mathcal{A} $-invariant is also a vector space.
	\hfill$ \square $
\end{lemma}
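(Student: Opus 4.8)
The plan is to show that the $\mathcal{A}$-invariant subset is closed under addition and scalar multiplication, inheriting the vector space structure from $\mathcal{F}_{\mathcal{G}}$. Since $\mathcal{F}_{\mathcal{G}}$ is assumed to be a vector space, it suffices to verify that this subset is a linear subspace.

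First I would set up notation: let $P$ be a characteristic matrix representing $\mathcal{A}$, so that $\Delta_{\mathcal{A}}^{\xset} = \{P\overline{\mathbf{x}} : \overline{\mathbf{x}}\in\overline{\xset}\}$ and likewise $\Delta_{\mathcal{A}}^{\yset} = \{P\overline{\mathbf{y}} : \overline{\mathbf{y}}\in\overline{\yset}\}$. The condition that $f$ is $\mathcal{A}$-invariant is that $f(\Delta_{\mathcal{A}}^{\xset})\subseteq\Delta_{\mathcal{A}}^{\yset}$, equivalently that for every $\overline{\mathbf{x}}$ there exists $\overline{\mathbf{y}}$ with $f(P\overline{\mathbf{x}}) = P\overline{\mathbf{y}}$. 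Now take two $\mathcal{A}$-invariant functions $f,g\in\mathcal{F}_{\mathcal{G}}$ and a scalar $\alpha$. For any $\overline{\mathbf{x}}$, pick $\overline{\mathbf{y}}_f,\overline{\mathbf{y}}_g$ with $f(P\overline{\mathbf{x}}) = P\overline{\mathbf{y}}_f$ and $g(P\overline{\mathbf{x}}) = P\overline{\mathbf{y}}_g$. Then $(f+g)(P\overline{\mathbf{x}}) = P\overline{\mathbf{y}}_f + P\overline{\mathbf{y}}_g = P(\overline{\mathbf{y}}_f + \overline{\mathbf{y}}_g)$, which lies in $\Delta_{\mathcal{A}}^{\yset}$; similarly $(\alpha f)(P\overline{\mathbf{x}}) = \alpha P\overline{\mathbf{y}}_f = P(\alpha\overline{\mathbf{y}}_f)\in\Delta_{\mathcal{A}}^{\yset}$. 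So $f+g$ and $\alpha f$ are again $\mathcal{A}$-invariant, and they belong to $\mathcal{F}_{\mathcal{G}}$ since that is a vector space. The zero function is trivially $\mathcal{A}$-invariant. Hence the $\mathcal{A}$-invariant subset is a linear subspace of $\mathcal{F}_{\mathcal{G}}$, hence a vector space.

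I do not expect any serious obstacle here; the only mild subtlety is making sure the arithmetic is performed in the vector spaces $\yset_i$ componentwise (so that $P$ really does factor out of sums, using the interpretation of multiplication by a characteristic matrix $P$ as row replication described earlier), and noting that addition and scalar multiplication of admissible functions are again admissible by the corollaries following \cref{lemma:admissible_vector_space}. One could alternatively phrase the whole argument in terms of $\overline{f}$ from \cref{eq:f_in_invariant_space}, observing that $\overline{f+g} = \overline{f} + \overline{g}$ and $\overline{\alpha f} = \alpha\overline{f}$, but the direct verification above is cleaner.
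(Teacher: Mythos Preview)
Your proof is correct and follows essentially the same approach as the paper: both verify closure under addition and scalar multiplication by writing $f(P\overline{\mathbf{x}})=P\overline{\mathbf{y}}_f$, $g(P\overline{\mathbf{x}})=P\overline{\mathbf{y}}_g$ and factoring $P$ out of the sum (the paper writes $\overline{f}(\overline{\mathbf{x}})$ in place of your $\overline{\mathbf{y}}_f$, using \cref{eq:f_in_invariant_space}). The only cosmetic difference is that you explicitly note the zero function and the subspace framing, which the paper leaves implicit.
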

\begin{proof}
	Consider functions $ f,g\in \mathcal{F}_{\mathcal{G}}$ that are \mbox{$ \mathcal{A} $-invariant}. Defining, \mbox{$ h = f + g $}, if $ \mathbf{x} = P \overline{\mathbf{x}} $, where $ P $ is a partition matrix that represents $ \mathcal{A} $, we have
	\begin{align*}
	h(P\overline{\mathbf{x}}) &= f(P\overline{\mathbf{x}}) + g(P\overline{\mathbf{x}})\\
	&= P\overline{f}(\overline{\mathbf{x}}) + P\overline{g}(\overline{\mathbf{x}})\\
	&=P\left(\overline{f}(\overline{\mathbf{x}}) + \overline{g}(\overline{\mathbf{x}})\right)\\
	&=P\left(\overline{h}(\overline{\mathbf{x}})\right)
	\end{align*}
	so $ h $ is also $ \mathcal{A} $-invariant. Moreover,
	\begin{equation*}
	\alpha f(P\overline{\mathbf{x}}) = P\left(\alpha\overline{f}(\overline{\mathbf{x}})\right)
	\end{equation*}
	for any scalar $ \alpha $.
\end{proof}
\begin{corollary}
	Assume $ \mathcal{F}_{\mathcal{G}} = \hat{\mathcal{F}}_{T}\rvert_{\mathcal{G}}$ is a vector space. Then, for any partition \mbox{$ \mathcal{A}\leq\type{G} $}, the subset of $ \hat{\mathcal{F}}_{T} $ that is \mbox{$ \mathcal{A} $-invariant} is also a vector space.
	\hfill$ \square $
\end{corollary}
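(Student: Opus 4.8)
The plan is to identify the set in question with the preimage, under the evaluation operator $\rvert_{\mathcal{G}}$, of the $\mathcal{A}$-invariant subspace of $\mathcal{F}_{\mathcal{G}}$, and then to invoke the elementary fact that the preimage of a vector subspace under a linear map is again a subspace.

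First I would record the ingredients. Since the standing assumption of this section is that the output sets $\{\yset_i\}_{i\in\tset}$ are vector spaces (this is what makes the notation $\hat{\mathcal{F}}_{T}\rvert_{\mathcal{G}}$ meaningful), the corollary to \cref{lemma:admissible_vector_space} gives that $\hat{\mathcal{F}}_{T}$ is itself a vector space. By the lemma asserting that evaluation on a network is a linear operator, the map $\rvert_{\mathcal{G}}\colon\hat{\mathcal{F}}_{T}\to\mathcal{F}_{\mathcal{G}}$ is linear. By \cref{lemma:invariant_vector_space}, the subset $V\subseteq\mathcal{F}_{\mathcal{G}}$ of those $f$ that are $\mathcal{A}$-invariant is a vector subspace of $\mathcal{F}_{\mathcal{G}}$.

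Next I would observe that, by \cref{defi:F_G_admissibility}, an oracle function $\hat{f}\in\hat{\mathcal{F}}_{T}$ has the property that $\hat{f}\rvert_{\mathcal{G}}$ is $\mathcal{A}$-invariant exactly when $\hat{f}\rvert_{\mathcal{G}}\in V$; hence the set described in the statement is precisely $(\rvert_{\mathcal{G}})^{-1}(V)$. Because $V$ is a subspace and $\rvert_{\mathcal{G}}$ is linear, this preimage contains $0$, is closed under addition, and is closed under scalar multiplication, so it is a subspace of $\hat{\mathcal{F}}_{T}$. If a self-contained argument is preferred over the preimage formulation, one performs the two one-line checks directly: for oracle functions $\hat{f},\hat{g}$ whose restrictions are $\mathcal{A}$-invariant, linearity of $\rvert_{\mathcal{G}}$ gives $(\hat{f}+\hat{g})\rvert_{\mathcal{G}}=\hat{f}\rvert_{\mathcal{G}}+\hat{g}\rvert_{\mathcal{G}}$, which is $\mathcal{A}$-invariant by \cref{lemma:invariant_vector_space}, and likewise $(\alpha\hat{f})\rvert_{\mathcal{G}}=\alpha\,\hat{f}\rvert_{\mathcal{G}}$ is $\mathcal{A}$-invariant for every scalar $\alpha$.

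There is essentially no obstacle in this argument; the only point deserving care is making sure that $\hat{\mathcal{F}}_{T}$ is a vector space in the first place, so that the phrases ``subspace'' and ``preimage under a linear map'' are legitimate. Once that is secured via the corollary to \cref{lemma:admissible_vector_space}, the conclusion is immediate from the already-established linearity of $\rvert_{\mathcal{G}}$ together with \cref{lemma:invariant_vector_space}.
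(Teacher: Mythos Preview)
Your proposal is correct and is exactly the argument the paper has in mind: the corollary is stated without proof because it follows immediately from the preceding results, namely that $\hat{\mathcal{F}}_{T}$ is a vector space, that $\rvert_{\mathcal{G}}$ is linear, and \cref{lemma:invariant_vector_space}. Your identification of the set in question as the preimage $(\rvert_{\mathcal{G}})^{-1}(V)$ of the $\mathcal{A}$-invariant subspace $V\subseteq\mathcal{F}_{\mathcal{G}}$, together with the elementary fact that preimages of subspaces under linear maps are subspaces, is precisely the intended reasoning.
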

In a practical application, the nominal admissible function $ f^{\ast} $ that we desire in theory might not be the one that is actually realized. This motivates the interest in having some sort of local robustness so functions $ f $ that are sufficiently close to $ \hat{f} $ show similar properties.
\begin{corollary}
	Assume $ \mathcal{F}_{\mathcal{G}} $ is a \textbf{normed} vector space. Given a $ f^{\ast}\in \mathcal{F}_{\mathcal{G}} $, if we require that for some $ \varepsilon>0$, all the functions $ f\in\mathcal{F}_{\mathcal{G}} $ in the ball $ \lVert f - f^{\ast} \rVert < \varepsilon $ are \mbox{$ \mathcal{A} $-invariant}, then the whole $ \mathcal{F}_{\mathcal{G}} $ has to be \mbox{$ \mathcal{A} $-invariant}.
	\hfill$ \square $
\end{corollary}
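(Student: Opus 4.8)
The plan is to reduce the statement to the elementary fact that a linear subspace of a normed space cannot contain a ball around any of its points unless it coincides with the whole space. By \cref{lemma:invariant_vector_space}, the set $V \subseteq \mathcal{F}_{\mathcal{G}}$ of $\mathcal{A}$-invariant admissible functions is a linear subspace of $\mathcal{F}_{\mathcal{G}}$. The hypothesis says precisely that the open ball $\{f\in\mathcal{F}_{\mathcal{G}}\colon \lVert f-f^{\ast}\rVert<\varepsilon\}$ lies inside $V$; in particular $f^{\ast}\in V$, since $\lVert f^{\ast}-f^{\ast}\rVert = 0 < \varepsilon$.

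Concretely, I would take an arbitrary $g\in\mathcal{F}_{\mathcal{G}}$ and show $g\in V$. If $g = 0$, then $g$ is $\mathcal{A}$-invariant trivially, so assume $g\neq 0$; by the norm axioms $\lVert g\rVert > 0$, and I choose a scalar $t$ with $0<|t|<\varepsilon/\lVert g\rVert$. Then $\lVert (f^{\ast}+tg)-f^{\ast}\rVert = |t|\,\lVert g\rVert<\varepsilon$, so $f^{\ast}+tg$ belongs to the ball and hence to $V$. Since $V$ is a subspace and both $f^{\ast}+tg$ and $f^{\ast}$ lie in $V$, so does their difference $tg$, and therefore $g = \frac{1}{t}(tg)\in V$. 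As $g$ was arbitrary, $\mathcal{F}_{\mathcal{G}} = V$, i.e., every $f\in\mathcal{F}_{\mathcal{G}}$ is $\mathcal{A}$-invariant.

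There is no real obstacle here: all the substance is already packaged in \cref{lemma:invariant_vector_space} (closure of $V$ under addition and scalar multiplication), and what remains is the standard observation that a proper subspace has empty interior. The only points worth stating carefully are that the hypothesis concerns an \emph{open} ball, so a strictly positive displacement $|t|\,\lVert g\rVert$ still fits inside it, and the trivial separate treatment of the case $g=0$ (equivalently $\lVert g\rVert = 0$).
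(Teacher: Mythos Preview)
Your proposal is correct and follows exactly the approach the paper intends: the corollary is stated without proof immediately after \cref{lemma:invariant_vector_space}, so the implicit argument is precisely the one you give---the $\mathcal{A}$-invariant functions form a linear subspace, and a subspace containing an open ball must be the whole space. Your write-up is careful and complete.
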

This means that asking for local robustness in terms of \mbox{$ \mathcal{A} $-invariance} is the same as requiring the whole $ \mathcal{F}_{\mathcal{G}} $ to be \mbox{$ \mathcal{A} $-invariant}, that is, global \mbox{$ \mathcal{A} $-invariance}.\\
From \cref{thm:all_implies_balanced}, we know this can only be achieved when $ \mathcal{A} $ is a balanced partition in $ \mathcal{G} $.\\
We now present special cases for particular monoids in which we only require certain subsets of $ \mathcal{F}_{\mathcal{G}} $ to be \mbox{$ \mathcal{A} $-invariant} in order to enforce $ \mathcal{A} $ to be balanced.\\
The usual proof for \cref{thm:all_implies_balanced} in the unweighted and scalar-weighted cases, uses functions that are linear in the weights. This approach, however, does not scale well to general weight sets. Note that the analogous in this framework is to consider functions that are \textbf{additive in the weights}, that is,
\begin{align*}
p(w_1 \| w_2) =  p(w_1) + p(w_2)
\end{align*} 
If there is an annihilator in $ \mathcal{M} $, then $ p(w) =  0_{\mathcal{M}} $ for all \mbox{$ w\in\mathcal{M} $}. That is, only the trivial case for such functions exists. We now present an extension of the linear in the weights argument to a particular type of weight monoids for which it works.
\begin{remark}
	We always assume non-trivial output vector spaces. That is, $ \mathbb{Y}_i \neq \{0\} $. Otherwise, synchrony would always be trivially guaranteed but it would be a non-interesting particular case.
	\hfill$ \square $
\end{remark}
\begin{remark}
	If the state sets $ \xset_i $ only have one element, then it is irrelevant to talk about synchronicity in the first place. For this reason, we are assuming that the state sets are non-singleton and we can choose $ x_a \neq x_b $ with $ x_a,x_b \in\mathbb{X}_i $.
	\hfill$ \square $
\end{remark}

\begin{theorem}
	Consider non-trivial output vector spaces $ \{\yset_i\}_{i\in\tset} $ and assume that the edges are in the monoid $ \mathcal{M} = \left\langle\mset\vert E\right\rangle $, with $ \mset = \mathbb{R}\times\wset $ and
	\begin{align*}
	E = \{\lambda_1 w \| \lambda_2 w = (\lambda_1 +\lambda_2)w, \quad \forall \lambda_1,\lambda_2\in \mathbb{R}, w\in \wset \}
	\end{align*}
	where $ \wset $ is not necessarily countable.\\
	Consider the set of oracle components $ \hat{f}_i \ $, $ i\in\tset $, that are only dependent on neighbors that are in a specific state $ \overline{x}_k$, of the form 
	\begin{align*}
	\hat{f}_i\left(x;\sum \lambda_w w,\overline{x}_k\right) = \lambda_e\mathbf{v}, \quad \mathbf{v}\in \yset_{i}, \mathbf{v}\neq 0_{\yset_{i}}
	\end{align*}
	for some $ e\in\wset $.\\
	If the subset of oracle functions in $ \mathcal{F}_{\mathcal{G}} $ that are built with $ \hat{f}_i $'s of the type above is \mbox{$ \bp $-invariant}, then $ \bp $ is balanced in $ \mathcal{G} $. 
	\hfill$ \square $
\end{theorem}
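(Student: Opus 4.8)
The plan is to argue by contraposition, in the spirit of \cref{thm:all_implies_balanced}, but with the extra burden of producing a witness that already lies in the prescribed subclass. Suppose $\bp$ is not balanced on $\mathcal{G}$. Then by \cref{eq:balanced_cond} there are cells $c,d$ with $\bp(c)=\bp(d)$; since $\bp\le\type{}$ they share a type $i:=\type{}(c)=\type{}(d)$, and there is a color $k$, say of type $j$, with $[\mathbf{m}_c P]_k\neq[\mathbf{m}_d P]_k$ in $\mathcal{M}_{ij}=\mathcal{M}$. Recall $[\mathbf{m}_c P]_k=\sum_{\bp(c')=k}m_{cc'}$, the $\|$-combination of the weights of the edges into $c$ that come from color $k$.

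Next I would exploit the presentation $\mathcal{M}=\langle\mset\mid E\rangle$. For each direction $w\in\wset$ the ``$w$-coefficient'' map $\pi_w$, sending a formal combination $\sum_u\lambda_u u$ to $\lambda_w$, is already compatible with every relation of $E$ on the free monoid $\mset^{*}$, so it descends to a well-defined monoid homomorphism $\pi_w\colon\mathcal{M}\to(\mathbb{R},+)$ with $\pi_w(a\|b)=\pi_w(a)+\pi_w(b)$ and $\pi_w(0_{\mathcal{M}})=0$; moreover, because $\mset=\mathbb{R}\times\wset$ with exactly the relations $E$, the family $\{\pi_w\}_{w\in\wset}$ separates points of $\mathcal{M}$. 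Hence from $[\mathbf{m}_c P]_k\neq[\mathbf{m}_d P]_k$ we obtain some $e\in\wset$ with $\pi_e([\mathbf{m}_c P]_k)\neq\pi_e([\mathbf{m}_d P]_k)$.

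With $e$ in hand I would build the oracle function. Using the non-singleton hypothesis pick $x_a\neq x_b$ in $\xset_j$ and a nonzero $\mathbf{v}\in\yset_i$, and define the type-$i$ component by: given $(x;\mathbf{w},\mathbf{x})$, let $w_{\mathrm{tot}}$ be the $\|$-sum of the weights carried by the type-$j$ entries of $\mathbf{x}$ that equal $x_a$, and output $\pi_e(w_{\mathrm{tot}})\,\mathbf{v}$. This is exactly of the prescribed form $\hat f_i(x;\sum\lambda_w w,\overline x_k)=\lambda_e\mathbf v$ with $\overline x_k:=x_a$, and it is a legitimate oracle component: \cref{eq:admi_equality} holds because a type-preserving partition matrix $P$ maps the type-$j$ entries in state $x_a$ bijectively onto the type-$j$ colors in state $x_a$ and, as $\overline{\mathbf{w}}=\mathbf{w}P$ adds weights within each color, preserves $w_{\mathrm{tot}}$ and hence the output; \cref{eq:admi_equality_0_equal} holds because $0_{ij}=0_{\mathcal{M}}$ is the identity of $\mathcal{M}$, so deleting a zero-weight entry does not change $w_{\mathrm{tot}}$. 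For the remaining types choose any components of the prescribed form; the resulting oracle function $\hat f$ then yields $f:=\hat f\rvert_{\mathcal{G}}$ in the subclass of the statement.

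Finally I would exhibit the failure of $\bp$-invariance. Pick $\overline{\mathbf{x}}$ assigning state $x_a$ to color $k$ and state $x_b$ to every other color of type $j$ (the states of the other colors are irrelevant), and set $\mathbf{x}:=P\overline{\mathbf{x}}\in\Delta_{\bp}^{\xset}$. Then the type-$j$ cells in state $x_a$ are precisely those of color $k$, so the $w_{\mathrm{tot}}$ seen by $c$ equals $[\mathbf{m}_c P]_k$ and that seen by $d$ equals $[\mathbf{m}_d P]_k$; therefore $f_c(\mathbf{x})=\pi_e([\mathbf{m}_c P]_k)\mathbf{v}\neq\pi_e([\mathbf{m}_d P]_k)\mathbf{v}=f_d(\mathbf{x})$ since $\mathbf{v}\neq0$. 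Thus $f(\mathbf{x})\notin\Delta_{\bp}^{\yset}$, contradicting $\bp$-invariance of $f$, so $\bp$ must be balanced. I expect the main obstacles to be the two routine-looking but load-bearing checks: that the constructed function genuinely satisfies \cref{eq:admi_equality} and \cref{eq:admi_equality_0_equal} (the edge-merging self-consistency, which rests on $P$ preserving colorwise $\|$-sums and on $0_{ij}$ being the monoid identity), and that the relation $E$ really makes the coefficient maps $\pi_w$ well defined and point-separating, i.e.\ that no two distinct elements of $\mathcal{M}$ agree in every coefficient, which is precisely where the choice $\mset=\mathbb{R}\times\wset$ with relations $E$ is used.
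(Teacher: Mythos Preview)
Your proposal is correct and follows essentially the same contrapositive route as the paper: find a color $k$ where $[\mathbf m_cP]_k\neq[\mathbf m_dP]_k$, extract a direction $e\in\wset$ on whose coefficient they differ, and use the prescribed oracle component $\hat f_i(x;\sum\lambda_w w,\overline x_k)=\lambda_e\mathbf v$ together with a state that isolates color $k$. Your write-up is in fact more careful than the paper's own proof in two respects: you explicitly formalize the coefficient maps $\pi_w$ as well-defined, point-separating monoid homomorphisms (which is the substance behind the paper's one-line ``they differ on the associated coefficient of at least one element $e\in\wset$''), and you verify the oracle axioms \cref{eq:admi_equality,eq:admi_equality_0_equal} for the constructed $\hat f_i$, which the paper leaves implicit.
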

\begin{proof}
	Assume $ \bp $ is not balanced. Then, there are \mbox{$ \bp(c) = \bp(d) $} such that $ \mathbf{m}_c P \neq \mathbf{m}_d P $. Consider $ k $ one of the colors in which they differ. That is, $ \left[\mathbf{m}_c P\right]_k \neq \left[\mathbf{m}_d P\right]_k $.\\
	Moreover, choose some state $ \mathbf{x}\in\Delta_{\bp}^{\xset} $, that is, $ \mathbf{x} = P\overline{\mathbf{x}} $, such that $ \overline{x}_k $ is different from all other entries of $ \overline{\mathbf{x}} $.\\
	An element of the monoid $ \mathcal{M} $ can be written as linear combination over a finite  subset of elements in $ \wset $, that is, $ \sum \lambda_w w $. If $ \left[\mathbf{m}_c P\right]_k \neq \left[\mathbf{m}_d P\right]_k $, then they differ on the associated coefficient of at least one element $ e\in\wset $. Then, there is an $ \hat{f}_i $ as defined above, sensitive to that element $ e $, so that
	$ 
	\hat{f}_{i}\left(x; \left[\mathbf{m}_c P\right]_k, \overline{x}_k \right) 
	=
	\lambda_e^c\mathbf{v}
	\neq
	\lambda_e^d\mathbf{v}
	=
	\hat{f}_{i}\left(x; \left[\mathbf{m}_d P\right]_k, \overline{x}_k \right) 
	$.\\
	Then, we have an $ f\in\mathcal{F}_{\mathcal{G}} $ and $ \overline{\mathbf{x}} $ such that $ f\left(P\overline{\mathbf{x}}\right) \not\in \Delta_{\bp}^{\yset} $. Therefore, the only way to not be able to find such an $ f $, is for $ \bp $ to be balanced.
\end{proof}
The next result is valid for systems in which the weight set allows for the existence of an annihilator. However, the monoid is almost free, in the sense that its congruence relation does not define further equivalence classes.
\begin{theorem}
	Consider non-trivial output vector spaces $ \{\yset_i\}_{i\in\tset} $ and assume that the edges are either on a free monoid $ \mathcal{M} = \left\langle\wset\vert\right\rangle $ or the result of adding an annihilator to a free monoid. That is, $ \mathcal{M} = \left\langle \{a\} \cup \wset\vert E\right\rangle $, with
	\begin{align*}
	E = \{w\|a = a, \quad \forall w\in \mathcal{M}\}
	\end{align*}
	where $ \wset $ is not necessarily countable.\\
	Consider the set of oracle components $ \hat{f}_i \ $, $ i\in\tset $, that are only dependent on neighbors that are in a specific state $ \overline{x}_k$, of the form 
	\begin{align*}
	\hat{f}_i\left(x;\sum w,\overline{x}_k\right) = \mathbf{v}\prod p(w), \quad \mathbf{v}\in \yset_{i}, \mathbf{v}\neq 0_{\yset_{i}}
	\end{align*}
	If the subset of oracle functions in $ \mathcal{F}_{\mathcal{G}} $ that are built with $ \hat{f}_i $'s of the type above is \mbox{$ \bp $-invariant}, then $ \bp $ is balanced in $ \mathcal{G} $. 
	\hfill$ \square $
\end{theorem}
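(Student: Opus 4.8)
The plan is to argue by contraposition, reusing the skeleton of the proof of \cref{thm:all_implies_balanced} but replacing its linear‑in‑the‑weights probe by a multiplicative one tailored to a free monoid. Suppose $\bp$ is not balanced. Then there are cells $c,d$ with $\bp(c)=\bp(d)$ but $\mathbf{m}_c P \neq \mathbf{m}_d P$; fix a color $k$ on which they disagree, so that $A := \left[\mathbf{m}_c P\right]_k$ and $B := \left[\mathbf{m}_d P\right]_k$ are two distinct elements of $\mathcal{M}$. As in \cref{thm:all_implies_balanced}, pick $\mathbf{x} = P\overline{\mathbf{x}}\in\Delta_{\bp}^{\xset}$ with $\overline{x}_k$ different from every other entry of $\overline{\mathbf{x}}$, so that the merged weight reaching cell $c$ from the neighbors in state $\overline{x}_k$ is exactly $A$ and the one reaching cell $d$ is exactly $B$; since $\bp(c)=\bp(d)$ forces $x_c=x_d=:x$ (here $i:=\type{}(c)=\type{}(d)$), it suffices to produce one oracle component $\hat f_i$ of the prescribed form $\hat f_i(x;\sum w,\overline x_k)=\mathbf v\prod p(w)$ with $\hat f_i(x;A,\overline x_k)\neq\hat f_i(x;B,\overline x_k)$. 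Evaluating the associated $f=\hat f\rvert_{\mathcal G}\in\mathcal F_{\mathcal G}$ (with the remaining components chosen arbitrarily among functions of the same form, which is irrelevant since only $c,d$ are inspected) then gives $f_c(\mathbf x)\neq f_d(\mathbf x)$, hence $f(P\overline{\mathbf x})\notin\Delta_{\bp}^{\yset}$, so $f$ is not $\bp$‑invariant, contradicting the hypothesis.

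It remains to choose $\mathbf v\neq 0_{\yset_i}$ and $p\colon\wset\to\mathbb{R}$ so that the component separates $A$ from $B$, which I would do by cases. If both $A$ and $B$ lie in the free part $\wset^{*}$, then as distinct finite multisets they differ in the multiplicity of some generator $e\in\wset$; writing $\alpha\neq\beta$ for those multiplicities, set $p(e):=2$, $p(w):=1$ for all other $w\in\wset$, and $p(a):=0$ if an annihilator is present. Then $\prod p(w)$ over $A$ equals $2^{\alpha}$ and over $B$ equals $2^{\beta}$; since $\yset_i$ is a vector space and $\mathbf v\neq 0_{\yset_i}$, the equality $\mathbf v\,2^{\alpha}=\mathbf v\,2^{\beta}$ would force $2^{\alpha}=2^{\beta}$, which is false for distinct non‑negative integers. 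If instead one of $A,B$ is the annihilator $a$ (at most one can be, as $A\neq B$), say $A=a$, take $p\equiv 1$ on $\wset$ and $p(a):=0$: the component then outputs $\mathbf v\cdot 0=0_{\yset_i}$ at $A$ and $\mathbf v$ at $B$ (an empty or all‑ones product), again distinct. In either case one verifies that the resulting $\hat f_i$ is a genuine oracle component: over a free monoid a multiset union of weights turns $\prod p$ into the product of the corresponding sub‑products (compatibility with edge merging, \cref{eq:admi_equality}), deleting a $0_{\mathcal M}$ weight leaves the empty product $1$ untouched (\cref{eq:admi_equality_0_equal}), and once the annihilator appears in a color the whole color collapses to $a$ and the output collapses to $0_{\yset_i}$, consistently.

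The only non‑routine point is the construction of the probe $p$: one must recognise that in a free commutative monoid ``distinct'' literally means ``differing as multisets'', isolate a single generator witnessing the difference, and observe that the monomial $t\mapsto t^{\alpha}$ already separates distinct exponents for $t=2$, together with the small bookkeeping needed to accommodate the absorbing element $a$. Everything else is the contrapositive argument already used for \cref{thm:all_implies_balanced}; in particular the step where $\overline x_k$ is chosen distinct from the other entries of $\overline{\mathbf x}$ is exactly the one used there.
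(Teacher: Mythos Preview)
Your argument is correct and follows the same contrapositive skeleton as the paper's proof: pick a color $k$ where the compressed rows differ, choose $\overline{\mathbf{x}}$ with $\overline{x}_k$ distinct from all other entries, and build an oracle component of the multiplicative form that separates the two monoid elements. The only difference is in how the probe $p$ is chosen. The paper assigns a \emph{distinct prime} to every element in the union of the two supports (and $p(0_{\mathcal M})=1$, $p(a)=0$), so that by unique factorisation the product $\prod p(w)$ encodes the entire multiset and hence separates any two unequal elements of the free part. You instead isolate a \emph{single} generator $e$ on which the multiplicities differ and set $p(e)=2$, $p=1$ elsewhere, reducing the comparison to $2^{\alpha}\neq 2^{\beta}$. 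Your construction is more elementary (no appeal to the fundamental theorem of arithmetic) and requires identifying only one witness; the paper's construction is uniform in that a single $p$ works for all pairs of elements supported on a given finite set. Either way the annihilator is handled by $p(a)=0$, and the verification that the resulting $\hat f_i$ is a valid oracle component is the same.
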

\begin{proof}
	Assume $ \bp $ is not balanced. Then, there are \mbox{$ \bp(c) = \bp(d) $} such that $ \mathbf{m}_c P \neq \mathbf{m}_d P $. Consider $ k $ one of the colors in which they differ. That is, $ \left[\mathbf{m}_c P\right]_k \neq \left[\mathbf{m}_d P\right]_k $.\\
	Moreover, choose some state $ \mathbf{x}\in\Delta_{\bp}^{\xset} $, that is, $ \mathbf{x} = P\overline{\mathbf{x}} $, such that $ \overline{x}_k $ is different from all other entries of $ \overline{\mathbf{x}} $.\\
	An element of the monoid $ \mathcal{M} $ can be written as a finite sum over elements in $ \wset $. Call the support, that is, the elements that appear at least once in $ \left[\mathbf{m}_c P\right]_k $ as $ w_1,\ldots,w_n $ and the support of $ \left[\mathbf{m}_d P\right]_k $ as $ v_1 ,\ldots, v_m $. We can, with some function $ p $, assign to each distinct element of the union of both sets, a distinct prime number, with the exception of the zero element $ 0_\mset $ and a possible annihilator $ a $, in which we have instead that $ p(0_\mathcal{M}) = 1 $ and $ p(a) = 0 $.
	If $ \left[\mathbf{m}_c P\right]_k \neq \left[\mathbf{m}_d P\right]_k $, then, there is an $ \hat{f}_i $ as defined above, so that $ 
	\hat{f}_{i}\left(x; \left[\mathbf{m}_c P\right]_k, \overline{x}_k \right) 
	\neq
	\hat{f}_{i}\left(x; \left[\mathbf{m}_d P\right]_k, \overline{x}_k \right) 
	$.\\
	Then, we have an $ f\in\mathcal{F}_{\mathcal{G}} $ and $ \overline{\mathbf{x}} $ such that $ f\left(P\overline{\mathbf{x}}\right) \not\in \Delta_{\bp}^{\yset} $. Therefore, the only way to not be able to find such an $ f $, is for $ \bp $ to be balanced.
\end{proof}
\begin{remark}
	Note that additional conditions on $ E $, that defines congruence relation of the monoid, might invalidate this approach, e.g., $ w_1 \| w_2 = w_3 \| w_4 $, where all weights are different.
	\hfill$ \square $
\end{remark}

\section{Quotients}\label{sec:quotients}
In this section we describe how the behavior of a network $ \mathcal{G} $ when evaluated at some polydiagonal $ \Delta_\bp$ for some balanced partition $ \bp $ can be described by a smaller network $ \mathcal{Q} $.
\begin{definition}
	Consider a network $ \mathcal{G} $ defined on a cell set $ \mathcal{C}_{\mathcal{G}} $ with a cell type partition $ \type{G} $ and an in-adjacency matrix $ M $. Take a partition $ \bp $ balanced on $ \mathcal{G} $.\\
	The \textbf{quotient network} $ \mathcal{Q} $ of $ \mathcal{G} $ over $ \bp $, denoted \mbox{$ \mathcal{Q} = \mathcal{G}/\bp $} is defined on a cell set \mbox{$ \mathcal{C}_{\mathcal{Q}} = \mathcal{C}_{\mathcal{G}}/\bp $} with a cell type partition \mbox{$ \type{Q} = \type{G}/\bp $} and an in-adjacency matrix $ Q $ given by \mbox{$ M P = P Q $}, where $ P $ represents $ \bp $.
	\hfill$ \square $
\end{definition}
\begin{definition}
	Consider networks $ \mathcal{G} $ and $\mathcal{Q} $ such that \mbox{$ \mathcal{Q} = \mathcal{G}/\bp $} and some \mbox{$ \mathcal{G} $-admissible} function $ f = \hat{f}\rvert_{\mathcal{G}} $, with $ \hat{f}\in\hat{\mathcal{F}}_{T} $.\\
	The \textbf{quotient function} $ g $ of $ f $ over the balanced partition $ \bp $, denoted $ g = f/\bp $ is given by $ g = \hat{f}\rvert_{\mathcal{Q}}  $.
	\hfill$ \square $
\end{definition}
\begin{lemma}
	The quotient function $ g = f/\bp $ is well-defined since is does not depend on the particular choice of oracle function. That is, for any $ \hat{f}^1,\hat{f}^2 \in\hat{\mathcal{F}}_{T} $
	\begin{align*}
	\hat{f}^1\rvert_{\mathcal{G}} = \hat{f}^2\rvert_{\mathcal{G}} \implies
	\hat{f}^1\rvert_{\mathcal{Q}} = \hat{f}^2\rvert_{\mathcal{Q}} 
	\end{align*}
	that is, a particular $ f\in\mathcal{F}_{\mathcal{G}} $ implies a unique $ g\in\mathcal{F}_{\mathcal{Q}} $. 
	\hfill$ \square $
\end{lemma}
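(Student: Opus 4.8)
The plan is to show that every component of $\hat f\rvert_{\mathcal{Q}}$ is obtained simply by restricting the corresponding component of $\hat f\rvert_{\mathcal{G}}$ to the polydiagonal $\Delta_{\bp}^{\xset}$; once this is established, the implication is immediate, because $\hat f^1\rvert_{\mathcal{G}} = \hat f^2\rvert_{\mathcal{G}}$ forces those restrictions to coincide as well. Throughout, let $P$ be the characteristic matrix of $\bp$ and let $Q$ be the in-adjacency matrix of $\mathcal{Q}$, so that $MP = PQ$ as in \cref{eq:balanced_cond_matrix_form}. Since $\bp \leq \type{G}$, all cells of a given color share a type, so $P$ preserves cell typing and is a legitimate partition matrix for the oracle equality \cref{eq:admi_equality} of \cref{defi:oracle}.

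First I would fix a cell $\bar c$ of $\mathcal{Q}$ (equivalently, a color of $\bp$) and a representative $c\in\mathcal{C}_{\mathcal{G}}$ with $\bp(c) = \bar c$, which exists because colors are nonempty. Since the $c$-th row of $P$ is the indicator row of the color $\bar c$, reading off the $c$-th row of the identity $MP = PQ$ gives $\mathbf{m}_c P = \mathbf{q}_{\bar c}$, where $\mathbf{m}_c$ and $\mathbf{q}_{\bar c}$ denote the rows of $M$ and $Q$ indexed by $c$ and $\bar c$ respectively; likewise the $c$-th coordinate of $P\overline{\mathbf{x}}$ equals $\overline{x}_{\bar c}$ for every $\overline{\mathbf{x}}$ in the state set of $\mathcal{Q}$.

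Next I would take an arbitrary $\hat f = (\hat f_i)_{i\in\tset}\in\hat{\mathcal{F}}_T$, write $f = \hat f\rvert_{\mathcal{G}}$ and $g = \hat f\rvert_{\mathcal{Q}}$, and apply \cref{eq:admi_equality} with the data $\mathbf{w} = \mathbf{m}_c$, $\mathbf{x} = P\overline{\mathbf{x}}$, $\overline{\mathbf{w}} = \mathbf{m}_c P = \mathbf{q}_{\bar c}$ and the partition matrix $P$. The conditions \cref{eq:condi_for_equal_admi} hold by the previous paragraph, so with $i = \type{G}(c) = \type{Q}(\bar c)$, using \cref{eq:dynamics_dependence} applied to $\mathcal{G}$ and to $\mathcal{Q}$,
\[
  f_c(P\overline{\mathbf{x}})
  = \hat f_i\!\left(\overline{x}_{\bar c};\, \mathbf{m}_c,\, P\overline{\mathbf{x}}\right)
  = \hat f_i\!\left(\overline{x}_{\bar c};\, \mathbf{q}_{\bar c},\, \overline{\mathbf{x}}\right)
  = g_{\bar c}(\overline{\mathbf{x}}).
\]
Running this for both $\hat f^1$ and $\hat f^2$: if $\hat f^1\rvert_{\mathcal{G}} = \hat f^2\rvert_{\mathcal{G}}$, then their $c$-th components agree on all of $\xset$, in particular on $\Delta_{\bp}^{\xset}$, whence $(\hat f^1\rvert_{\mathcal{Q}})_{\bar c} = (\hat f^2\rvert_{\mathcal{Q}})_{\bar c}$ as functions of $\overline{\mathbf{x}}$; since $\bar c$ ranges over all cells of $\mathcal{Q}$, this gives $\hat f^1\rvert_{\mathcal{Q}} = \hat f^2\rvert_{\mathcal{Q}}$. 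Finally, as every $f\in\mathcal{F}_{\mathcal{G}}$ equals $\hat f\rvert_{\mathcal{G}}$ for some $\hat f\in\hat{\mathcal{F}}_T$, setting $g := \hat f\rvert_{\mathcal{Q}}$ is unambiguous.

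The whole argument is bookkeeping; the two points worth checking with care are that $P$ genuinely preserves cell typing — which is precisely the standing hypothesis $\bp \leq \type{G}$ — and that the chosen representative $c$ yields both the right row, $\mathbf{m}_c P = \mathbf{q}_{\bar c}$, and the right state coordinate, $x_c = \overline{x}_{\bar c}$, both being direct consequences of $MP = PQ$ and of $P$ being the characteristic matrix of $\bp$. I do not expect any genuine obstacle beyond this.
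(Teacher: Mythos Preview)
Your proof is correct and follows essentially the same route as the paper: fix a color $\bar c=\bp(c)$, use $MP=PQ$ to get $\mathbf{m}_cP=\mathbf{q}_{\bar c}$, and then invoke the oracle equality \cref{eq:admi_equality} with $\mathbf{x}=P\overline{\mathbf{x}}$ to pass from $\hat f_i(\overline{x}_{\bar c};\mathbf{m}_c,P\overline{\mathbf{x}})$ to $\hat f_i(\overline{x}_{\bar c};\mathbf{q}_{\bar c},\overline{\mathbf{x}})$. Your write-up is more explicit about the bookkeeping (typing of $P$, choice of representative) and incidentally derives the identity $f_c(P\overline{\mathbf{x}})=g_{\bar c}(\overline{\mathbf{x}})$, which is exactly the content of the theorem immediately following this lemma in the paper.
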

\begin{proof}
	By assumption of $ \hat{f}^1\rvert_{\mathcal{G}} = \hat{f}^2\rvert_{\mathcal{G}} $ we have,
	\begin{align*}
	\hat{f}^1_{i}\left(\overline{x}_k; \mathbf{m}_c, P\overline{\mathbf{x}} \right)
	=
	\hat{f}^2_{i}\left(\overline{x}_k; \mathbf{m}_c, P\overline{\mathbf{x}} \right)
	\end{align*}
	for all $ c \in \mathcal{C}_{\mathcal{G}}$ with $ k = \bp(c) $.\\
	Since $ \mathbf{q}_k = \mathbf{m}_c P $, where the weight vector $ \mathbf{q}_k $ is the $ k^{th} $ row of $ Q $, an \mbox{in-adjacency} matrix of $ \mathcal{Q} $. This implies
	\begin{align*}
	\hat{f}^1_i(\overline{x}_k;\mathbf{q}_k,\overline{\mathbf{x}})
	=
	\hat{f}^2_i(\overline{x}_k;\mathbf{q}_k,\overline{\mathbf{x}})
	\end{align*}
	for any $ k\in\mathcal{C}_{\mathcal{Q}} $, with \mbox{$ i = \type{Q}(k) $}. That is $ \hat{f}^1\rvert_{\mathcal{Q}} = \hat{f}^2\rvert_{\mathcal{Q}} $.
\end{proof}
In \cref{sec:invariant_synchrony} it was shown that any admissible $ f\in\mathcal{F}_{\mathcal{G}} $ when evaluated on $ \Delta_\bp$ can be determined by a simpler function, related to it by \cref{eq:f_in_invariant_space}. The following results show that this is what connects the quotient network and quotient function.
\begin{theorem}
	Consider a network $ \mathcal{G} $ and a partition $ \bp $ balanced on it. Let $ f\in\mathcal{F}_{\mathcal{G}} $.
	The function obtained by constraining $ f $ to the polydiagonal $ \Delta_\bp$, is the quotient function $ g = f/\bp $, that is,
	\begin{align*}
	f(P\overline{\mathbf{x}}) = Pg(\overline{\mathbf{x}})
	\end{align*}
	\hfill$ \square $
\end{theorem}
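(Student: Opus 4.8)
The plan is to prove the identity componentwise over the cells of $\mathcal{G}$, reducing the whole statement to a single application of the self-consistency property \eqref{eq:admi_equality} of oracle components. Concretely, I would fix an oracle function $\hat f\in\hat{\mathcal{F}}_{T}$ with $f=\hat f\rvert_{\mathcal{G}}$, so that by definition the quotient function is $g=f/\bp=\hat f\rvert_{\mathcal{Q}}$ with $\mathcal{Q}=\mathcal{G}/\bp$, and then show that for each cell $c$ the $c$-th coordinate of $f(P\overline{\mathbf{x}})$ coincides with the $\bp(c)$-th coordinate of $g(\overline{\mathbf{x}})$.

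First I would fix $\overline{\mathbf{x}}$ and set $\mathbf{x}=P\overline{\mathbf{x}}\in\Delta_{\bp}^{\xset}$, where $P$ represents $\bp$. Pick an arbitrary $c\in\mathcal{C}_{\mathcal{G}}$, write $k=\bp(c)$ and $i=\type{G}(c)$; since $\bp\leq\type{G}$ the color $k$ has a well-defined type $i=\type{Q}(k)$ in the quotient. Because $\mathbf{x}=P\overline{\mathbf{x}}$ and row $c$ of $P$ has its unique unit entry in column $k$, the $c$-th coordinate of $\mathbf{x}$ is exactly $x_c=\overline{x}_k$. Next I would bring in the balanced condition written as $MP=PQ$: reading off row $c$, the left side is $\mathbf{m}_cP$ (the monoid sum of the entries $m_{cd}$ over cells $d$ with $\bp(d)=k$) while the right side, again because row $c$ of $P$ hits only column $k$, equals $\mathbf{q}_k$, the $k$-th row of $Q$; hence $\mathbf{m}_cP=\mathbf{q}_k$. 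Now the data $(x_c;\mathbf{m}_c,\mathbf{x})$ and $(\overline{x}_k;\mathbf{q}_k,\overline{\mathbf{x}})$ together with the partition matrix $P$ satisfy precisely the hypotheses \eqref{eq:condi_for_equal_admi} of \cref{defi:oracle}, the type-compatibility clause being automatic from $\bp\leq\type{G}$, so \eqref{eq:admi_equality} gives
\[
f_c(P\overline{\mathbf{x}})=\hat f_i(x_c;\mathbf{m}_c,P\overline{\mathbf{x}})=\hat f_i(\overline{x}_k;\mathbf{q}_k,\overline{\mathbf{x}}).
\]

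Finally I would identify the right-hand side with a component of the quotient function: since $g=\hat f\rvert_{\mathcal{Q}}$ and $Q$ is the in-adjacency matrix of $\mathcal{Q}$, \cref{defi:F_G_admissibility} applied to $\mathcal{Q}$ reads $g_k(\overline{\mathbf{x}})=\hat f_i(\overline{x}_k;\mathbf{q}_k,\overline{\mathbf{x}})$ with $i=\type{Q}(k)$. Thus $f_c(P\overline{\mathbf{x}})=g_{\bp(c)}(\overline{\mathbf{x}})=[Pg(\overline{\mathbf{x}})]_c$ for every $c\in\mathcal{C}_{\mathcal{G}}$, which is exactly $f(P\overline{\mathbf{x}})=Pg(\overline{\mathbf{x}})$. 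The only step that genuinely needs care — and which I would spell out explicitly — is the invocation of \eqref{eq:condi_for_equal_admi}: one must check that the single matrix $P$ witnessing $\mathbf{x}\in\Delta_{\bp}^{\xset}$ is simultaneously the matrix realizing $\mathbf{m}_cP=\mathbf{q}_k$ for all $c$ at once, which is precisely what $MP=PQ$ supplies, and that no attempt to compare states (or sum weights) across incompatible monoid types arises, which is guaranteed by $\bp\leq\type{G}$. Everything else is bookkeeping of rows and coordinates.
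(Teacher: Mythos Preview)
Your proof is correct and follows essentially the same route as the paper's: fix an oracle $\hat f$, work componentwise at a cell $c$ with $k=\bp(c)$, use $x_c=\overline{x}_k$ and $\mathbf{m}_cP=\mathbf{q}_k$ (from $MP=PQ$) to invoke \eqref{eq:admi_equality}, and identify the result with $g_k(\overline{\mathbf{x}})$. Your version is more explicit about the type-compatibility check and about reading off rows of $MP=PQ$, but the argument is the same.
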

\begin{proof}
	Consider some oracle function $ \hat{f}\in\hat{\mathcal{F}}_{T} $ such that $ f = \hat{f}\rvert_{\mathcal{G}} $ and $ g = \hat{f}\rvert_{\mathcal{Q}} $. Then, when \mbox{$ \mathbf{x}\in\Delta_\bp $}, that is $ \mathbf{x} = P\overline{\mathbf{x}} $, we have that
	\begin{align*}
	f_c(P\overline{\mathbf{x}}) &= \hat{f}_{i}\left(\overline{x}_k; \mathbf{m}_c, P\overline{\mathbf{x}} \right)\\
	& =\hat{f}_i(\overline{x}_k;\mathbf{q}_k,\overline{\mathbf{x}})\\
	& = g_k(\overline{\mathbf{x}})
	\end{align*}
	since $ \mathbf{q}_k = \mathbf{m}_c P $, for all $ c \in \mathcal{C}_{\mathcal{G}}$ with \mbox{$ i = \type{G}(c) $} and \mbox{$ k = \bp(c) $}. Therefore, the function \mbox{$ g =  (g_k)_{k\in\mathcal{C}_{\mathcal{Q}}}$} is related to $ f $ by equation \cref{eq:f_in_invariant_space}.
\end{proof}
\begin{exmp}
	Consider the given partition $ \mathcal{A} = \{ \{1,2\},\{3\} \} $ on the \textbf{CCN} of \cref{exmp:admiss_funcs} (\cref{fig:Gadmissibility}). One partition matrix of $ \mathcal{A} $ is 
	\begin{equation}P =
	\begin{bmatrix}
	1 & 0\\
	1 & 0\\
	0 & 1
	\end{bmatrix}
	\end{equation}
	in which each column identifies one of the colors of the partition.
	From this we obtain the product
	\begin{equation}MP =
	\begin{bmatrix}
	1 & 1\\
	1 & 1\\
	2 & 1
	\end{bmatrix}
	\label{eq:MP_example}
	\end{equation}
	Note that rows $ 1 $ and $ 2 $ are the same. That means that for any admissible $ f $ we have $ f_1(\mathbf{x}) = f_2(\mathbf{x}) $ when $ x_1 = x_2 $.\\
	Observe that this is in agreement with the functional form we wrote in \cref{eq:func_match1,eq:func_match2}.\\
	Since the rows of $ MP $ respect an equality relationship according to $ \mathcal{A} $, then $ \mathcal{A} $ is balanced and there is a quotient matrix $ Q $ that obeys the balanced condition \cref{eq:balanced_cond_matrix_form}. In fact, the quotient matrix $ Q  $ is
	\begin{equation}
	Q =
	\begin{bmatrix}
	1 & 1\\
	2 & 1
	\end{bmatrix}
	\end{equation}
	which is directly obtained from $ MP $ by compressing its rows according to $ \mathcal{A} $.\\
	The behavior of this \textbf{CCN} when $ x_1 = x_2 $ is then described by the smaller \textbf{CCN} given by the quotient matrix $ Q $ which is represented in \cref{fig:Gadmissibility_quotientb}.
	\begin{figure}[h]
		\centering
		\begin{subfigure}[t]{0.23\textwidth}
			\centering
			\begin{tikzpicture}[
node1/.style = {circle,minimum size=23,draw},
node2/.style = {circle,minimum size=23,draw,fill=white!75!black},
node3/.style = {circle,minimum size=23,draw,fill=white!50!black},
noderect/.style = {rectangle,minimum size=20,draw},
edge1/.style = {>=latex,thick},
edge2/.style = {>=latex,thick,blue},
edge3/.style = {>=latex,thick,red}
]
\node[node2] at (0,0)(n1){1};
\node[node2] at (2,0)(n2){2};
\node[node1] at (1,{sqrt(2)})(n3){3};

\DoubleLine{n1}{n3}{<-,edge1}{}{->,edge1}{}
\draw [->,edge1](n1) -- (n2);
\DoubleLine{n2}{n3}{<-,edge1}{}{->,edge1}{}

\draw [->,edge1] (n1) edge[loop left,looseness=5] (n1);
\draw [->,edge1] (n3) edge[loop left,looseness=5] (n3);



\end{tikzpicture} 
			\caption{Original}
			\label{fig:Gadmissibility_quotienta}
		\end{subfigure}
		\begin{subfigure}[t]{0.23\textwidth}
			\centering
			\begin{tikzpicture}[
node1/.style = {circle,minimum size=23,draw},
node2/.style = {circle,minimum size=23,draw,fill=white!75!black},
node3/.style = {circle,minimum size=23,draw,fill=white!50!black},
noderect/.style = {rectangle,minimum size=20,draw},
edge1/.style = {>=latex,thick},
edge2/.style = {>=latex,thick,blue},
edge3/.style = {>=latex,thick,red}
]
\node[node2] at (0,0)(n1){1,2};
\node[node1] at (0,{sqrt(2)})(n3){3};

\DoubleLine{n1}{n3}{->,edge1}{2}{<-,edge1}{}

\draw [->,edge1] (n1) edge[loop left,looseness=5] (n1);
\draw [->,edge1] (n3) edge[loop left,looseness=5] (n3);



\end{tikzpicture} 
			\caption{Quotient}
			\label{fig:Gadmissibility_quotientb}
		\end{subfigure}
		\caption{Color-coded network of \cref{fig:Gadmissibility} and its quotient over the balanced partition $ \{ \{1,2\},\{3\} \} $}
		\label{fig:Gadmissibility_quotient}
	\end{figure}
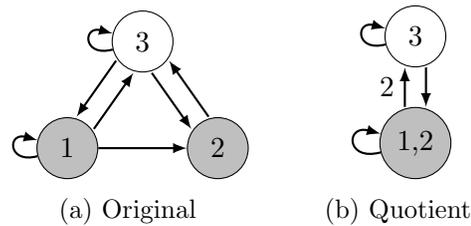
	The coloring is a way of representing the partition $ \mathcal{A} = \{ \{1,2\},\{3\} \} $ over which the quotient is done. Note that in both \cref{fig:Gadmissibility_quotienta,fig:Gadmissibility_quotientb} each gray cell receives one connection from a gray cell and one connection from a white cell. On the other hand, each white cell receives a connection from a white cell and two connections from a gray cell.
	The function $ g = f/\bp $ has the following structure
	\begin{align}
	g_{12}(\mathbf{x}) &= \hat{f}(x_{12};\begin{bmatrix} 1 & 1\end{bmatrix},\mathbf{x})\\
	g_3(\mathbf{x}) &= \hat{f}(x_3;\begin{bmatrix} 2 & 1\end{bmatrix},\mathbf{x})
	\end{align}
	where $ \hat{f}\in\hat{\mathcal{F}}_{T}  $ is any oracle function such that $ f = \hat{f}\rvert_{\mathcal{G}} $.
	\hfill$ \square $
\end{exmp}
\begin{remark}
	Note that finding a balanced partition from its graph representation or its matrix $ M $ is not obvious. See \cref{exmp:no_balanced}.
	\hfill$ \square $
\end{remark}
\begin{exmp}\label{exmp:no_balanced}
	Consider the following network illustrated in \cref{fig:CCN_chain}.
	\begin{figure}[h]
		\centering
		\begin{tikzpicture}[
node1/.style = {circle,minimum size=23,draw},
node2/.style = {circle,minimum size=23,draw,fill=white!75!black},
node3/.style = {circle,minimum size=23,draw,fill=white!50!black},
noderect/.style = {rectangle,minimum size=20,draw},
edge1/.style = {>=latex,thick},
edge2/.style = {>=latex,thick,blue},
edge3/.style = {>=latex,thick,red}
]
\node[node1] at (0,0)(n1){1};
\node[node1] at (1.5,0)(n2){2};
\node[node1] at (3,0)(n3){3};

\draw [->,edge1](n1) -- (n2);
\draw [->,edge1](n2) -- (n3);




\end{tikzpicture} 
		\caption{Chain \textbf{CCN}}
		\label{fig:CCN_chain}
	\end{figure}
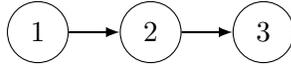
	Since cells $ 2 $ and $ 3 $ have the same type of input it might be tempting to think that $ \mathcal{A} = \{ \{1\},\{2,3\} \} $ should be balanced. Note, however, that the rows of the corresponding matrix $ MP $ \cref{eq:MP} do not respect the row equalities according to $ \mathcal{A} $, which means that it is not balanced.
	\begin{equation}MP=
	\begin{bmatrix}
	0 & 0\\
	1 & 0\\
	0 & 1
	\end{bmatrix}
	\label{eq:MP}
	\end{equation}
	Another way to see this is to color the cells according to the partition (\cref{fig:CCN_chain_unbalanced}) and see that cells with the same color do not have equivalent colored input sets. 
	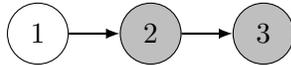
\begin{figure}[h]
		\centering
		\begin{tikzpicture}[
node1/.style = {circle,minimum size=23,draw},
node2/.style = {circle,minimum size=23,draw,fill=white!75!black},
node3/.style = {circle,minimum size=23,draw,fill=white!50!black},
noderect/.style = {rectangle,minimum size=20,draw},
edge1/.style = {>=latex,thick},
edge2/.style = {>=latex,thick,blue},
edge3/.style = {>=latex,thick,red}
]
\node[node1] at (0,0)(n1){1};
\node[node2] at (1.5,0)(n2){2};
\node[node2] at (3,0)(n3){3};

\draw [->,edge1](n1) -- (n2);
\draw [->,edge1](n2) -- (n3);




\end{tikzpicture} 
		\caption{Unbalanced coloring}
		\label{fig:CCN_chain_unbalanced}
	\end{figure}
	Note that cells $ 2 $ and $ 3 $ are both gray but one of them receives one edge from a white cell and the other receives one edge from a gray cell. Therefore, this coloring (partition) is not balanced. In fact, it can be easily seen that the only balanced partition of this network is the trivial one.
	\hfill$ \square $
\end{exmp}
The following result shows that the quotient operation is transitive.
\begin{lemma}\label{lemma:quotient_lemma_transitive}
	Consider networks $ \mathcal{G}_0 $, $ \mathcal{G}_1 $, $ \mathcal{G}_2 $ such that there are balanced partitions $ \bp_{01} $, $ \bp_{12} $ such that $ \mathcal{G}_1 = \mathcal{G}_0/{\bp_{01}} $ and $ \mathcal{G}_2 = \mathcal{G}_1/{\bp_{12}} $. Then, there is a partition $ \bp_{02} $ with \mbox{$ \bp_{01} \leq \bp_{02} $} such that \mbox{$ \mathcal{G}_2 = \mathcal{G}_0/{\bp_{02}} $}. Furthermore, \mbox{$ P_{02} = P_{01}P_{12} $} where the matrices represent the corresponding indexed partitions.
	\hfill$ \square $
\end{lemma}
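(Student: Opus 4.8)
The plan is to take the composite partition $\bp_{02}:=\bp_{12}\circ\bp_{01}$, equivalently $P_{02}:=P_{01}P_{12}$, and verify the three things the statement asks for. Write $M_\ell$ for the in-adjacency matrix and $\mathcal{T}_\ell$ for the cell type partition of $\mathcal{G}_\ell$. Unfolding the hypotheses via the definition of the quotient network and \cref{eq:balanced_cond_matrix_form} gives
\begin{align*}
M_0 P_{01} = P_{01} M_1, \qquad \mathcal{T}_1 = \mathcal{T}_0/\bp_{01},\\
M_1 P_{12} = P_{12} M_2, \qquad \mathcal{T}_2 = \mathcal{T}_1/\bp_{12}.
\end{align*}
First I would note that a product of characteristic matrices is characteristic (each row of $P_{01}$ selects one row of $P_{12}$, which has a single nonzero entry), so $P_{02}$ genuinely represents a partition $\bp_{02}$ of $\mathcal{C}_{\mathcal{G}_0}$. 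Because $\bp_{01}(c)=\bp_{01}(d)$ forces rows $c$ and $d$ of $P_{01}$, hence of $P_{02}$, to agree, \cref{eq:refinement_def} yields $\bp_{01}\le\bp_{02}$ with $P_{\bp_{01}\bp_{02}}=P_{12}$, exactly the situation of \cref{exmp:pariti_vec_func}. Since $\bp_{12}$ is balanced on $\mathcal{G}_1$ it refines $\mathcal{T}_1$, and combining this with $\mathcal{T}_1\circ\bp_{01}=\mathcal{T}_0$ shows $\bp_{02}\le\mathcal{T}_0$, so $\bp_{02}$ is a legitimate candidate for quotienting $\mathcal{G}_0$.

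The heart of the argument is the matrix identity
\begin{align*}
M_0 P_{02}
&= M_0(P_{01}P_{12}) = (M_0 P_{01})P_{12} = (P_{01}M_1)P_{12}\\
&= P_{01}(M_1 P_{12}) = P_{01}(P_{12}M_2) = (P_{01}P_{12})M_2 = P_{02} M_2 ,
\end{align*}
where the two hypothesized balanced equations are used in the middle. Each reassociation is legitimate: the products appearing are row replications ($P\,\cdot$) or column $\|_{ij}$-sums ($\cdot\,P$) in the sense of the paragraph following \cref{exmp:pariti_vec_func}, and associativity and commutativity of the monoid operations make replicating rows then summing columns agree with summing columns then replicating rows, and make iterated column sums independent of the grouping. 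By \cref{eq:balanced_cond_matrix_form} this identity says exactly that $\bp_{02}$ is balanced on $\mathcal{G}_0$; moreover the equation $M_0 P_{02}=P_{02}Q$ determines $Q$ uniquely since $P_{02}$ has a single nonzero entry per row, so $M_2$ is \emph{the} quotient matrix.

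It remains to recognize $\mathcal{G}_0/\bp_{02}$ as $\mathcal{G}_2$ by matching the three data that define a network. Its cell set is $\mathcal{C}_{\mathcal{G}_0}/\bp_{02}$, which is the color set of $\bp_{12}=\bp_{02}/\bp_{01}$, i.e.\ $\mathcal{C}_{\mathcal{G}_2}$; its in-adjacency matrix is $M_2$ by the previous paragraph; and its cell type partition is $\mathcal{T}_0/\bp_{02}$. From $\mathcal{T}_2\circ\bp_{12}=\mathcal{T}_1$ and $\mathcal{T}_1\circ\bp_{01}=\mathcal{T}_0$ we get $\mathcal{T}_2\circ\bp_{02}=\mathcal{T}_2\circ\bp_{12}\circ\bp_{01}=\mathcal{T}_0$, and since $\mathcal{T}_0/\bp_{02}$ is by definition the unique partition on the colors of $\bp_{02}$ whose composition with $\bp_{02}$ returns $\mathcal{T}_0$, we conclude $\mathcal{T}_0/\bp_{02}=\mathcal{T}_2$. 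Hence $\mathcal{G}_0/\bp_{02}=\mathcal{G}_2$, with $P_{02}=P_{01}P_{12}$ as claimed. I expect the only mildly delicate point to be this final bookkeeping (the color sets and composed type partitions lining up); the reassociations in the displayed chain are routine once one recalls how characteristic matrices multiply against matrices of monoid entries.
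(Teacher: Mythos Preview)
Your argument is correct and follows essentially the same route as the paper: define $P_{02}=P_{01}P_{12}$, chain the two balanced identities to get $M_0P_{02}=P_{02}M_2$, and check that the cell types compose as $\mathcal{T}_0=(P_{01}P_{12})\mathcal{T}_2$. You are simply more explicit than the paper about the bookkeeping (that $P_{02}$ is a genuine partition matrix, that $\bp_{01}\le\bp_{02}\le\mathcal{T}_0$, and that the quotient matrix is uniquely determined), which the paper leaves implicit.
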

\begin{proof}
	Name the cell type partitions of the networks $ {\type{}}_0 $, $ {\type{}}_1 $, $ {\type{}}_2 $ accordingly. Then,
	\begin{align*}
	{\type{}}_0 = P_{01}{\type{}}_1 ,\quad {\type{}}_1 = P_{12}{\type{}}_2
	\end{align*}
	implies
	\begin{align*}
	{\type{}}_0 = (P_{01}P_{12}){\type{}}_2
	\end{align*}
	Let $ M_0 $, $ M_1 $, $ M_2 $ represent the in-adjacency matrices of the networks. From being balanced we know that
	\begin{align*}
	M_0 P_{01}  = P_{01} M_1, \quad	M_1 P_{12} = P_{12} M_2 
	\end{align*}
	by multiplying the first equality by $ P_{12} $ on the right
	\begin{align*}
	M_0 (P_{01}P_{12})  &= P_{01}(M_1 P_{12}) \\
	&= (P_{01} P_{12} )M_2 
	\end{align*}
	which means that
	\begin{align*}
	\mathcal{G}_2 = \left(\mathcal{G}_0/{\bp_{01}}\right)/{\bp_{12}} = \mathcal{G}_0/{\bp_{02}}
	\end{align*}
	and that \mbox{$ P_{02} = P_{01}P_{12} $}.
\end{proof}
The next results shows how two different quotients of the same network can be related by one of them being a quotient of the other
\begin{lemma}\label{lemma:quotient_lemma_leq}
	Consider networks $ \mathcal{G}_0 $, $ \mathcal{G}_1 $, $ \mathcal{G}_2 $ such that there are balanced partitions $ \bp_{01} $, $ \bp_{02} $ such that $ \mathcal{G}_1 = \mathcal{G}_0/{\bp_{01}} $ and $ \mathcal{G}_2 = \mathcal{G}_0/{\bp_{02}} $. If $ \bp_{01} \leq \bp_{02} $ there is a balanced partition $ \bp_{12} $ such that \mbox{$ \mathcal{G}_2 = \mathcal{G}_1/{\bp_{12}} $}. Furthermore, \mbox{$ P_{02} = P_{01}P_{12} $} where the matrices $ P $ represent the correspondingly indexed partitions.
	\hfill$ \square $
\end{lemma}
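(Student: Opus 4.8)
The plan is to build the bridging partition explicitly from the refinement $\bp_{01}\le\bp_{02}$, push the balanced equation for $\bp_{02}$ through the factorization of partition matrices, and then cancel $P_{01}$ on the left.

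First I would invoke the partition-representation machinery of \cref{sec:new_formalism} (the existence, when $\mathcal{A}\le\mathcal{B}$, of a partition matrix $P_{\mathcal{A}\mathcal{B}}$ with $P_{\mathcal{B}}=P_{\mathcal{A}}P_{\mathcal{A}\mathcal{B}}$): since $\bp_{01}\le\bp_{02}$, there is a partition matrix $P_{12}$, representing a partition $\bp_{12}$ on the set of colors of $\bp_{01}$ — equivalently on $\mathcal{C}_{\mathcal{G}_1}$, because $\mathcal{C}_{\mathcal{G}_1}=\mathcal{C}_{\mathcal{G}_0}/\bp_{01}$ — such that $P_{02}=P_{01}P_{12}$. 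This already delivers the last assertion of the statement. For the type bookkeeping, from ${\type{}}_0=P_{01}{\type{}}_1$ and ${\type{}}_0=P_{02}{\type{}}_2=P_{01}(P_{12}{\type{}}_2)$ I would cancel $P_{01}$ on the left to get ${\type{}}_1=P_{12}{\type{}}_2$; in particular $\bp_{12}\le{\type{}}_1$, so $\bp_{12}$ is a legitimate partition of $\mathcal{G}_1$ in the sense required by the balanced definition.

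Next, writing $M_0,M_1,M_2$ for the in-adjacency matrices, the quotient definitions give $M_0P_{01}=P_{01}M_1$ and $M_0P_{02}=P_{02}M_2$. Substituting $P_{02}=P_{01}P_{12}$ into the second equation and using the first,
\begin{align*}
P_{01}\bigl(M_1P_{12}\bigr)=(M_0P_{01})P_{12}=M_0(P_{01}P_{12})=M_0P_{02}=P_{02}M_2=P_{01}\bigl(P_{12}M_2\bigr).
\end{align*}
Cancelling $P_{01}$ on the left yields $M_1P_{12}=P_{12}M_2$, which is precisely the matrix form \cref{eq:balanced_cond_matrix_form} of the balanced condition for $\bp_{12}$ on $\mathcal{G}_1$, with quotient matrix $M_2$. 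Hence $\bp_{12}$ is balanced on $\mathcal{G}_1$, and the quotient network $\mathcal{G}_1/\bp_{12}$ has in-adjacency matrix $M_2$ and cell-type partition ${\type{}}_1/\bp_{12}={\type{}}_2$ (the latter from ${\type{}}_1=P_{12}{\type{}}_2$ above), i.e. $\mathcal{G}_1/\bp_{12}=\mathcal{G}_2$.

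The one point I expect to need a sentence of care is the repeated left-cancellation of $P_{01}$ (and of $P_{12}$). It is valid because left-multiplication by a partition matrix merely replicates rows indexed by the colors of the partition, and since every color is nonempty that row map is surjective; thus $P_{01}A=P_{01}B$ forces $A=B$ entrywise. Crucially this argument uses no ring structure, so it applies even though the entries of $M_0,M_1,M_2$ lie in the monoids $\{\mathcal{M}_{ij}\}_{i,j\in\tset}$ rather than a field. Everything else is routine manipulation of the balanced equations together with the factorization $P_{02}=P_{01}P_{12}$.
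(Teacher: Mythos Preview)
Your argument is correct and follows essentially the same route as the paper's own proof: obtain $P_{12}$ from the refinement $\bp_{01}\le\bp_{02}$, substitute $P_{02}=P_{01}P_{12}$ into the balanced equation $M_0P_{02}=P_{02}M_2$, and left-cancel $P_{01}$ (the paper phrases this as ``$P_{01}$ has full column rank'') to get $M_1P_{12}=P_{12}M_2$ and the analogous type equation. Your extra sentence justifying left-cancellation over the monoid entries is a nice addition; note that you only ever cancel $P_{01}$, so the parenthetical ``and of $P_{12}$'' is not actually needed.
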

\begin{proof}
	Name the cell type partitions of the networks $ {\type{}}_0 $, $ {\type{}}_1 $, $ {\type{}}_2 $ accordingly. Then,
	\begin{align*}
	{\type{}}_0 = P_{01}{\type{}}_1 ,\quad {\type{}}_0 = P_{02}{\type{}}_2
	\end{align*}
	from $  \bp_{01} \leq \bp_{02} $ we know that there is a partition matrix $ P_{12} $ such that 
	\begin{align*}
	P_{02} = P_{01}P_{12}
	\end{align*}
	replacing $ P_{02} $ in the second equality
	\begin{align*}
	{\type{}}_0 &=  P_{01}P_{12}{\type{}}_2
	\end{align*}
	using the first equality
	\begin{align*}
	P_{01}{\type{}}_1 &= P_{01}(P_{12}{\type{}}_2)
	\end{align*}
	Since $ P_{01} $ has full column rank, it can be canceled on the left
	\begin{align*}
	{\type{}}_1 &= P_{12}{\type{}}_2
	\end{align*}
	Let $ M_0 $, $ M_1 $, $ M_2 $ represent the in-adjacency matrix of the networks. From being balanced we know that
	\begin{align*}
	M_0 P_{01}  =  P_{01} M_1, \quad	M_0 P_{02}  =  P_{02} M_2 
	\end{align*}
	replacing $ P_{02} $ in the second equality
	\begin{align*}
	(M_0 P_{01})P_{12} &= P_{01} P_{12} M_2 
	\end{align*}
	which by the first balanced equality
	\begin{align*}
	(P_{01} M_1)P_{12} &= P_{01} P_{12} M_2  
	\end{align*}
	now, canceling $ P_{01} $ on the left is allowed
	\begin{align*}
	M_1 P_{12}   &= P_{12} M_2 
	\end{align*}
	which means that $ \bp_{12} $, represented by $ P_{12} $ is balanced on $ \mathcal{G}_1 $ and
	\begin{align*}
	\mathcal{G}_2 = \mathcal{G}_1/{\bp_{12}}
	\end{align*}
\end{proof}

\section{Lattice of balanced partitions}\label{sec:lattice}
This sections presents the properties of $ \Lambda_{\mathcal{G}} $, which denotes the set of all balanced partitions of a given network $ \mathcal{G} $.\\
In~\cite{stewart2007lattice} it is shown that $ \Lambda_{\mathcal{G}} $ forms a lattice under the sub-partition operation $ \leq $ as described in \cref{eq:refinement_def}. That is, it forms a partially ordered set such that for any $ \bp_1, \bp_2\in\Lambda_{\mathcal{G}} $, there also exist in $ \Lambda_{\mathcal{G}} $ partitions $ \bp_1 \vee \bp_2 $ and $ \bp_1 \wedge \bp_2 $ which are the \textbf{least upper bound} or \textbf{join} and the \textbf{greatest lower bound} or \textbf{meet}.
This means that there is a \textbf{maximal} ($ \top $) and a \textbf{minimal} ($ \bot $) balanced partitions, the last of which we already know to be the trivial partition.\\
The coarsest invariant refinement (\textbf{CIR}) algorithm, is a polynomial-time algorithm that was first developed in~\cite{aldis2008polynomial} with the goal of finding the maximal balanced partition. Recently, in~\cite{neuberger2019invariant} it was noted that this algorithm does more that just finding the maximal balanced partition. In fact, given any input partition, it outputs the greatest balanced partition that is finer ($ \leq $) than the input one. Therefore, the maximal partition is given by $ \top = cir(\type{})$. 
We now show that the meet ($ \wedge $) and join ($ \vee $) are also defined.
\begin{lemma}
	For every pair $ \bp_1, \bp_2 \in\Lambda_{\mathcal{G}} $ there is a least upper bound or join $ \bp_3 \in\Lambda_{\mathcal{G}} $ denoted by \mbox{$ \bp_3 = \bp_1 \vee \bp_2$}.
	\hfill $ \square $
\end{lemma}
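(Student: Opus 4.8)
\section*{Proof proposal}

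The plan is to show that the least upper bound of $\bp_1$ and $\bp_2$ computed in the (complete) lattice of \emph{all} partitions of $\mathcal{C}$ is already balanced; it then follows formally that this partition is the join inside $\Lambda_{\mathcal{G}}$ as well. This is precisely the algebraic route that sidesteps the order-reversing correspondence between balanced partitions and invariant subspaces.

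First I would set $\bp_3 := \bp_1 \vee \bp_2$, the join taken in the lattice of all partitions of $\mathcal{C}$ (this lattice is complete, so $\bp_3$ exists); concretely, $\bp_3(c) = \bp_3(d)$ holds iff there is a finite chain $c = c_0,\dots,c_n = d$ in which each consecutive pair is identified by $\bp_1$ or by $\bp_2$. Since $\bp_1 \leq \type{}$ and $\bp_2 \leq \type{}$, the type partition $\type{}$ is an upper bound of $\{\bp_1,\bp_2\}$, hence $\bp_3 \leq \type{}$, so $\bp_3$ is an admissible candidate for membership in $\Lambda_{\mathcal{G}}$. Moreover, from $\bp_1 \leq \bp_3$ and $\bp_2 \leq \bp_3$ together with the partition-matrix calculus recalled in \cref{sec:new_formalism}, there are partition matrices $P_{13},P_{23}$ with $P_3 = P_1 P_{13} = P_2 P_{23}$, where $P_i$ represents $\bp_i$.

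The core step is to check that $\bp_3$ is balanced. Introduce the relation $\approx$ on $\mathcal{C}$ given by $c \approx d \iff \mathbf{m}_c P_3 = \mathbf{m}_d P_3$; as the kernel of the map $c \mapsto \mathbf{m}_c P_3$ it is an equivalence relation. If $\bp_1(c) = \bp_1(d)$, then balancedness of $\bp_1$ gives $\mathbf{m}_c P_1 = \mathbf{m}_d P_1$, and right-multiplying this identity by $P_{13}$ yields $\mathbf{m}_c P_3 = \mathbf{m}_d P_3$, i.e.\ $c \approx d$; the same argument with $P_{23}$ gives $\bp_2(c) = \bp_2(d) \implies c \approx d$. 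Thus $\approx$ contains both $\bp_1$ and $\bp_2$ (as relations), and since $\bp_3$ is the smallest equivalence relation with that property, $\bp_3(c) = \bp_3(d)$ implies $\mathbf{m}_c P_3 = \mathbf{m}_d P_3$. This is exactly \cref{eq:balanced_cond}, so $\bp_3 \in \Lambda_{\mathcal{G}}$. It remains to see that $\bp_3$ is the least upper bound \emph{within} $\Lambda_{\mathcal{G}}$: any balanced $\bp$ with $\bp_1 \leq \bp$ and $\bp_2 \leq \bp$ is an upper bound of $\{\bp_1,\bp_2\}$ in the lattice of all partitions, so $\bp_3 = \bp_1\vee\bp_2 \leq \bp$; hence $\bp_3 = \bp_1\vee\bp_2$ in $\Lambda_{\mathcal{G}}$.

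There is no genuinely hard step here; the points needing care are bookkeeping. One must confirm that every matrix product above is meaningful: $P_3 = P_iP_{i3}$ is a product of $\{0,1\}$-matrices, and $\mathbf{m}_c P_3$ only ever combines monoid entries lying in a common row block (because $\bp_3 \leq \type{}$), so it stays inside a single monoid. The step that genuinely replaces the subspace-duality argument is the observation that $\approx$ is an equivalence relation wedged between $\bp_1,\bp_2$ and their partition-lattice join, which becomes immediate once the factorizations $P_3 = P_iP_{i3}$ are in hand.
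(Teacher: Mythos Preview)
Your proposal is correct and follows essentially the same algebraic route as the paper: both take $\bp_3$ to be the chain-generated join in the lattice of all partitions and then verify balancedness by showing that each link of a chain preserves the $\bp_3$-colored in-degree vector. Your use of the factorization $P_3 = P_iP_{i3}$ and the auxiliary equivalence $\approx$ is a slightly slicker packaging of the paper's explicit color-by-color sums $\sum_{e\in B} w_{ce} = \sum_{e\in B} w_{de}$, and you additionally spell out the checks $\bp_3 \leq \type{}$ and least-upper-bound-in-$\Lambda_{\mathcal{G}}$ that the paper leaves implicit.
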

\begin{proof}
	Any partition $ \bp $ that is simultaneous coarser than $ \bp_1 $ and $ \bp_2 $ has to obey
	\begin{align*}
	\begin{cases}
	\bp_1(c) &= \bp_1(d)\\
	&\text{or}\\
	\bp_2(c) &= \bp_2(d)
	\end{cases}
	\implies
	\bp(c) = \bp(d)
	\end{align*}
	For such partition, any chain of cells \mbox{$ c=c_1,\ldots,c_k = d $} such that either $ \bp_1(c_{i}) = \bp_1(c_{i+1})$ or $ \bp_2(c_{i}) = \bp_2(c_{i+1}) $ implies that $ \bp(c) = \bp(d) $. The finest such partition $ \bp_3 $ is the one such that $ \bp_3(c) = \bp_3(d) $ if and only if there is such a chain.\\
	We now show that $ \bp_3 $ is balanced. Choose any two colors $ A, B\in \bp_3 $. Since $ \bp_1, \bp_2 $ are both sub-partitions of $ \bp_3 $ there are colors $ b_1^{1},\ldots,b_{k_1}^{1} \in \bp_1$ and $ b_1^{2},\ldots,b_{k_2}^{2} \in \bp_2$ such that $ B =  \bigcup_{i=1}^{k_1} b_{i}^{1} = \bigcup_{i=1}^{k_2} b_{i}^{2}$. For any $ c,d\in A $ such that $ \bp_i(c) = \bp_i(d) $ for some $ i\in\{1,2\} $ we have that
	\begin{align*}
	\sum_{e\in b_{j}^{i}} w_{ce}
	=
	\sum_{e\in b_{j}^{i}} w_{de}
	\quad
	\forall j\in \{1,\ldots,k_i\}
	\end{align*}
	which implies
	\begin{align*}
	\sum_{e\in B} w_{ce}
	=
	\sum_{e\in B} w_{de}
	\end{align*}
	since for each link of the chain this value is preserved, it is also preserved across the whole chain and therefore the whole color $ A $. This applies to every pair of colors $ A,B\in\bp $ which means that $ \bp $ is balanced.
\end{proof}
\begin{lemma}
	For every pair $ \bp_1, \bp_2 \in\Lambda_{\mathcal{G}} $ there is a greatest lower bound or meet \mbox{$ \bp_{3} \in\Lambda_{\mathcal{G}} $} denoted by \mbox{$ \bp_3 = \bp_1 \wedge \bp_2$}.
	\hfill $ \square $
\end{lemma}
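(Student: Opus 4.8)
For every pair $\bp_1, \bp_2 \in \Lambda_{\mathcal{G}}$ there is a greatest lower bound or meet $\bp_3 \in \Lambda_{\mathcal{G}}$ denoted $\bp_3 = \bp_1 \wedge \bp_2$.

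The plan is to exhibit $\bp_3$ explicitly as the coarsest balanced partition below the common refinement of $\bp_1$ and $\bp_2$, and then use the results already established for the \textbf{CIR} algorithm to conclude. First I would define $\bp_1 \cap \bp_2$ to be the partition whose colors are the nonempty intersections of colors, i.e.\ the (unique) finest partition that is $\leq \bp_1$ and $\leq \bp_2$; concretely $(\bp_1 \cap \bp_2)(c) = (\bp_1 \cap \bp_2)(d)$ iff $\bp_1(c) = \bp_1(d)$ and $\bp_2(c) = \bp_2(d)$. This partition need not itself be balanced, so the candidate for the meet is $\bp_3 := cir(\bp_1 \cap \bp_2)$, using the fact recalled from~\cite{neuberger2019invariant} that for any input partition the \textbf{CIR} algorithm returns the greatest balanced partition that is finer than the input.

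Next I would verify the two defining properties of a greatest lower bound in the poset $(\Lambda_{\mathcal{G}}, \leq)$. For the lower-bound property: $\bp_3 \leq \bp_1 \cap \bp_2 \leq \bp_1$ and similarly $\bp_3 \leq \bp_2$, so $\bp_3$ is a balanced partition below both. For the greatest property: let $\bp$ be any balanced partition with $\bp \leq \bp_1$ and $\bp \leq \bp_2$. Then $\bp \leq \bp_1 \cap \bp_2$ by the universal property of the common refinement (if $\bp(c) = \bp(d)$ then $\bp_1(c) = \bp_1(d)$ and $\bp_2(c) = \bp_2(d)$, hence $(\bp_1 \cap \bp_2)(c) = (\bp_1 \cap \bp_2)(d)$). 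Since $\bp$ is balanced and finer than $\bp_1 \cap \bp_2$, and $\bp_3 = cir(\bp_1 \cap \bp_2)$ is by definition the greatest such balanced partition, we get $\bp \leq \bp_3$. Therefore $\bp_3$ is the meet.

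The main obstacle is essentially bookkeeping rather than mathematical depth: I must make sure the characterization of \textbf{CIR} quoted just before this lemma — that $cir(\mathcal{A})$ is the \emph{greatest} balanced partition $\leq \mathcal{A}$ — is used exactly as stated, so that no independent argument about balancedness of $\bp_3$ is needed. One should also note that the common refinement $\bp_1 \cap \bp_2$ automatically satisfies $\bp_1 \cap \bp_2 \leq \type{}$ (since $\bp_1 \leq \type{}$), so that it is a legitimate input to the \textbf{CIR} algorithm and $\bp_3 \leq \type{}$ as well. An alternative, fully self-contained route would avoid invoking \textbf{CIR} and instead define $\bp_3$ as the join (in the already-established lattice structure) of all balanced partitions lying below $\bp_1 \cap \bp_2$ — this set is nonempty since it contains the trivial partition $\bot$, and its join is balanced by the previous lemma; one then checks it lies below $\bp_1$ and $\bp_2$ and dominates every balanced lower bound. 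Either way the proof is short; I would present the \textbf{CIR} version since that machinery is freshly at hand.
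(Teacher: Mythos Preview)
Your approach is correct and essentially identical to the paper's: both form the common refinement (the paper calls it $\bp_{12}$, you call it $\bp_1\cap\bp_2$) and then set $\bp_3 = cir(\bp_{12})$, invoking the property of \textbf{CIR} quoted just before the lemma. One small slip to fix: you describe $\bp_1\cap\bp_2$ as the \emph{finest} partition that is $\leq\bp_1$ and $\leq\bp_2$, but it is the \emph{coarsest} such partition (the finest would be the trivial one); your explicit block description and the way you use its universal property are nonetheless correct, and in fact your verification of the two meet properties is more explicit than the paper's own proof.
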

\begin{proof}
	Any partition $ \bp $ that is simultaneous finer than $ \bp_1 $ and $ \bp_2 $ has to obey
	\begin{align*}
	\bp(c) = \bp(d)
	\implies
	\begin{cases}
	\bp_1(c) = \bp_1(d)\\
	\bp_2(c) = \bp_2(d)
	\end{cases}
	\end{align*}
	call $ \bp_{12} $ the coarsest such partition, created by making the implication into an equivalence. Since such a partition is unique, $ \bp_3 $ is given by $ \bp_3 = cir(\bp_{12} ) $. Note that there was no reason to believe that $ \bp_{12} $ was itself balanced and can be verified empirically not to be.
\end{proof}
The following result relates the lattice of a network with the lattice of one of its quotients.
\begin{lemma}\label{lemma:super_partition_thm}
	Consider networks $\mathcal{G} $ and $ \mathcal{Q} $ related by \mbox{$ \mathcal{Q} = \mathcal{G}/\bp $}. Then, there is a one-to-one correspondence between the elements of $ \Lambda_{\mathcal{Q}} $ and the elements of $ \Lambda_{\mathcal{G}} $ that are coarser than $ \bp $. This relation is given as $ P_{\mathcal{G}} = P_{\bp} P_{\mathcal{Q}}  $ where $ P_{\mathcal{G}} $, $ P_{\mathcal{Q}} $ represent partitions in $ \Lambda_{\mathcal{G}}$, $ \Lambda_{\mathcal{Q}}$ respectively. Therefore, we say that $ \Lambda_{\mathcal{Q}} = \Lambda_{\mathcal{G}}/\bp $.
	\hfill $ \square $
\end{lemma}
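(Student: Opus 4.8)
The plan is to realize the correspondence explicitly through the map $\Phi$ that sends a balanced partition $\bp_{\mathcal{Q}}$ of $\mathcal{Q}$, with characteristic matrix $P_{\mathcal{Q}}$, to the partition $\bp_{\mathcal{G}}$ of $\mathcal{C}_{\mathcal{G}}$ whose characteristic matrix is $P_{\mathcal{G}} := P_{\bp}\,P_{\mathcal{Q}}$, and then to verify that $\Phi$ is a well-defined bijection onto the set $\{\bp_{\mathcal{G}}\in\Lambda_{\mathcal{G}}\colon \bp\leq\bp_{\mathcal{G}}\}$, with inverse the quotient-of-partitions construction $\bp_{\mathcal{G}}\mapsto\bp_{\mathcal{G}}/\bp$ discussed in \cref{sec:new_formalism}. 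The whole argument is essentially an assembly of the two quotient lemmas \cref{lemma:quotient_lemma_transitive,lemma:quotient_lemma_leq}, together with the elementary facts that a product of characteristic matrices is again a characteristic matrix and that $P_{\bp}$ has full column rank and is therefore left-cancellable.

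For well-definedness, observe first that $P_{\mathcal{G}}=P_{\bp}P_{\mathcal{Q}}$ is a characteristic matrix, hence represents a genuine partition $\bp_{\mathcal{G}}$ of $\mathcal{C}_{\mathcal{G}}$, and the mere existence of this factorization is the statement $\bp\leq\bp_{\mathcal{G}}$. To see that $\bp_{\mathcal{G}}$ refines the cell-type partition $\type{G}$ and is balanced on $\mathcal{G}$, apply \cref{lemma:quotient_lemma_transitive} with $\mathcal{G}_0=\mathcal{G}$, $\mathcal{G}_1=\mathcal{Q}$, $\mathcal{G}_2=\mathcal{Q}/\bp_{\mathcal{Q}}$, $\bp_{01}=\bp$ and $\bp_{12}=\bp_{\mathcal{Q}}$: it produces a partition $\bp_{02}$ that is balanced on $\mathcal{G}$, is coarser than $\bp$, and satisfies $P_{02}=P_{\bp}P_{\mathcal{Q}}=P_{\mathcal{G}}$, so $\bp_{\mathcal{G}}=\bp_{02}\in\Lambda_{\mathcal{G}}$. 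Thus $\Phi$ lands in the asserted target set, and the displayed relation $P_{\mathcal{G}}=P_{\bp}P_{\mathcal{Q}}$ holds by construction.

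Injectivity is immediate from left-cancellation of $P_{\bp}$: once a fixed column indexing of every partition is chosen, $P_{\bp}P_{\mathcal{Q}}^{1}=P_{\bp}P_{\mathcal{Q}}^{2}$ forces $P_{\mathcal{Q}}^{1}=P_{\mathcal{Q}}^{2}$. For surjectivity, let $\bp_{\mathcal{G}}\in\Lambda_{\mathcal{G}}$ with $\bp\leq\bp_{\mathcal{G}}$; then $\mathcal{G}_2:=\mathcal{G}/\bp_{\mathcal{G}}$ is defined, and \cref{lemma:quotient_lemma_leq} applied with $\mathcal{G}_0=\mathcal{G}$, $\mathcal{G}_1=\mathcal{Q}=\mathcal{G}/\bp$, $\bp_{01}=\bp$ and $\bp_{02}=\bp_{\mathcal{G}}$ produces a partition $\bp_{\mathcal{Q}}$ that is balanced on $\mathcal{Q}$ and satisfies $P_{\mathcal{G}}=P_{\bp}P_{\mathcal{Q}}$, i.e. $\Phi(\bp_{\mathcal{Q}})=\bp_{\mathcal{G}}$. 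This establishes the bijection, and since both directions are mediated by the single identity $P_{\mathcal{G}}=P_{\bp}P_{\mathcal{Q}}$, the correspondence is order-preserving both ways, which is the precise sense of $\Lambda_{\mathcal{Q}}=\Lambda_{\mathcal{G}}/\bp$.

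I expect the only real friction to be bookkeeping rather than mathematics: partitions are unordered objects whereas their matrix representatives carry an arbitrary ordering of colors, so the definition of $\Phi$ and the injectivity step must be read modulo column permutations (equivalently, fix an indexing of the colors of every partition in sight and check the conclusions are independent of that choice). One should also confirm, when invoking \cref{lemma:quotient_lemma_transitive,lemma:quotient_lemma_leq}, that the cell-type refinements $\bp_{\mathcal{G}}\leq\type{G}$ and $\bp_{\mathcal{Q}}\leq\type{Q}$ come out correctly, but these lemmas already encode exactly that information in their statements and proofs (through the factorizations of the cell-type partitions and left-cancellation of $P_{\bp}$), so no additional work is required.
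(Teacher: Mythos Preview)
Your proposal is correct and follows essentially the same route as the paper: invoke \cref{lemma:quotient_lemma_transitive} to go from $\Lambda_{\mathcal{Q}}$ to the balanced partitions of $\mathcal{G}$ coarser than $\bp$, and \cref{lemma:quotient_lemma_leq} for the reverse direction, with the relation $P_{\mathcal{G}}=P_{\bp}P_{\mathcal{Q}}$ mediating both. Your write-up is in fact a bit more careful than the paper's, which leaves the injectivity/inverse check and the column-permutation bookkeeping implicit.
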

\begin{proof}
	\Cref{lemma:quotient_lemma_transitive} shows that for any partition \mbox{$ \bp_{12}\in\Lambda_{\mathcal{Q}} $} there is a partition \mbox{$ \bp_{02}\in\Lambda_{\mathcal{G}} $}, coarser than $ \bp $, such that \mbox{$ P_{02} =  P_{\bp} P_{12}  $}. Conversely, \cref{lemma:quotient_lemma_leq} shows that for any \mbox{$ \bp_{02}\in\Lambda_{\mathcal{G}} $} that is coarser than $ \bp $ there is a partition \mbox{$ \bp_{12}\in\Lambda_{\mathcal{Q}} $} such that they are related to each other in the same way as before. 
\end{proof}
The problem of trying to get an exhaustive list of the elements of a lattice $ \Lambda $ can be potentially intractable. Note that given a partition, it is easy and efficient to verify whenever it is balanced. However, even for relatively small networks, the set of all possible partitions is simply too large to do an exhaustive search on it. The number of partitions on a set is given by the Bell numbers $ B_n $ (also called exponential numbers), referenced in the online database OEIS (The On-Line Encyclopedia of Integer Sequences by the code A000110~\cite{OEISA000110}. A method to reduce the search space is described in~\cite{neuberger2019invariant}. The algorithm, called `\textbf{SPLIT} and \textbf{CIR}', uses the idea that instead of testing all $\left( \prod_{i=1}^{r} (B_{k_i} -1)\right) $ sub-partitions of $ \top $, one can apply the \textbf{CIR} method to its $\left( \sum_{i=1}^{r} 2^{k_i-1} \right)$ immediate descendants and then repeat, finding all the balanced partitions layer by layer.\\
The specific gains of this approach are difficult to analyze and can be highly dependent of the particular network of study (e.g., the number of layers in the lattice $ \Lambda_{\mathcal{G}} $). A worst-case scenario (all $ B_n $ partitions balanced) evaluation could be too pessimistic and a bad metric to decide if it would be an approach of interest for application in a real-world network.\\
This worst-case is scenario is a shortcoming that is common to all approaches that try to find all the balanced partitions in an explicit exhaustive way. Consider for example an all-to-all uniform-connection network of size $ n $, with a single cell type. In this case, even for a relatively small $ n $, the lattice would be too large to enumerate its ($ B_n $) elements or draw any schematic, yet it can be described in one simple sentence (i.e., ``every partition is balanced"). Ironically, the simplest network, whose lattice is the easiest to determine corresponds exactly to the worst-case scenario of such methods.

\section{CIR algorithm improvement}\label{sec:CIR_algorithm_improvement}
In this section we describe our improvement of the \textbf{CIR} algorithm that works with general weight sets and has a worst-case complexity of $ \mathbf{O}(\vert\mathcal{C}\vert^3) $ in the case of a dense graph and $ \mathbf{O}(\vert\mathcal{C}\vert^2) $ in the sparse case.\\
Consider a network represented by a matrix $ M $ together with an initial partition $ \mathcal{A}_0\leq\type{} $ represented by matrix $ P_0 $, of which we want to find the coarsest refinement (e.g., make $ P_0 $ the characteristic matrix of $ \type{}$ if the goal is to find the maximal balanced partition $ \top $).
\subsection{Method}
The idea of this algorithm is to start with the initial partition $ \mathcal{A}_0 $ and progressively refine it in a conservative manner.
That is, given a partition $ \mathcal{A}_i $, we construct a partition \mbox{$ \mathcal{A}_{i+1} \leq \mathcal{A}_i $} such that any balanced partition finer than $ \mathcal{A}_i $ is also finer than $ \mathcal{A}_{i+1} $. We create $ \mathcal{A}_{i+1} $ by taking each color of $ \mathcal{A}_i $ and splitting its cells according to whenever their corresponding rows in $  M P_i  $ match or not.
If \mbox{$ \mathcal{A}_{i+1} = \mathcal{A}_i $} the algorithm has converged and we found $ \mathcal{A}_{i} = cir(\mathcal{A}_0) $, otherwise we continue iterating.
\begin{lemma}
	According to the described iterative method, any balanced partition finer than $ \mathcal{A}_i $ is also finer than $ \mathcal{A}_{i+1} $.
	\hfill $ \square $
\end{lemma}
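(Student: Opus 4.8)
The plan is to unwind the two definitions and then exploit the factorization of characteristic matrices through a finer partition. Fix a balanced partition $\bp$ with $\bp \leq \mathcal{A}_i$, and let $P_i$, $P_{\bp}$ be characteristic matrices of $\mathcal{A}_i$, $\bp$. Recall that $\mathcal{A}_{i+1}$ is obtained from $\mathcal{A}_i$ by declaring $\mathcal{A}_{i+1}(c)=\mathcal{A}_{i+1}(d)$ precisely when $\mathcal{A}_i(c)=\mathcal{A}_i(d)$ \emph{and} the $c$-th and $d$-th rows of $M P_i$ coincide, i.e. $\mathbf{m}_c P_i = \mathbf{m}_d P_i$. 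So to prove $\bp \leq \mathcal{A}_{i+1}$ I must show that $\bp(c)=\bp(d)$ forces both of these conditions.

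The first condition is immediate: $\bp \leq \mathcal{A}_i$ is by definition the implication $\bp(c)=\bp(d)\implies \mathcal{A}_i(c)=\mathcal{A}_i(d)$. For the second, the substantive step, I would use that $\bp \leq \mathcal{A}_i$ supplies a characteristic ``merge'' matrix $P_{\bp\mathcal{A}_i}$ with $P_i = P_{\bp} P_{\bp\mathcal{A}_i}$, and hence $M P_i = (M P_{\bp}) P_{\bp\mathcal{A}_i}$; here the regrouping is legitimate because right multiplication by a characteristic matrix merely forms column sums using the monoid operations, $P_{\bp}P_{\bp\mathcal{A}_i}$ is again a characteristic matrix, and associativity/commutativity of the monoid sums make the two groupings agree entrywise. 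Since $\bp$ is balanced, \cref{eq:balanced_cond} gives $\mathbf{m}_c P_{\bp} = \mathbf{m}_d P_{\bp}$ whenever $\bp(c)=\bp(d)$, i.e. rows $c$ and $d$ of $M P_{\bp}$ are equal. Multiplying this row equality on the right by $P_{\bp\mathcal{A}_i}$ yields $\mathbf{m}_c P_i = \mathbf{m}_d P_i$, which is exactly the second condition. Therefore $\mathcal{A}_{i+1}(c)=\mathcal{A}_{i+1}(d)$, so every balanced $\bp \leq \mathcal{A}_i$ satisfies $\bp \leq \mathcal{A}_{i+1}$.

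I do not expect a genuine obstacle. The only points that need care are the \emph{direction} of the factorization (the coarser matrix $P_i$ factors as $P_{\bp}$ times a merge matrix, not the reverse) and the bookkeeping that $M P_i$, $M P_{\bp}$, and $(M P_{\bp}) P_{\bp\mathcal{A}_i}$ are products of monoid-valued matrices with characteristic matrices, for which one should explicitly invoke the earlier observation that such products are well-defined and that the grouping $M(P_{\bp}P_{\bp\mathcal{A}_i}) = (M P_{\bp})P_{\bp\mathcal{A}_i}$ is valid. Optionally I would close by remarking that this lemma is exactly what makes the iteration ``conservative'': no balanced refinement of $\mathcal{A}_0$ is ever discarded, so the fixed point reached by the procedure is indeed $cir(\mathcal{A}_0)$, the coarsest balanced partition refining $\mathcal{A}_0$.
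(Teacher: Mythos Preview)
Your proof is correct and follows essentially the same idea as the paper's: the paper argues contrapositively that if rows $c$ and $d$ of $MP_i$ differ on some column $k$, then since the $k$-th color of $\mathcal{A}_i$ is a union of colors of any finer balanced $\bp$, the parts cannot all match and hence $c,d$ are separated in $\bp$; you phrase the same step directly via the factorization $P_i = P_{\bp}P_{\bp\mathcal{A}_i}$ and right-multiplying the balanced row equality $\mathbf{m}_cP_{\bp}=\mathbf{m}_dP_{\bp}$. The mathematical content is identical, with your version being somewhat more explicit about the matrix bookkeeping.
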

\begin{proof}
	Assume that there are cells $ c $, $ d $ such that \mbox{$ \mathcal{A}_i(c) = \mathcal{A}_i(d) $} but rows $ c $ and $ d $ of $ M P_i  $ do not match perfectly (assume on $ k^{th} $ column). Note that the $ k^{th} $ color of $ \mathcal{A}_i $ will correspond to either a color, or a union of colors of any balanced partition finer than $ \mathcal{A}_i $. This means that no matter what refinement happens, the cells $ c $ and $ d $ will have no chance of having the same color in a balanced refinement, since if the sum of the parts is different, it will not be possible for the parts themselves to match. Therefore, any balanced partition finer than $ \mathcal{A}_i $ is also finer than $ \mathcal{A}_{i+1} $.	
\end{proof}
\begin{remark}
	Note that if at a certain iteration no more refinement happens, that means that the balanced condition \cref{eq:balanced_cond} has been achieved and we found $ cir(\mathcal{A}_0) $.
	\hfill $ \square $
\end{remark}
\begin{lemma}
	The iterative procedure always converges in at most \mbox{$ \vert\mathcal{C}\vert - rank(\mathcal{A}_0) $} iterations.
	\hfill $ \square $
\end{lemma}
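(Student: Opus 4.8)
The plan is to track the \emph{rank} (number of colors) of the partitions $\mathcal{A}_0\geq\mathcal{A}_1\geq\mathcal{A}_2\geq\cdots$ produced by the iteration and to exploit two structural facts: a proper refinement strictly increases the rank, and the rank of any partition on $\mathcal{C}$ is at most $\vert\mathcal{C}\vert$. Everything then follows from a short counting argument.

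First I would note that, by the way $\mathcal{A}_{i+1}$ is constructed from $\mathcal{A}_i$ (we only ever \textbf{split} colors according to whether rows of $MP_i$ match, never merge them), we have $\mathcal{A}_{i+1}\leq\mathcal{A}_i$ for all $i$, so the sequence is a descending chain in the refinement order. At each step exactly one of two things happens: either $\mathcal{A}_{i+1}=\mathcal{A}_i$, in which case the algorithm has converged and, as observed above, $\mathcal{A}_i=cir(\mathcal{A}_0)$; or $\mathcal{A}_{i+1}<\mathcal{A}_i$ strictly, in which case at least one color of $\mathcal{A}_i$ has been partitioned into two or more colors of $\mathcal{A}_{i+1}$, whence $rank(\mathcal{A}_{i+1})\geq rank(\mathcal{A}_i)+1$ by integrality.

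Then I would iterate this inequality: if the first $m$ iterations are all strict refinements, then $rank(\mathcal{A}_m)\geq rank(\mathcal{A}_0)+m$; since every partition on $\mathcal{C}$ has rank at most $\vert\mathcal{C}\vert$, this forces $m\leq\vert\mathcal{C}\vert-rank(\mathcal{A}_0)$. Hence at most $\vert\mathcal{C}\vert-rank(\mathcal{A}_0)$ proper refinement steps can occur. To finish, I would observe that if the count ever reaches $m=\vert\mathcal{C}\vert-rank(\mathcal{A}_0)$, then $rank(\mathcal{A}_m)=\vert\mathcal{C}\vert$, i.e. $\mathcal{A}_m$ is the trivial partition whose colors are singletons; splitting singletons changes nothing, so $\mathcal{A}_{m+1}=\mathcal{A}_m$ and convergence is detected at that very step. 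Either way, the procedure converges in at most $\vert\mathcal{C}\vert-rank(\mathcal{A}_0)$ iterations.

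There is no genuinely hard step here; the only points that need to be stated carefully are the two structural facts used above — that the refinement step only ever splits colors (so the chain is descending and each proper step raises the rank by at least one) and that the trivial partition is a fixed point of the refinement step — both of which are immediate from the description of how $\mathcal{A}_{i+1}$ is obtained from the row pattern of $MP_i$.
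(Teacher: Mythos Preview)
Your proposal is correct and follows essentially the same approach as the paper: the paper's proof also argues that at each iteration either the rank strictly increases or the algorithm halts at a balanced partition, and that in the worst case the rank increases by one until the trivial partition is reached, giving at most $\vert\mathcal{C}\vert - rank(\mathcal{A}_0)$ iterations. Your write-up is simply a more detailed version of this same counting argument.
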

\begin{proof}
	Note that in each iteration, either the rank of the partition increases or the algorithm stops because a balanced partition was achieved. In the worst case scenario, the rank increases by one until the trivial partition is reached. Therefore, the algorithm always converges in at most \mbox{$ \vert\mathcal{C}\vert - rank(\mathcal{A}_0) $} iterations.
\end{proof}
Since this algorithm always converges, this shows by construction that $ cir(\mathcal{A}_0) $ exists. That is, for any partition $ \mathcal{A}_0 $, there is a unique balanced partition $ \mathcal{A}_i = cir(\mathcal{A}_0) $ such $ \mathcal{A}_i \leq \mathcal{A}_0 $ and $ \bp \leq \mathcal{A}_i $ for any balanced partition $ \bp $ such that $ \bp \leq \mathcal{A}_0 $.
\subsection{Efficient implementation and cost analysis}
Note that a partition matrix $ P $ on a set of cells $ \mathcal{C} $ can be efficiently represented by a vector of size $ \vert\mathcal{C}\vert $ as seen in \cref{exmp:pariti_vec_func}.
Calculating the product $ M P_i $ consists on summing ($ \| $) certain elements of $ M $ according to the pattern described in $ P_i $.
To compare rows of $ M P_i $ previous works considered a quadratic cost which was the bottleneck of the algorithm. If the appropriate data structure (hash table) is used, such operation is of the order $ \mathbf{O}(rank(\mathcal{A}_i) )$.
A pseudo-code description of the algorithm implementation is presented in \cref{alg:cir_alg,alg:cir_iter}.
\begin{algorithm}
	\begin{algorithmic}
		\STATE $ M \gets \text{CCN matrix}$
		\STATE $ p_0 \gets \text{initial partition vector} $
		\STATE $ r_0 \gets \text{rank of } p_0 $
		\STATE $ p_{new} \gets p_{0}$
		\STATE $ r_{new} \gets r_{0}$
		\REPEAT
		\STATE $ p_{old} \gets p_{new} $
		\STATE $ r_{old} \gets r_{new} $
		\STATE $ (p_{new}, r_{new}) \gets cir\_iteration(M,p_{old}, r_{old}) $
		\UNTIL $ r_{new} \mathrel{=}= r_{old}$
	\end{algorithmic}
	\caption{CIR algorithm}
	\label{alg:cir_alg}
\end{algorithm}
\begin{algorithm}
	\begin{algorithmic}
		\STATE $ M \gets \text{CCN matrix}$
		\STATE $ p_{old} \gets \text{previous partition vector} $
		\STATE $ r_{old} \gets \text{rank of } p_{old} $
		\STATE $ p_{new} \gets \text{new partition vector} $
		\STATE $ r_{new} \gets 0 $
		\FOR{$r = 1\colon\vert\mathcal{C}\vert$}
		\STATE {$ v \gets \text{zero vector of size } r_{old} $}
		\FOR{$ c \colon (r,c) \in \mathcal{E} $}
		\STATE{$ v[p_{old}(c)] \gets v[p_{old}(c)] + M(r,c) $}
		\ENDFOR
		\STATE $ s \gets vec2string([p_{old}(r), v] )$
		\STATE $ value \gets hash\_table.find(s) $ 
		\IF{value NOT\_FOUND}
		\STATE $ \mathrel{r_{new}} \gets \mathrel{r_{new}} + 1 $
		\STATE $ p_{new}[r] \gets r_{new} $
		\STATE $ hash\_table.insert(s,r_{new}) $
		\ELSE
		\STATE $ p_{new}[r] \gets value $
		\ENDIF
		\ENDFOR
	\end{algorithmic}
	\caption{CIR iteration}
	\label{alg:cir_iter}
\end{algorithm}
\begin{lemma}
	This implementation of the \textbf{CIR} algorithm leads to a worst-case complexity of $ \mathbf{O}(\vert\mathcal{C}\vert^3) ) $.
	\hfill $ \square $
\end{lemma}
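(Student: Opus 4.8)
The plan is to charge each invocation of the \texttt{cir\_iteration} procedure (\cref{alg:cir_iter}) a cost of $\mathbf{O}(\vert\mathcal{C}\vert^2)$ and then multiply by the number of outer iterations, which an earlier lemma already bounds by $\vert\mathcal{C}\vert - rank(\mathcal{A}_0)\le\vert\mathcal{C}\vert$. Throughout I will work under the standard assumptions that one monoid operation $\|_{ij}$ and one vector read or write cost $\mathbf{O}(1)$, and, following the observation made just before \cref{alg:cir_alg}, that a \texttt{find} or \texttt{insert} into the hash table on a key of length $\ell$ costs $\mathbf{O}(\ell)$ (in expectation, for a suitably chosen hash function; a deterministic $\mathbf{O}(\ell)$ bound can be had instead by replacing the hash table with a prefix tree indexed by the pair $(p_{old}(r), v)$).

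First I would dissect one iteration. Its body is the outer loop over the $\vert\mathcal{C}\vert$ rows. For a fixed row $r$: allocating the zero vector $v$ of length $r_{old}$ costs $\mathbf{O}(r_{old}) = \mathbf{O}(\vert\mathcal{C}\vert)$; the inner loop over the in-neighbors of $r$ performs $\vert\mathcal{N}^-(r)\vert$ constant-time accumulation steps into $v$; forming the key $s$ from the length-$(r_{old}+1)$ vector $[p_{old}(r), v]$ costs $\mathbf{O}(\vert\mathcal{C}\vert)$; and the \texttt{find} and the possible \texttt{insert} on that key each cost $\mathbf{O}(r_{old}) = \mathbf{O}(\vert\mathcal{C}\vert)$. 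Summing over all rows, the inner loops contribute $\sum_r \vert\mathcal{N}^-(r)\vert = \vert\mathcal{E}\vert$ in total, while the remaining per-row work contributes $\mathbf{O}(\vert\mathcal{C}\vert)$ each, for a grand total of $\mathbf{O}(\vert\mathcal{C}\vert^2 + \vert\mathcal{E}\vert)$ per iteration; since $\vert\mathcal{E}\vert \le \vert\mathcal{C}\vert^2$, this is $\mathbf{O}(\vert\mathcal{C}\vert^2)$.

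Then I would assemble the pieces. The initialization in \cref{alg:cir_alg} (reading $M$, reading the initial partition, computing $r_0$) costs $\mathbf{O}(\vert\mathcal{C}\vert^2)$, and the \texttt{REPEAT} loop runs $\mathbf{O}(\vert\mathcal{C}\vert)$ times by the convergence lemma, since each pass either strictly increases the rank of the current partition or meets the termination condition that the rank stops changing. Hence the total cost is $\mathbf{O}(\vert\mathcal{C}\vert)\cdot\mathbf{O}(\vert\mathcal{C}\vert^2) = \mathbf{O}(\vert\mathcal{C}\vert^3)$.

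The one point that needs care is the hash-table accounting: one must commit to the $\mathbf{O}(\ell)$-per-access model on length-$\ell$ keys, so that hashing and the final key comparison are actually paid for, rather than the naive $\mathbf{O}(1)$ model, and then either settle for an expected-time statement or swap in a deterministic prefix tree. A secondary point worth making explicit is the $\mathbf{O}(1)$-per-monoid-operation hypothesis — for monoids whose $\|$ is itself costly the bound picks up that factor — together with the remark that in the sparse regime one represents $v$ and its key sparsely, with size $\mathbf{O}(\vert\mathcal{N}^-(r)\vert)$ rather than $\mathbf{O}(\vert\mathcal{C}\vert)$, which brings one iteration down to $\mathbf{O}(\vert\mathcal{C}\vert + \vert\mathcal{E}\vert)$ and the total to $\mathbf{O}(\vert\mathcal{C}\vert^2)$, matching the claim stated at the start of this section.
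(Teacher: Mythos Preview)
Your argument is correct and follows essentially the same decomposition as the paper's proof: bound one \texttt{cir\_iteration} by $\mathbf{O}(\vert\mathcal{E}\vert + \vert\mathcal{C}\vert\cdot rank(\mathcal{A}_i)) = \mathbf{O}(\vert\mathcal{C}\vert^2)$ and multiply by the $\mathbf{O}(\vert\mathcal{C}\vert)$ iteration bound from the convergence lemma. Your accounting is in fact more careful than the paper's, which treats hash-table \texttt{find}/\texttt{insert} as $\mathbf{O}(1)$ and then separately charges the key length; your explicit $\mathbf{O}(\ell)$-per-access model and the remarks on expected versus deterministic bounds and on the sparse regime are welcome refinements rather than a different route.
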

\begin{proof}
	In each iteration we are summing ($\| $) a total of $ \vert\mathcal{E}\vert $ entries of $ M $. The lookup and insertion in an hash table are fast operations with complexity $ \mathbf{O}(1) $ which are each executed $ \vert\mathcal{C}\vert $ times. The $ \vert\mathcal{C}\vert $ strings that are used as key in the hash table have size proportional to $ rank(\mathcal{A}_i) $.\\ The complexity of the $ i^{th} $ iteration is then \mbox{$ \mathbf{O}(\vert\mathcal{E}\vert + \vert\mathcal{C}\vert + \vert\mathcal{C}\vert rank(\mathcal{A}_i)) $}. In the worst-case scenario the rank increases by one and the number of iterations is $ \mathbf{O}(\vert\mathcal{C}\vert) $. This implies total worst-case complexity of $ \mathbf{O}(\vert\mathcal{C}\vert^3) $.
\end{proof}
\begin{remark}
	In practice, the number of iterations seems to be much lower than $ \vert \mathcal{C}\vert $ which means that this is a very pessimistic upper bound for the complexity.
	\hfill $ \square $
\end{remark}
We illustrate this algorithm with the following example.
\begin{exmp}\label{exmp:cir_example}
	Consider the network illustrated in \cref{fig:cir_exmp_net} with cell type partition $ \type{} = \{\{1,2,5,6\},\{3,4\}\} $. The edge weight monoid is the same as in the parallel of resistors (\cref{exmp:resistor}). We assume that the arrows all represent values of $ 30 $. Note that the zero of the monoid is $ 0_{\mathcal{M}} = \infty $.
	\begin{figure}[h]
		\centering
		\begin{tikzpicture}[
node1/.style = {circle,minimum size=23,draw},
node2/.style = {circle,minimum size=23,draw,fill=white!75!black},
node3/.style = {circle,minimum size=23,draw,fill=white!50!black},
noderect/.style = {rectangle,minimum size=20,draw},
edge1/.style = {>=latex,thick},
edge2/.style = {>=latex,thick,blue},
edge3/.style = {>=latex,thick,red}
]
\node[node1] at (0,0)(n1){1};
\node[node1] at (3,0)(n2){2};

\node[noderect] at (1.5,{-1})(n3){3};
\node[noderect] at (1.5,{-1-1.5})(n4){4};

\node[node1] at (0,{-2-1.5})(n5){5};
\node[node1] at (3,{-2-1.5})(n6){6};

\DoubleLine{n1}{n2}{<-,edge1}{}{->,edge1}{}
\draw [->,edge1] (n1) edge[loop left,looseness=5] (n1);
\draw [->,edge1] (n2) edge[loop right,looseness=5] (n2);
\draw [->,edge1] (n3) -- (n2);
\DoubleLine{n1}{n3}{<-,edge1}{}{->,edge1}{}
\DoubleLine{n3}{n4}{<-,edge1}{}{->,edge1}{}
\draw [->,edge1] (n5) -- (n4);
\draw [->,edge1] (n6) -- (n4);
\draw [->,edge1] (n5) -- (n3);
\draw [->,edge1] (n5) -- (n6);
\draw [->,edge1] (n1) -- (n5);
\draw [->,edge1] (n2) -- (n6);
\draw [->,edge1] (n5) edge[loop left,looseness=5] (n5);




\end{tikzpicture} 
		\caption{Network of \cref{exmp:cir_example} illustrating the CIR algorithm}
		\label{fig:cir_exmp_net}
	\end{figure}
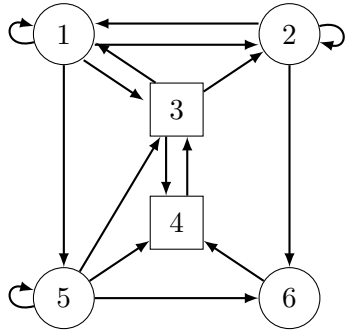
	This is represented by the matrix in \cref{eq:cir_exmp_net}.
	\begin{equation} M = 
	\begin{bmatrix}
	30     & 30     & 30     & \infty & \infty & \infty\\
	30     & 30     & 30     & \infty & \infty & \infty\\
	30     & \infty & \infty & 30     & 30     & \infty\\
	\infty & \infty & 30     & \infty & 30     & 30    \\
	30     & \infty & \infty & \infty & 30     & \infty\\
	\infty & 30     & \infty & \infty & 30     & \infty    
	\end{bmatrix}
	\label{eq:cir_exmp_net}
	\end{equation}
	If we are interested in finding the top partition $ \top $, we initialize $ \mathcal{A}_0 = \type{} $. This partition can be represented by the matrix $ P_0 $
	\begin{equation*} \mathcal{A}_0 = 
	\begin{bmatrix}
	1\\
	1\\
	2\\
	2\\
	1\\
	1  
	\end{bmatrix}
	\quad
	P_0 = 
	\begin{bmatrix}
	1 & 0\\
	1 & 0\\
	0 & 1\\
	0 & 1\\
	1 & 0\\
	1 & 0 
	\end{bmatrix}
	\end{equation*}
	Applying the algorithm we get 
	\begin{equation*}
	\left[\begin{array}{c|c} 
	\mathcal{A}_0 & M P_0
	\end{array}\right]
	=
	\left[\begin{array}{c|cc} 
	1 & 15 & 30\\
	1 & 15 & 30\\
	2 & 15 & 30\\
	2 & 15 & 30\\
	1 & 15 & \infty\\
	1 & 15 & \infty  
	\end{array}\right] 
	\end{equation*}
	whose row comparison determines the next iteration $ \mathcal{A}_1 $ and $ P_1 $
	\begin{equation*} \mathcal{A}_1 = 
	\begin{bmatrix}
	1\\
	1\\
	2\\
	2\\
	3\\
	3  
	\end{bmatrix}
	\quad
	P_1 = 
	\begin{bmatrix}
	1 & 0 & 0\\
	1 & 0 & 0\\
	0 & 1 & 0\\
	0 & 1 & 0\\
	0 & 0 & 1\\
	0 & 0 & 1
	\end{bmatrix}
	\end{equation*}
	Applying the same procedure
	\begin{equation*} 
	\left[\begin{array}{c|c} 
	\mathcal{A}_1 & M P_1
	\end{array}\right]
	= 
	\left[\begin{array}{c|ccc} 
	1 & 15     & 30     & \infty\\
	1 & 15     & 30     & \infty\\
	2 & 30     & 30     & 30\\
	2 & \infty & 30     & 15\\
	3 & 30     & \infty & 30\\
	3 & 30     & \infty & 30 
	\end{array}\right] 
	\end{equation*}
	and we get the second iteration defined by
	\begin{equation*} \mathcal{A}_2 = 
	\begin{bmatrix}
	1\\
	1\\
	2\\
	3\\
	4\\
	4  
	\end{bmatrix}
	\quad
	P_2 = 
	\begin{bmatrix}
	1 & 0 & 0 & 0\\
	1 & 0 & 0 & 0\\
	0 & 1 & 0 & 0\\
	0 & 0 & 1 & 0\\
	0 & 0 & 0 & 1\\
	0 & 0 & 0 & 1
	\end{bmatrix}
	\end{equation*}
	\begin{equation*} 
	\left[\begin{array}{c|c} 
	\mathcal{A}_2 & M P_2
	\end{array}\right]
	= 
	\left[\begin{array}{c|cccc} 
	1 & 15     & 30     & \infty & \infty\\
	1 & 15     & 30     & \infty & \infty\\
	2 & 30     & \infty & 30     & 30\\
	3 & \infty & 30     & \infty & 15\\
	4 & 30     & \infty & \infty & 30\\
	4 & 30     & \infty & \infty & 30
	\end{array}\right] 
	\end{equation*}
	We can now see that $ \mathcal{A}_2 = \mathcal{A}_3 $. This means that we have converged and $ \mathcal{A}_2 = cir(\mathcal{A}_0) = cir(\type{}) = \top $.\\
	This is not the only non-trivial balanced partition on this network. For example, with an initial partition \mbox{$ \mathcal{B}_0 = \{ \{1,2,5\},\{3,4\},\{6\} \}$} we find the other balanced partition $ \mathcal{B}_1 = cir(\mathcal{B}_0) =\{ \{1,2\},\{3\},\{4\},\{5\},\{6\} \}$.\\
	Note that we already knew that any other balanced partitions would have to be finer that $ \top $. Therefore we could have instead just verified if any of the partitions $ \{ \{1\},\{2\},\{3\},\{4\},\{5,6\} \} $ or $ \{ \{1,2\},\{3\},\{4\},\{5\},\{6\} \} $ were balanced.
	\hfill $ \square $
\end{exmp}

\section{Conclusion}
This paper generalizes the theory of coupled cell networks to multi edge and multi edge-type networks with arbitrarily complex edge weights. The formalism that was introduced here is simpler than the usual one based on groupoids of bijections of input sets. Moreover, we do not require the networks to obey such an artificial condition such as the `consistency condition'. We extend previous results about balanced partitions and invariant synchrony patterns to this more general setting.
An implementation of the CIR algorithm is presented which has a worst-case time complexity of \mbox{$ \mathbf{O}(\vert\mathcal{C}\vert^3) $} in opposition to the previous $ \mathbf{O}(\left(\vert\mathcal{E}\vert + \vert\mathcal{C}\vert\right)^4) $ cost.




\bibliographystyle{siamplain}
\bibliography{references}

\begin{thebibliography}{10}

\bibitem{aldis2008polynomial}
{\sc J.~W. Aldis}, {\em A polynomial time algorithm to determine maximal
  balanced equivalence relations}, International Journal of Bifurcation and
  Chaos, 18 (2008), pp.~407--427.

\bibitem{arenas2008synchronization}
{\sc A.~Arenas, A.~D{\'\i}az-Guilera, J.~Kurths, Y.~Moreno, and C.~Zhou}, {\em
  Synchronization in complex networks}, Physics reports, 469 (2008),
  pp.~93--153.

\bibitem{barabasi1999emergence}
{\sc A.-L. Barab{\'a}si and R.~Albert}, {\em Emergence of scaling in random
  networks}, science, 286 (1999), pp.~509--512.

\bibitem{boccaletti2006complex}
{\sc S.~Boccaletti, V.~Latora, Y.~Moreno, M.~Chavez, and D.-U. Hwang}, {\em
  Complex networks: Structure and dynamics}, Physics reports, 424 (2006),
  pp.~175--308.

\bibitem{dorfler2014synchronization}
{\sc F.~D{\"o}rfler and F.~Bullo}, {\em Synchronization in complex networks of
  phase oscillators: A survey}, Automatica, 50 (2014), pp.~1539--1564.

\bibitem{golubitsky2006nonlinear}
{\sc M.~Golubitsky and I.~Stewart}, {\em Nonlinear dynamics of networks: the
  groupoid formalism}, Bulletin of the american mathematical society, 43
  (2006), pp.~305--364.

\bibitem{golubitsky2005patterns}
{\sc M.~Golubitsky, I.~Stewart, and A.~T{\"o}r{\"o}k}, {\em Patterns of
  synchrony in coupled cell networks with multiple arrows}, SIAM Journal on
  Applied Dynamical Systems, 4 (2005), pp.~78--100.

\bibitem{milo2002network}
{\sc R.~Milo, S.~Shen-Orr, S.~Itzkovitz, N.~Kashtan, D.~Chklovskii, and
  U.~Alon}, {\em Network motifs: simple building blocks of complex networks},
  Science, 298 (2002), pp.~824--827.

\bibitem{neuberger2019invariant}
{\sc J.~M. Neuberger, N.~Sieben, and J.~W. Swift}, {\em Invariant synchrony
  subspaces of sets of matrices}, arXiv preprint arXiv:1908.05797,  (2019).

\bibitem{newman2003structure}
{\sc M.~E. Newman}, {\em The structure and function of complex networks}, SIAM
  review, 45 (2003), pp.~167--256.

\bibitem{OEISA000110}
{\sc OEIS}, {\em The on-line encyclopedia of integer sequences}.
\newblock \url{https://oeis.org/A000110}, 2019.
\newblock Accessed: 2019-010-30.

\bibitem{rodrigues2016kuramoto}
{\sc F.~A. Rodrigues, T.~K.~D. Peron, P.~Ji, and J.~Kurths}, {\em The kuramoto
  model in complex networks}, Physics Reports, 610 (2016), pp.~1--98.

\bibitem{stewart2007lattice}
{\sc I.~Stewart}, {\em The lattice of balanced equivalence relations of a
  coupled cell network}, in Mathematical Proceedings of the Cambridge
  Philosophical Society, vol.~143, Cambridge University Press, 2007,
  pp.~165--183.

\bibitem{stewart2003symmetry}
{\sc I.~Stewart, M.~Golubitsky, and M.~Pivato}, {\em Symmetry groupoids and
  patterns of synchrony in coupled cell networks}, SIAM Journal on Applied
  Dynamical Systems, 2 (2003), pp.~609--646.

\bibitem{strogatz1993coupled}
{\sc S.~H. Strogatz and I.~Stewart}, {\em Coupled oscillators and biological
  synchronization}, Scientific American, 269 (1993), pp.~102--109.

\bibitem{watts1998collective}
{\sc D.~J. Watts and S.~H. Strogatz}, {\em Collective dynamics of
  ‘small-world’networks}, nature, 393 (1998), pp.~440--442.

\end{thebibliography}

\end{document}